\newtheorem{theorem}{Theorem}[section]
\newtheorem{corollary}[theorem]{Corollary}
\newtheorem{lemma}[theorem]{Lemma}
\newtheorem{proposition}[theorem]{Proposition}
\theoremstyle{definition}
\theoremstyle{remark}
\newtheorem{remark}[theorem]{Remark}
\newtheorem{example}[theorem]{Example}
\crefname{equation}{equation}{equation} 
 \DeclareMathOperator{\diag}{diag}
\DeclareFontFamily{U}{mathx}{\hyphenchar\font45}
\DeclareFontShape{U}{mathx}{m}{n}{
      <5> <6> <7> <8> <9> <10>
      <10.95> <12> <14.4> <17.28> <20.74> <24.88>
      mathx10
      }{}
\DeclareSymbolFont{mathx}{U}{mathx}{m}{n}
\DeclareMathSymbol{\bigtimes}{1}{mathx}{"91}
\providecommand{\keywords}[1]
{
  \small	
  \emph{\phantom{xx}Keywords:  #1 }
}
\title{\Large{ 
Computation of Robust Option Prices via \\ Structured Multi-Marginal Martingale Optimal Transport\\[10pt]
}}
\author{Linn Engström\footnote{KTH Royal Institute of Technology; linneng@kth.se} \and Sigrid Källblad\footnote{KTH Royal Institute of Technology; sigrid.kallblad@math.kth.se} \and Johan Karlsson\footnote{KTH Royal Institute of Technology; johan.karlsson@math.kth.se} }
\date{}
\begin{document}

\maketitle

\begin{abstract} 

We introduce an efficient computational framework for solving a class of multi-marginal martingale optimal transport problems, which includes many robust pricing problems of large financial interest.
Such problems are typically computationally challenging due to the martingale constraint, however,  by extending the state space we can identify them with problems that exhibit a certain sequential martingale structure. 
Our method exploits such structures in combination with entropic regularisation, enabling fast computation of optimal solutions and allowing us to solve problems with a large number of marginals. 
We demonstrate the method by using it for computing robust price bounds for different options, such as lookback options and Asian options.
\end{abstract}

\keywords{Martingale Optimal Transport, Multi-Marginal Optimal Transport, Entropic \\ \phantom{xxxxxl}Regularisation,  Robust Finance, Numerical Methods}

\section{Introduction} \label{sec:intro}

A fundamental problem in mathematical finance is to find fair prices of financial claims. 
The main goal of this article is to compute robust and model independent bounds on prices of exotic options
when there is uncertainty about the true underlying market model. 
More specifically, we develop computationally efficient methods for obtaining such bounds when restricting to market models that are consistent with given market data and respect fundamental market modelling principles. 
For market data consisting of prices of liquidly traded call options for multiple maturities, the problem can be formulated as a multi-marginal martingale optimal transport (MOT) problem, a problem which has been extensively studied over the last decade. 
The approach taken here leverages recent computational results for solving multi-marginal optimal transport problems using entropic regularisation. 
More pertinently, we will combine entropic regularisation with methods utilising specific structures in the payoff function so as to obtain an efficient method for MOT problems in the presence of multiple marginal constraints.

The classical approach to pricing of financial derivatives is to start by postulating a market model. A model effectively consists of a filtered probability space, say $(\Omega, \mathcal{F},\mathbb{F},\mathbb{P})$, supporting a stochastic process $S$ modelling the underlying future stock prices. We here restrict to discrete time and suppose that the risk-free interest rate is zero.
Such a model is then a valid model, in the sense that it does not allow for arbitrage-opportunities, if there exists a so-called risk-neutral measure $\mathbb{Q}$ 
rendering the price process $S$ a martingale. 
For an exotic option ensuring the holder a payoff given by $\phi(S_0, \dots, S_T)$, for some function $\phi:\mathbb{R}^{T+1}\to\mathbb{R}$, it turns out that any fair price thereof --- that is, any price which does not introduce arbitrage opportunities to the market --- must be given by 
\begin{equation}\label{eq:intro_fair_price}
    \mathbb{E}_{\mathbb{Q}} \left[\phi(S_0, \dots, S_T) \right]
\end{equation}
for such a risk-neutral pricing measure $\mathbb{Q}$. 
Fair prices are thus highly dependent on the original choice of market model, as well as on the choice of pricing measure. 
In reality, however, neither of the two is known. The acknowledgment of this fact has led to an intense stream of research aiming at quantifying the impact of the modelling assumptions. 

A natural question to ask is whether it is possible to obtain robust bounds on fair prices. Naïvely, such bounds could be obtained by optimising \eqref{eq:intro_fair_price} over all possible market models and pricing measures; that is, over all probability spaces $(\Omega,\mathcal{F},\mathbb{Q})$ supporting a stochastic process $S$ which is a martingale in its own filtration under $\mathbb{Q}$. Such a procedure leads however to bounds which are too wide to be of any practical use.
Consensus has therefore been reached on also acknowledging that certain financial products are traded to such a great extent that their market prices can be viewed as `correct' and should contain information about the market participants' beliefs about the future. Put differently, the prices of liquidly traded options carry information about the pricing measure effectively used by the market. It is thus sensible to further restrict to market models and pricing measures for which the computed prices of such liquidly traded options agree with actual market prices. 
The most common choice in the literature is to strive for consistency with market prices of European call options, the perhaps most commonly traded option. Breeden and Litzenberger \cite{Breeden-Litzenberger1978} observed that if one has access to call prices for a continuum of strikes, at given maturities $\mathcal{T}\subseteq\{0,1,\dots,T\}$, one may recover the marginal distributions of the underlying price process at these dates. The thus obtained family of marginal distributions, say $\{\mu_t \}_{t \in \mathcal{T}}$, may therefore be considered to be known and fixed; it reflects what we know about the market-implied pricing measure.
The problem of solving for the associated price bound, here formulated in terms of a lower bound, takes the following form:
\begin{subequations} \label{eqn:intro_mot_general}
\begin{align}
    \underset{(\Omega, \mathcal{F}, \mathbb{Q}, S)}{\inf}  \;\quad &\mathbb{E}_{\mathbb{Q}} [\phi(S_0, \dots, S_T)] \label{eqn:intro_mot_general_objfcn} \\
    \text{subject to} \quad & \mathbb{E}_{\mathbb{Q}}[S_t | \sigma( S_0,\dots,S_{t-1}) ] = S_{t-1}, &&t \in \{1, \dots, T \} \label{eqn:intro_mot_general_mtgconstr} \\
    & S_t \sim_{\mathbb{Q}} \mu_t, &&t \in \mathcal{T} . \label{eqn:intro_mot_general_margconstr}
\end{align}
\end{subequations}
Note that if $\{\mu_t\}_{t\in\mathcal{T}}$ are obtained from consistent call prices, they are marginal distributions of a martingale and are thus in convex order. By Strassen \cite{Strassen1965}, it turns out that the latter is both a necessary and sufficient condition for problem \eqref{eqn:intro_mot_general} to be well posed.
The robust pricing problem was first introduced by Hobson \cite{Hobson1998}. Subsequently a rigorous duality theory has been established motivating the problem also from a pricing-hedging perspective; see Cheridito, Kupper and Tangpi \cite{CheriditoKupperTangpi2017} and Dolinsky and Soner \cite{DolinskySoner2014}.

About a decade ago, the insightful observation was made that problem \eqref{eqn:intro_mot_general} can be viewed as an optimal transport problem with an additional martingale constraint --- an MOT problem; see Beiglböck, Henry-Labordère and Penkner \cite{Beiglbock-HL-Penkner2013} and Galichon, Henry-Labordère and Touzi \cite{Galichon-HL-Touzi2013}. 
The optimal transport (OT) problem --- that is, problem \eqref{eqn:intro_mot_general} without condition \eqref{eqn:intro_mot_general_mtgconstr} --- is a classical problem that has received renewed attention in recent years; we refer to Villani \cite{Villani2003} for an overview. It was originally formulated for two marginals, and then amounts to finding a transport plan moving the mass of one distribution to another while minimising the associated cost, but has since been extended to transportation problems over multiple marginals; see, e.g., Gangbo and Swi\k{e}ch \cite{GangboSwiech1998}, Rüschendorf \cite{Rüschendorf1996} and Pass \cite{Pass2015}.
Equipped with the additional martingale constraint, the MOT problem turned out to be an intriguing problem which triggered intense research and led to many new developments; for multi-marginal results, we refer to Nutz, Stebegg and Tan \cite{NutzStebeggTan2020} and Sester \cite{Sester2023}. Prior to the contributions of \cite{Beiglbock-HL-Penkner2013} and \cite{Galichon-HL-Touzi2013}, the robust pricing problem was typically addressed via the so-called Skorokhod embedding problem (SEP). This approach was initiated by \cite{Hobson1998}, who considered the problem with one marginal constraint and a claim depending on the past maximum of a \emph{continuous} price process, and linked it to the well-known Azéma--Yor solution of the SEP. 
The connection to the SEP remained for long the predominant approach for addressing (continuous) robust pricing problems and extensive research in this area led to new contributions in terms of both robust price bounds and solutions of the SEP; for multi-marginal results in this direction, we refer, e.g., to 
Beiglb{\"o}ck, Cox and Huesmann \cite{BeiglboeckCoxHuesmann2020}, 
Cox, Obłój and Touzi \cite{CoxOblojTouzi2019}, and
Henry-Labordère et al. \cite{Henry-LabordèreOblojSpoidaTouzi2016}, see also \cite{KällbladTanTouzi2017}.

Recently, there has been a rapid development of computational theory and algorithms for OT problems.
The OT problem can be formulated as a linear programming problem when restricting the marginals to have support on a pre-specified finite grid.
However, the number of variables is often large, making the problem computationally intractable for many problems of practical interest. In particular, this is the case for multi-marginal problems, since the number of variables grows exponentially in the number of marginals. 
One of the dominant approaches for handling this has been to utilise entropic regularisation, inspired by Cuturi \cite{Cuturi2013} and Benamou et al. \cite{BenamouCarlierCuturiNennaPeyre2015}. 
The regularised problem can then be addressed by use of duality theory. In particular, the dual problem contains significantly fewer variables than the primal problem, since the number of dual variables is linear in the number of marginals. Further, the dual problem has a certain decomposable structure, which allows for obtaining explicit expressions characterising the optimal dual variables.
These expressions allow for deriving efficient update formulas for blocks of dual variables, which makes coordinate dual ascent --- that is, to cyclically optimise over subsets of the dual variables --- a particularly suitable method for solving the regularised problem.  
In the context of OT, this method is often referred to as Sinkhorn's algorithm for the bi-marginal problem. It was originally utilised for matrix balancing in Sinkhorn \cite{Sinkhorn1967}, but has since been reinvented several times. It can be derived, for example, using Bregman projections or Dykstra’s algorithm; see \cite{BenamouCarlierCuturiNennaPeyre2015} and Peyr{\'e} and Cuturi \cite{PeyreCuturi2019} for further details.
It can be shown that that the dual iterates generated by this method converge linearly, see Luo and Tseng \cite{LuoTsengQ1992} (cf. Franklin \cite{franklin1989} for the bi-marginal case), and for bi-marginal OT problems, entropic regularisation thus yields an efficient method for solving the problem. 
For multi-marginal problems, however, the computational complexity also increases exponentially as a function of the number of marginals, see Lin et al. \cite{LinHoCuturiJordan2022}, and entropic regularisation alone is thus not enough for efficient computation of the problem.
Fortunately, for many multi-marginal problems the cost function has graph-structures that can be exploited in combination with entropic regularisation. 
These are referred to as \emph{graph-structured multi-marginal optimal transport problems} and have been studied, e.g., in 
\cite{BenamouCarlierCuturiNennaPeyre2015, ElvanderHaaslerJakobssonKarlsson2020,HaaslerKarlssonRingh2021,HaaslerRinghChenKarlsson2021,HaaslerSinghZhangKarlssonChen2021,MascherpaHaaslerAhlgrenKarlsson2023}; see also
Altschuler and Boix-Adserà \cite{AltschulerBoix-Adsera2023} and Beier et al. \cite{BeiervonLindheimNeumayerSteidl2023}. In order to carry out the coordinate dual ascent method, one effectively needs to compute certain matrix projections; the key point utilised in those papers is that by suitably exploiting the additional structure of the cost function, said projections can be computed rapidly, thus rendering an efficient algorithm also for very large numbers of marginal constraints.

For MOT problems, entropic regularisation was first introduced by de March \cite{deMarch2018}, who derived a method based on Sinkhorn's algorithm and Newton's method for solving bi-marginal problems. As for early numerical results on robust pricing problems, we also mention Henry-Labordere \cite{HenryLabordere2013}, Davis, Ob{\l}{\'o}j and Raval \cite{DavisOblojRaval2014} and Tan and Touzi \cite{TanTouzi2013}.
Computationally solving multi-marginal MOT problems has since been addressed with neural network approaches in Eckstein and Kupper \cite{EcksteinKupper2021a} and Eckstein et al. \cite{EcksteinGuoLimObloj2021}, and by linear programming approaches in \cite{EcksteinGuoLimObloj2021} and Guo and Ob{\l}{\'o}j \cite{GuoObloj2019}, where the latter relies on a relaxation of the martingale constraint. 
However, due to the curse of dimensionality, it is computationally challenging to utilise these methods for problems with a large number of marginals. Markovian assumptions in a multi-marginal MOT context were introduced in Henry-Labordère \cite[Remark 2.7]{HenryLabordere2017} and further discussed in Sester \cite{Sester2020}, who considered a problem subject to an explicit Markov constraint. Such problems are not convex, which motivated Eckstein and Kupper \cite{Ecksteinkupper2021b} to consider multi-marginal MOT problems under a time-homogeneity assumption. Other related works include Alfonsi et al. \cite{AlfonsiCoyaudEhrlacherLombardi2021} and Lindheim and Steidl \cite{LindheimSteidl2023}, who have developed methods for OT problems featuring moment and affine constraints, respectively.

To the best of our knowledge, there are no previous methods that allows for computing multi-marginal MOT problems with a \emph{large} number of marginals. This article aims to fill this gap, and a main contribution is a method that can efficiently compute solutions to such problems. We combine entropic regularisation with the exploitation of structures inherent in the problem, in order to computationally solve the multi-marginal MOT problem \eqref{eqn:intro_mot_general}. 
In line with the above-mentioned literature on graph-structured multi-marginal OT problems, the idea is to exploit specific structures in the cost function. However, the presence of the martingale constraint introduces dependencies between the marginals and as a result, a multi-marginal MOT problem is not sparse --- this is a fundamental difficulty when it comes to addressing such problems computationally.
To get around this problem we first restrict ourselves to financial derivatives whose payoff is of the form
\begin{equation}  \label{eqn:intro_intro_phi}
    \phi(S_0, \dots, S_T) = \sum_{t=1}^T \phi_{t}(S_{t-1},X_{t-1},S_{t}, X_t),
\end{equation}
where $X_0= h_0(S_0)$ and $X_t=h_t(S_{t},S_{t-1},X_{t-1})$, for some functions $h_0: \mathbb{R}\to\mathbb{R}$, $h_t:\mathbb{R}^3\to\mathbb{R}$ and $\phi_t:\mathbb{R}^4\to\mathbb{R}$, $t \in \{1,\dots,T \}$.
We then utilise this structure in a twofold manner. First, leveraging on the fact that for a given stochastic process, one can construct a Markov process with the same pair-wise joint marginal distributions onto adjacent marginals, for payoffs of the form \eqref{eqn:intro_intro_phi}, we manage to relax condition \eqref{eqn:intro_mot_general_mtgconstr} to $\mathbb{E}_{\mathbb{Q}}[S_t |\sigma(S_{t-1},X_{t-1})] = S_{t-1}$, $t\in\{1,\dots,T\}$. Note that this is done without introducing any explicit Markovian assumption on the price process or any assumption regarding the convexity of the payoff function. In effect, we are facing a problem where each constraint is only connecting adjacent marginals and we may thus employ methods from the theory on graph-structured multi-marginal OT problems to find an efficient algorithm for solving problem \eqref{eqn:intro_mot_general}. We mention that we, in one of our numerical examples (see \cref{sec:sol_lateearly}), address an MOT problem with more than 50 marginals, where each marginal is supported on a large number of points. To the best of our knowledge, this is significantly larger than what has been reported for other methods. We illustrate the usefulness of our method by applying it on a number of examples: First, we apply our algorithm to problems for which the theoretical solution is known; we use it to recover the left-monotone coupling, late and early transport behavior, as well as optimal models for digital options and lookback options. We then use it to compute the robust price bound for an Asian option, subject to several marginal constraints, where as far as we know, no theoretical solution is known.

The rest of the article is organised as follows: In \cref{sec:gb}, we characterise our class of multi-marginal MOT problems in terms of properties inherent in the payoff function; we also provide examples of derivatives of practical financial interest belonging to this class, and provide some background on structured OT. \Cref{sec:equiv} is devoted to the reformulation of the problem onto a form that allows for efficient computations: We show that for our class of payoff functions, the MOT problem is equivalent to an OT problem where the complexity of the martingale constraint has been reduced to involve only adjacent marginals; the latter problem is then discretised and formulated on tensor form. In \cref{sec:algo}, relying on the structure inherent in our reformulated problem, we derive efficient formulas for the optimal dual variables corresponding to its entropy regularised version. In \cref{sec:sol}, we apply our computational framework in a number of different examples and compare our results with existing analytical solutions.

\subsection*{Notation and conventions}

We let $\odot$ and $\oslash$ denote the elementwise product and division of two tensors (or vectors/matrices), while $\exp(\cdot)$ denotes elementwise application of the exponential function. The Kronecker product is indicated by $\otimes$. The notation $\diag(v)$ represents a diagonal matrix with the vector $v$ on the main diagonal, while $1_n$ and $0_n$ are the vectors of length $n \in \mathbb{N}$ consisting of ones and zeros, respectively, and $\mathbf{1}_{n^m}$ for $m \in \mathbb{N}$ denotes the tensor in $\mathbb{R}^{n^m}$ where all elements are equal to one. The one-vector or the one-tensor should not be mixed up with the indicator function  $\chi_A : \mathbb{R}^n \rightarrow \mathbb{R}$  associated with some set $A \subset \mathbb{R}^n$, 
\begin{align*}
    \chi_A(x) :=
    \begin{cases}
        1, & x \in A \\
        0, & x \notin A.
    \end{cases}
\end{align*}
We let $[m]:=\{0, \dots, m\}$ for any $m \in \mathbb{N}$ and write $[m] \backslash k$ for short for $[m] \backslash \{k \}$ for $k \in [m]$. 
The inner product $\langle \cdot, \cdot \rangle$ refers to the Frobenius inner product, that is, 
\begin{equation*}
    \langle \mathbf{A}, \mathbf{B} \rangle
    = \sum_{i_0, \dots, i_n  \in [m]} \mathbf{A}(i_0, \dots, i_n) \mathbf{B}(i_0, \dots , i_n), \quad \mathbf{A}, \mathbf{B} \in \mathbb{R}^{n^m},
\end{equation*}
and we let $P_j: \mathbb{R}^{n^m} \rightarrow \mathbb{R}^n$ denote the projection operator that projects a tensor onto its $j$:th dimension, defined by 
\begin{equation*}
    P_j(\mathbf{A})(i_j) := \sum_{i_{\ell}: \ell\in [m]\backslash j } \mathbf{A}(i_0, \dots i_m), \quad i_j = 1, \dots , n, \quad\mbox{ where } \mathbf{A} \in \mathbb{R}^{n^m}.
\end{equation*}
Similarly, we let $P_{j,k}: \mathbb{R}^{n^m} \rightarrow \mathbb{R}^{n^2}$ be the projection operator that projects the tensor jointly onto the two dimensions $j$ and $k$ (see \cite[p. 7]{ElvanderHaaslerJakobssonKarlsson2020}).

Every real-valued function is assumed to be Borel measurable and we let the regular conditional distribution be equal to some real-valued constant if conditioning on a null event. Zero is assumed to be included in the set of positive real numbers and we use the convention of $0 \cdot \infty = 0$.

\section{Problem formulation and background} \label{sec:gb}
In this section we describe our main problem of interest, along with some examples demonstrating its relevance. We then provide a brief introduction to optimal transport with emphasis on the computational aspects of the problem; in particular, we review how a certain type of structure inherent in some transportation problems enables fast computation of optimal solutions.

\subsection{Problem formulation for structured payoff functions} \label{sec:intro_problemformulation}
We will now specify our class of problems, which consists of MOT problems where the payoff function is of a certain form.  
Let a probability space $(\Omega, \mathcal{F}, \mathbb{Q})$ and a price process $S : \Omega \times [T] \rightarrow \mathbb{R}$ be given such that $S$ is a martingale under $\mathbb{Q}$. Then define a second stochastic process $X : \Omega \times [T] \rightarrow \mathbb{R}$ by
\begin{equation} \label{eqn:intro_x}
    \begin{aligned}
    X_t =
    \begin{cases}
        h_0(S_0), & t = 0\\
        h_t(S_{t}, S_{t-1}, X_{t-1}), & t \in [T]\backslash 0
    \end{cases}
    \end{aligned}
\end{equation}
where $h_0 : \mathbb{R} \rightarrow \mathbb{R}$ and $h_t :\mathbb{R}^3 \rightarrow \mathbb{R}$ for $t \in [T] \backslash 0 $. By defining $X$ in this way we are able to isolate important features of the path of $S$, thus enabling the pricing of many exotic options. We are here interested in pricing path-dependent derivatives whose payoff function $\phi: \mathbb{R}^{T+1} \rightarrow \mathbb{R}$ is finite and can be pairwise decoupled, that is, 
\begin{equation}  \label{eqn:intro_phi}
    \phi(S_0, \dots, S_T) = \sum_{t \in [T] \backslash 0 } \phi_{t}(S_{t-1}, X_{t-1}, S_t, X_t),
\end{equation}
where $\phi_t : \mathbb{R}^4 \rightarrow \mathbb{R}$ for $t \in [ T ] \backslash 0$ and $X$ is obtained from (\ref{eqn:intro_x}) via a given family of functions $h_t$, $t \in [T]$. For payoff functions of this form the MOT problem (\ref{eqn:intro_mot_general}) becomes
\begin{subequations} \label{eqn:intro_mot}
\begin{align}
    \underset{(\Omega, \mathcal{F}, \mathbb{Q}, S)}{\inf}  &\mathbb{E}_{\mathbb{Q}} \bigg[ \sum_{t \in [T] \backslash 0} \phi_{t}(S_{t-1}, X_{t-1}, S_t, X_t) \bigg] \label{eqn:intro_mot_objfcn} \\
    \text{subject to} \quad & \mathbb{E}_{\mathbb{Q}}[S_t | \sigma( S_0, \dots S_{t-1})] = S_{t-1}, &&t \in [T] \backslash 0 \label{eqn:intro_mot_mtgconstr} \\
    & S_t \sim_{\mathbb{Q}} \mu_t, &&t \in \mathcal{T}. \label{eqn:intro_mot_margconstr}
\end{align}
\end{subequations}
Here we assume that the given marginals $\mu_t$, for $t\in\mathcal{T}\subseteq\{0,1,\dots,T\}$, are in convex order. Such problems are well posed, since it follows by Strassen's theorem that their feasible sets are nonempty \cite{Strassen1965}. We remark that problems on the above given form are convex, albeit not strictly convex, in $\mathbb{Q}$; however, showing the existence of optimal solutions requires further assumptions on the payoff function which we do not want to introduce here (see, for example, \cite[Theorem 1.1]{Beiglbock-HL-Penkner2013}).

The contribution of this work is the development of a computational method that allows for approximately solving this class of multi-marginal MOT problems for $T$ large. As we will see, many problems of practical financial interest belong to this class.

\begin{remark}
    We here limit ourselves to the case when the auxiliary process $X$ takes values on $\mathbb{R}$, but it is our belief that our framework could be extended to the case when the process $X$ takes values on $\mathbb{R}^d$ for $d > 1$. Note that this would allow us to take more information about the path of the price process $S$ into account, but only at the cost of the size of the problem increasing --- we emphasise that our computational framework is useful only when $d$ is significantly smaller than $T$.
\end{remark}

\subsection{Financial examples} \label{sec:examples}

Below follows a few examples of financial derivatives whose payoff functions satisfy the assumptions of our framework; the first four of them will be revisited later in \cref{sec:sol}, while the remaining ones illustrate the versatility of our class of problems.

\begin{example}[Lookback and barrier options] \label{ex:examples_max}
Consider the case when $X$ is the maximum process of the price process $S$, that is, when $X_t = \max_{r \in [t]} S_r $ for $t \in [T]$; since it can be written
\begin{equation*} \label{eqn:intro_examples_max}
\begin{aligned}
    X_t = 
    \begin{cases}
        S_0, & t = 0\\
        \max\{S_t, X_{t-1}\}, & t \in [T] \backslash 0
    \end{cases}
\end{aligned}
\end{equation*}
the maximum process is of the form (\ref{eqn:intro_x}) (the minimum process can be considered analogously). Related derivatives are single asset, single period maximum lookback options such as the \emph{floating strike lookback put}, whose payoff function is $\phi(S_T, X_T) = \left( X_T - S_T \right)^+$ (see, for example, \cite[pp. 623--525]{Hull}). Certain barrier options can also be priced by considering the maximum process, for example the \emph{up-and-in barrier call} with barrier $b$ and strike $K$, which at maturity pays out $\phi(S_T, X_T) = \chi_{[b, \infty)}(X_T ) (S_T - K)^+$ (see, for example, \cite[pp. 143--150]{Buchen}). In \cref{sec:sol} we will solve an upper bound MOT problem where the payoff is equal to the terminal value of the maximum. 
\end{example}

\begin{example}[Asian options] \label{ex:examples_mean}
Let the process $X$ be the arithmetic mean of the price process $S$, that is, let $X_t = (t+1)^{-1} \sum_{j \in [t]} S_j$ for $t \in [T]$. Since an equivalent way of defining $X$ is
\begin{equation*} \label{eqn:intro_examples_mean}
\begin{aligned}
    X_t = 
    \begin{cases}
        S_0, & t = 0 \\
        \frac{1}{t+1} S_t + \left(1 - \frac{1}{t+1} \right) X_{t-1}, & t \in [T] \backslash 0,
    \end{cases}
\end{aligned}
\end{equation*}
it is of the form (\ref{eqn:intro_x}). Derivatives whose payoff is a function of the arithmetic mean are Asian options with arithmetic averaging, such as the \emph{average price call} and the \emph{average strike call}. Their payoffs are $\phi(X_T) = (X_T - K)^+ $ and $\phi(S_T, X_T) = (S_T - X_T)^+ $, respectively, where $K >0$ denotes a fixed strike price (see, for example,  \cite[pp. 626--627]{Hull}). In \cref{sec:sol} we will provide an example where we compute the robust fair price of an \emph{Asian straddle}, given two, three and four known marginals. 
\end{example}

\begin{example}[Barrier options] \label{ex:examples_ind}
    Let $A_0, \dots, A_T \subseteq \mathbb{R}$ be given subsets and define a stochastic process via $X_t = \chi_{A_0 \times \dots \times A_t}(S_0, \dots, S_t)$ for $t \in [T]$; this $X$ can be written  
    \begin{equation*} \label{eqn:intro_examples_ind}
    \begin{aligned}
        X_t = 
        \begin{cases}
            \chi_{A_0}(S_0), & t = 0 \\
            \chi_{A_t}(S_t) X_{t-1}, & t \in [T] \backslash 0.
        \end{cases}
    \end{aligned}
    \end{equation*}
    Its terminal value, $X_T$, indicates whether the price $S$ has passed through all the sets $A_0, \dots, A_T$ during the duration of the contract. By setting all but two of the sets $A_0, \dots, A_T$ equal to $\mathbb{R}$, it is possible to consider \emph{second-order binary options}. Indeed, let $T_0 \in [T-1]\backslash 0$ and define $A_{T_0} := \{ x \in \mathbb{R} : x \ge b_0 \}$ and $A_{T} := \{ x \in \mathbb{R} : x \ge b \}$ for some barrier levels $b_0$ and $b$.  Let $A_t = \mathbb{R}$ for $t \in [T] \backslash \{T_0, T\}$. Then the payoff of an up-up type second-order asset binary option can be written $\phi(S_{T}, X_{T}) = S_{T} X_{T}$ (see, for example, \cite[pp. 109--110]{Buchen}). A \emph{double knockout option}, on the other hand, expires worthless if the price of the underlying crosses any of the two barriers $b_l < S_0 < b_u$ (see, for example, \cite[p. 274]{Panjer}). Defining $A := \{ x \in \mathbb{R}: b_l < x < b_u \}$ and setting $A_t = A$ for all $t \in  [T]$ allows us to also  consider such options. In \cref{sec:sol} we will compute the robust price of a \emph{digital option}.
\end{example}

\begin{example}[Variance swaps] \label{ex:examples_var}
    Consider the case when the process $X$ is the realised variance of the price process $S$, that is, when $X_t = t^{-1} \sum_{j=1}^{t} ( \log (  S_{j}/S_{j-1} ) )^2 $ for $t \in [T] \backslash 0$. Set $X_0 := 0 $; then $X$ can be written as
    \begin{equation*}  \label{eqn:intro_examples_var}
    \begin{aligned}
        X_t 
        =
        \begin{cases}
            0, & t= 0 \\
             \big( 1 - \frac{1}{t} \big) X_{t-1} + \frac{1}{t}  \big( \log \big(  \frac{S_t}{S_{t-1}}\big) \big)^2, & t \in [T] \backslash 0,
        \end{cases}
    \end{aligned}
    \end{equation*}
    which shows that the realised variance is indeed of the form (\ref{eqn:intro_x}).
    The payoff of a \emph{variance swap} is then given by $\phi(X_T) = X_T - \sigma_{\text{fixed}}^2$, where $\sigma_{\text{fixed}}^2$ denotes the fixed variance (see, for example, \cite[p. 629]{Hull}). Needless to say, the realised volatility of the price $S$ can easily be obtained as $\sqrt{X_t}$ at any time point $t \in [T]$. We note that the optimal solution of the corresponding MOT problem is known in the case when $T = 1$; it is the left-monotone transport plan \cite{Beiglbock-Juillet2016, HL-T2016}, a solution that will be revisited in \cref{sec:sol}. 
\end{example}

\begin{example}[Parisian options] \label{ex:examples_count}
In a similar fashion to \cref{ex:examples_ind}, the payoff can be defined in terms of the duration during which the process is in a given interval; let $A \subset \mathbb{R}$ and set $X_t = \sum_{j \in [t]} \chi_A (S_t)$ for $t \in [T]$. This process can also be written
\begin{equation*}
\begin{aligned}
    X_t = 
    \begin{cases}
        \chi_A (S_0), & t = 0\\
         \chi_A(S_t) + X_{t-1}, & t \in [T] \backslash 0
    \end{cases}
\end{aligned}
\end{equation*}
and its terminal value, $X_T$, corresponds to the number of, say, days that the price $S$ has spent within $A$ during the duration of the contract, a property that allows for pricing \emph{Parisian options}. In order for such an option --- written with a $D \in \mathbb{N}$ day window --- to activate, the price of the underlying must spend at least $D$ days within $A$ (see, for example, \cite[p. 622]{Hull}). As an example, an up-and-in Parisian put with a $D$ day window, a strike price $K$ and a barrier $b$ has a payoff that can be written $\phi(S_T, X_T ) = (K - S_T)^+ \chi_{ [D, \infty) }(X_T)$,  with $X_T$ defined as above for $A = \{ x \in \mathbb{R} : x \ge b\}$.
\end{example}

\begin{example}[Dual expiry options] \label{ex:examples_memory}
Consider a setting with two times of expiry, $T$ and $T_0 \in [T-1]\backslash 0$, and claims whose payoff depend on the value of the underlying at those two specific times only. By defining the process $X$ as
\begin{equation*}
X_t = 
\begin{cases}
     0,  & t  \in [T_0 -1] \\
    S_{T_0},  & t \in [T ] \backslash  [T_0 -1]
\end{cases}
\end{equation*}
we are able to `remember' the value of the underlying price process $S$ at time $T_0$ throughout the duration of the contract; an equivalent definition is
\begin{equation*} \label{eqn:intro_examples_dualexp}
\begin{aligned}
    X_t 
    =
    \begin{cases}
        0, & t = 0 \\
        X_{t-1} \chi_{ [T] \backslash T_0 }(t) + S_t \chi_{ \{T_0 \}}(t), & t \in [T] \backslash 0 ,
    \end{cases}
\end{aligned}
\end{equation*}
which is of the form (\ref{eqn:intro_x}). This choice of $X$ allows for pricing dual expiry options --- the payoff of a \emph{forward start call} is given by $\phi(S_{T}, X_{T}) = (S_{T} - X_{T})^+$, while a \emph{ratchet call option} with strike price $K$ has a payoff $\phi(S_{T}, X_{T}) = \max \left \{ (S_{T} - K)^+ , (X_{T} - K)^+   \right \} $ (see, for example,  \cite[pp. 107, 119]{Buchen}).
\end{example}

\begin{example}[Cliquet options] \label{ex:examples_caps}
Consider the sum of the truncated relative returns; let $C_l > 0$ be the local cap and let $X_t = \sum_{j=1}^t \max  \{ \min \{ (S_j - S_{j-1})/S_{j-1}, C_l \}, 0 \}$ for $t \in [T] \backslash 0$. Set $X_0 = 0$. Then $X_t$ can be written on the form from \cref{eqn:intro_x}; indeed,
\begin{equation*}
\begin{aligned} \label{eqn:intro_examples_cliq}
    X_t 
    =
    \begin{cases}
        0, & t = 0 \\
        \max \left \{ \min \left \{ \frac{S_t - S_{t-1}}{S_{t-1}}, C_l \right \}, 0  \right \} + X_{t-1}, &  t  \in [T] \backslash 0.
    \end{cases}
\end{aligned}
\end{equation*}
The payoff function of a globally floored, locally capped \emph{cliquet option} is then given by $\phi(X_T) = \max \{X_T, F_g \}$, where $F_g$ is the global floor (see, for example, \cite[p. 379]{Wilmott2005}). 
\end{example}

\subsection{Structured multi-marginal optimal transport} \label{sec:bg_ot}
The OT problem was introduced in \cref{sec:intro} as the problem of minimising the total cost for transporting the mass of one distribution to another one. The transport can occur in one or several steps --- we refer to the former as bi-marginal transport, while the latter is multi-marginal transport. In order to formalise this, consider $T+1$ probability spaces $(\mathcal X_0, \mathcal{F}_{\mathcal X_0},  \mu_0), \dots, (\mathcal X_T, \mathcal{F}_{\mathcal X_T},  \mu_T)$ and let $\mathcal{M}(\mathcal X_j)$ denote the set of probability measures on $\mathcal X_t$ for $t \in [T]$. Similarly, let $\mathcal{M}(\mathcal X)$ denote the set of probability measures on the product space $\mathcal X := \mathcal X_0 \times \dots \times \mathcal X_T$ equipped with the product $\sigma$-algebra. Let the projection operators $\mathcal{P}_t : \mathcal{M}( \mathcal X) \rightarrow \mathcal M(\mathcal X_t)$ be defined as $\mathcal{P}_t (\pi) := \int_{\mathcal X_0} \dots \int_{\mathcal{X}_{t-1}} \int_{\mathcal{X}_{t+1}} \dots \int_{\mathcal{X}_{T}} \mathrm{d} \pi$, for $\pi \in \mathcal M(\mathcal X)$ and $t \in [T]$. A measure $\pi \in \mathcal M(\mathcal X)$ is said to be a coupling of $(\mu_0, \dots, \mu_T)$ if it satisfies $\mathcal{P}_t(\pi) = \mu_t$ for $t \in [T]$; in the context of optimal transport, a coupling is often referred to as a transport plan. Given a cost function $c : \mathcal X \rightarrow \mathbb{R}$, where $c(x_0,\dots, x_T)$ is the cost associated with the path $(x_0,\dots, x_T) \in \mathcal X$, the optimal transport problem is the problem of finding a coupling that minimises the total associated cost, i.e. that attains
\begin{subequations} \label{eqn:bg_multi_ot}
\begin{align}
    \underset{\pi \in \mathcal{M}(\mathcal{X})}{\inf}  \quad &\int_{\mathcal X_0 \times \dots \times \mathcal X_T} c(x_0,\dots, x_T) \mathrm{d} \pi(x_0, \dots, x_T) \label{eqn:bg_multi_ot_objfcn}\\
    \text{subject to} \quad &  \mathcal{P}_t (\pi) = \mu_t , \quad t \in [T]. \label{eqn:bg_multi_ot_margconstr} 
\end{align} 
\end{subequations}
A coupling of $(\mu_0, \dots, \mu_T)$ that attains the minimum in problem (\ref{eqn:bg_multi_ot}) is said to be an optimal coupling or an optimal transport plan. 
Problem (\ref{eqn:bg_multi_ot}) can be generalised by relaxing the constraint (\ref{eqn:bg_multi_ot_margconstr}) to only hold for a subset $\Gamma \subseteq [T]$ of the marginals or by introducing unbalanced problems \cite{BeiervonLindheimNeumayerSteidl2023, ChizatPeyreSchmitzerVialard2018,ElvanderHaaslerJakobssonKarlsson2020} (see also \cite{georgiou2008metrics,karlsson2017generalized, liero2018optimal, piccoli2014generalized} for bi-marginal problems).

In order to solve problems on the form (\ref{eqn:bg_multi_ot}) computationally, we assume that each of the sets $\mathcal X_0, \dots, \mathcal X_T$ is finite, that is, $\cup_{t \in [T]} \mathcal X_t$ is contained within $n \in \mathbb{N}$ points. Then the marginals can be represented by vectors $m_t \in \mathbb{R}^{n}$, $t \in [T]$, and the cost function and the transport plan by tensors $\mathbf{C} \in \mathbb{R}^{n^{T+1}}$ and $\mathbf{Q} \in \mathbb{R}_+^{n^{T+1}}$, respectively. The discrete analogue to problem (\ref{eqn:bg_multi_ot}) is then 
\begin{subequations} \label{eqn:bg_discrete_ot}
\begin{align}
    \underset{\mathbf{Q} \in \mathbb{R}_+^{n^{T+1}}}{\min}  \quad &\langle \mathbf{C}, \mathbf{Q} \rangle \label{eqn:bg_discrete_ot_objfcn} \\
    \text{subject to} \quad &  P_t (\mathbf{Q}) = m_t , \quad t \in [T]. \label{eqn:bg_discrete_ot_margconstr}
\end{align} 
\end{subequations}
Problem (\ref{eqn:bg_discrete_ot}) is a linear programming problem in $n^{T+1}$ variables. Since the number of variables grows exponentially as a function of the number of marginals, the problem quickly becomes intractable for standard methods --- even problems with two marginals ($T=1$) are not straightforward to solve when the number of points $n$ is large. One way to alleviate this difficulty is to regularise the problem with an entropy term, as proposed in \cite{BenamouCarlierCuturiNennaPeyre2015, Cuturi2013, PeyreCuturi2019}; we will see that this leads to
a dual problem in $n(T+1)$ variables and explicit formulas for the optimal choice thereof.

The entropy regularised problem is given by
\begin{subequations} \label{eqn:bg_discrete_ot_reg}
\begin{align}
    \underset{\mathbf{Q} \in \mathbb{R}_+^{n^{T+1}}}{\min}  \quad &\langle \mathbf{C}, \mathbf{Q} \rangle+\varepsilon D(\mathbf{Q}) \label{eqn:bg_discrete_ot_objfcn_reg} \\
    \text{subject to} \quad &  P_t (\mathbf{Q}) = m_t , \quad t \in [T] ,\label{eqn:bg_discrete_ot_margconstr_reg}
\end{align} 
\end{subequations}
where $\varepsilon > 0$ is a regularisation parameter and
\begin{align*}
    D(\mathbf{Q}) := \sum_{i_0, \dots, i_T \in [n-1]}   \mathbf{Q}(i_0, \dots, i_T) \log \mathbf{Q} (i_0, \dots, i_T) - \mathbf{Q}(i_0, \dots, i_T)  
\end{align*}
is an entropy term. It is solved by utilising duality theory. The Lagrangian function corresponding to problem (\ref{eqn:bg_discrete_ot_reg}) reads
\begin{align*}
    \mathcal{L}(\mathbf{Q}, \lambda) = 
    \langle \mathbf{C} , \mathbf{Q} \rangle + \varepsilon D(\mathbf{Q}) 
    + \sum_{t \in [T]}  \lambda_t^{\top} \left( m_t - P_t(\mathbf{Q}) \right),
\end{align*}
where $ \lambda = \{\lambda_t \}_{t \in [T]}$ are Lagrangian multipliers in $\mathbb{R}^{n}$
corresponding to the constraints \eqref{eqn:bg_discrete_ot_margconstr_reg}. The Lagrangian multipliers are also referred to as dual variables, while the dual functional of problem (\ref{eqn:bg_discrete_ot_reg}) is defined as $\varphi(\lambda) := \min \{  \mathcal{L}(\mathbf{Q}, \lambda) : \mathbf{Q} \in \mathbb{R}_+^{n^{T+1}} \}$.

Since the regularisation term $D(\mathbf{Q})$ ensures that the Lagrangian function is strictly convex in $\mathbf{Q}$, the minimum is attained by the unique $\mathbf{Q}>0$ for which the gradient of the Lagrangian function is zero; that is, for the transport plan $\mathbf{Q}$ satisfying
\begin{align*}
0=
\partial_{\mathbf{Q}(i_0,\ldots , i_T)} \mathcal{L}(\mathbf{Q}, \lambda)
=
\mathbf{C}(i_0,\ldots, i_T)+\varepsilon \log \mathbf{Q}(i_0,\ldots, i_T)-\sum_{t \in [T]} \lambda_t(i_t)
\end{align*}
for $i_0, \dots , i_T  \in \{  1, \dots , n \}$. Solving for $\mathbf{Q}$ yields that the stationary point is given by $\mathbf{Q}^{\lambda} = \mathbf{K} \odot \mathbf{U}^{\lambda}$, where 
\begin{equation*}
    \begin{aligned}
        \mathbf{K}(i_0, \dots, i_T) := e^{-\mathbf{C}(i_0, \dots, i_T) / \varepsilon}, 
        \quad 
        \mathbf{U}^{\lambda}(i_0, \dots, i_T) := \prod_{t \in [T]} e^{\lambda_t(i_t)/\varepsilon}, 
        \quad 
        i_0, \dots, i_T = 1, \dots , n.
    \end{aligned}
\end{equation*}
 Inserting this solution $\mathbf{Q}^{\lambda}$ into the Lagrangian function yields the dual functional $\varphi$. The dual of problem \eqref{eqn:bg_discrete_ot_reg} becomes
\begin{align} \label{eqn:struct_dual}
    \max_\lambda  \, \varphi(\lambda)=    \max_\lambda \, \sum_{t\in [T]} \lambda_t^{\top}m_t-\varepsilon \langle \mathbf{K}, \mathbf{U}^{\lambda}\rangle
\end{align}
--- see this by noting that $\langle \mathbf{C}, \mathbf{Q}^{\lambda}\rangle + \varepsilon \langle  \mathbf{Q}^{\lambda} , \log  \mathbf{Q}^{\lambda} \rangle - \sum_{t \in [T]} \lambda_t^{\top} P_t( \mathbf{Q}^{\lambda}) = 0$. Note that the dual variables are unconstrained since the corresponding constraints of the primal problem are equality constraints (see, for example, \cite{Nash-Sofer}). One can show that for a primal-dual pair like the one above, strong duality holds (see, for example, \cite{Nash-Sofer}), meaning that the respective problem values of problems \eqref{eqn:bg_discrete_ot_reg} and \eqref{eqn:struct_dual} coincide. Moreover, the optimal solution of the primal problem \eqref{eqn:bg_discrete_ot_reg} is $\mathbf{Q}^{\lambda^*}$, where $\lambda^*$ is the optimal solution of the dual problem \eqref{eqn:struct_dual}. The optimal solution $\mathbf{Q}^{\lambda^*}$ of the primal problem \eqref{eqn:bg_discrete_ot_reg} can therefore be obtained by first solving the dual problem \eqref{eqn:struct_dual}.

Since the dual functional $\varphi$ is concave, the dual problem is a convex problem. Its maximisers are thus found as stationary points of $\varphi$. 
The optimal $\lambda$ therefore satisfies  $0=\partial_{\lambda_t(i_t)} \varphi(\lambda)=m_t(i_t)-P_t(\mathbf{K} \odot \mathbf{U}^{\lambda})(i_t)$ for $i_t  \in \{ 1, \dots, n \}$ and $t \in [T]$. In order to simplify the expressions, the solution is reparameterised in terms
of $u = \{u_t^{\lambda_t}\}_{t \in [T]}$ where $u_t^{\lambda_t}:=\exp(\lambda_t/\varepsilon)$ for $t\in [T]$. It follows that the tensor $\mathbf{U}^{\lambda}$ can be written
$\mathbf{U}^{\lambda}(i_0, \dots, i_T) = \prod_{\ell \in [T]} u^{\lambda_{\ell}}_\ell(i_\ell)$, for $i_0, \dots , i_T \in \{ 1, \dots, n \}$.
Since $\mathbf{U}^{\lambda}$ is a rank one tensor, it holds that
$P_t(\mathbf{K} \odot \mathbf{U}^{\lambda})=u_t^{\lambda_t} \odot P_t(\mathbf{K} \odot \mathbf{U}^{\lambda}_{-t})$,  where  $\mathbf{U}^{\lambda}_{-t}(i_0, \dots, i_T) := \prod_{\ell\in [T]\backslash t} u^{\lambda_{\ell}}_\ell(i_\ell)$. The optimal $u^\lambda$ therefore satisfies 
\begin{align}\label{eqn:bg_u}
    u_t^{\lambda_t}=m_t \oslash P_t(\mathbf{K} \odot \mathbf{U}^{\lambda}_{-t}), \quad t \in [T]. 
\end{align}

The computational bottleneck of the formula \eqref{eqn:bg_u} is evaluating the projection, which requires manipulating an $n^{T+1}$-tensor. Consequently, the computational complexity increases exponentially as the number of marginals $T+1$ grows \cite{LinHoCuturiJordan2022}, and entropic regularisation alone is thus not enough for efficient computation of the problem. Fortunately, for many such multi-marginal problems the cost function has graph-structures that can be exploited in combination with entropic regularisation \cite{ AltschulerBoix-Adsera2023,BeiervonLindheimNeumayerSteidl2023, BenamouCarlierCuturiNennaPeyre2015, ElvanderHaaslerJakobssonKarlsson2020, HaaslerKarlssonRingh2021, HaaslerRinghChenKarlsson2021, HaaslerSinghZhangKarlssonChen2021,MascherpaHaaslerAhlgrenKarlsson2023}.
These are referred to as \emph{graph-structured multi-marginal optimal transport problems}. For cost functions that decouples sequentially, i.e. are of the form
\begin{equation} \label{eqn:bg_C}
    \mathbf{C}(i_0, \dots, i_T) = \sum_{t \in [T] \backslash 0 } C_t(i_{t-1}, i_t), \quad i_0, \dots, i_T = 1, \dots, n 
\end{equation}
 for some matrices $C_1, \dots , C_T \in \mathbb{R}^{n \times n}$, the projection can be computed efficiently using matrix-vector multiplications. This is described in the following proposition.%
\footnote{In \cite{ElvanderHaaslerJakobssonKarlsson2020}, \cref{prop:algo_filipisabell} was stated in the special case when the matrices $K_t$, $t \in [T] \backslash 0 $, do not depend on the index $t$. However, it is straightforward to generalise it to the version given here.}

\begin{proposition}[\cite{ElvanderHaaslerJakobssonKarlsson2020}, Proposition 2] \label{prop:algo_filipisabell}
    Let the elements of the tensors $\mathbf{K}$ and $\mathbf{U}$ be of the form 
    \begin{equation*}
        \mathbf{K}(i_0, \dots, i_T) = \prod_{t \in [T] \backslash 0   } K_t(i_{t-1}, i_t)
  \qquad \mbox{ and }\qquad
        \mathbf{U}(i_0, \dots, i_T) = \prod_{t \in [T]} u_t(i_t)
    \end{equation*}
    for matrices $K_t \in \mathbb{R}^{n \times n}$, $t \in [T] \backslash 0 $, and vectors $u_t \in \mathbb{R}^n$, $t \in [T]$. Then, 
    \begin{equation*}
        \begin{aligned}
            P_{t} \big(  \mathbf{K} \odot \mathbf{U}  \big) 
            = & \big(  u_0^{\top} K_1 \diag(u_1) K_2 \dots K_{t-1} \diag(u_{t-1}) K_t \big)^{\top} \odot u_t \\ 
            & \odot 
            \big( K_{t+1} \diag(u_{t+1}) K_{t+2} \dots K_{T-1} \diag(u_{T-1}) K_T  u_T \big), \quad t \in [T].
        \end{aligned}
    \end{equation*}
\end{proposition}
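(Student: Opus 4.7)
My strategy is a direct index-level computation, exploiting the fact that every factor in $K\odot U$ is local in the sense that it depends on at most two consecutive indices. I therefore expect the proof to be bookkeeping rather than conceptually subtle.

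First, I would write out a single entry of $K\odot U$ as
\begin{equation*}
(K\odot U)(i_0,\dots,i_m) \;=\; u_j(i_j)\,\Bigl(\prod_{\ell\in [m]\backslash j} u_\ell(i_\ell)\Bigr)\prod_{k\in [m]\backslash 0} K_k(i_{k-1},i_k),
\end{equation*}
and then pull $u_j(i_j)$ out of the sum defining $P_j(K\odot U)(i_j)$, since that factor does not depend on any of the summation indices. The remaining sum naturally splits into a left block involving $i_0,\dots,i_{j-1}$ coupled through $K_1,\dots,K_j$, and a right block involving $i_{j+1},\dots,i_m$ coupled through $K_{j+1},\dots,K_m$; the two blocks are independent because they share no summation variable.

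Next I would recognise each block as a telescoping chain of matrix-vector products. Concretely, I would show by induction on $r$ that
\begin{equation*}
\sum_{i_0,\dots,i_{r-1}} u_0(i_0)\prod_{k=1}^{r} K_k(i_{k-1},i_k) u_{k-1}(i_{k-1}) \;=\; \bigl(u_0^\top K_1\diag(u_1)K_2\cdots \diag(u_{r-1})K_r\bigr)(i_r),
\end{equation*}
evaluating at $i_r = i_j$ with $r=j$ to obtain the left factor of the proposition (after transposing from a row to a column vector). An analogous induction from the right end gives
\begin{equation*}
\sum_{i_{j+1},\dots,i_m}\prod_{k=j+1}^{m} K_k(i_{k-1},i_k)\,u_k(i_k)\Bigr|_{i_j \text{ fixed}} \;=\; \bigl(K_{j+1}\diag(u_{j+1})\cdots K_{m-1}\diag(u_{m-1})K_m u_m\bigr)(i_j),
\end{equation*}
which is the right factor. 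Multiplying the three pieces together yields the claimed formula.

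The only obstacle I anticipate is keeping the index conventions consistent, in particular making sure that $u_j(i_j)$ is not double-counted (it multiplies neither the left nor the right chain in the stated formula, only appears as the outer elementwise factor) and handling the boundary cases $j=0$ and $j=m$, where one of the two chains is empty and should be interpreted as the all-ones vector. This is a routine but easy-to-bungle check; apart from that, the derivation is a direct unfolding of the definitions.
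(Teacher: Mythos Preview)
Your approach is correct and is the standard direct computation one would expect: factor out $u_j(i_j)$, split the remaining sum over $i_0,\dots,i_{j-1}$ and $i_{j+1},\dots,i_m$ (which share only the fixed index $i_j$), and recognise each block as an iterated matrix--vector product. The paper itself does not supply a proof of this proposition --- it is quoted from \cite{ElvanderHaaslerJakobssonKarlsson2020} (with the trivial extension to $j$-dependent matrices $K_j$ noted in a footnote) --- so there is nothing to compare against beyond confirming that your argument goes through, which it does.

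One small point of care: in your displayed induction hypothesis for the left block you have written $u_0(i_0)$ both outside the product and, implicitly, as the $k=1$ factor $u_{k-1}(i_{k-1})$ inside it, so as literally written $u_0(i_0)$ is squared. The intended expression is $\sum_{i_0,\dots,i_{r-1}} \prod_{k=1}^{r} u_{k-1}(i_{k-1}) K_k(i_{k-1},i_k)$, which does equal $\bigl(u_0^\top K_1\diag(u_1)\cdots\diag(u_{r-1})K_r\bigr)(i_r)$. With that correction and the boundary cases $j=0$, $j=m$ handled as you describe (empty chain $=\mathbf{1}_n$), the proof is complete.
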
In \cref{sec:algo} we will use this and similar results to develop an efficient method for solving our class of multi-marginal MOT problems.

\section{Reformulation as a structured OT problem} \label{sec:equiv}
 We now show how an MOT problem of the form (\ref{eqn:intro_mot}) can be identified with a structured multi-marginal OT problem whose optimal solution can be computed efficiently. This is carried out in two steps, where the first one is to reduce the path dependency of the martingale constraint. Once this is done, we will transform the resulting problem to tensor form. A key insight throughout is to leverage on the sequential structure inherent in the problem.

\subsection{Markovian reformulation of the problem}
When comparing the form (\ref{eqn:intro_phi}) of our class of payoff functions with the form (\ref{eqn:bg_C}) of the cost functions characterising the sequentially structured OT problems, we note that they are consistent. However, the presence of the martingale constraint introduces a dependency of each marginal to all previous marginals, ruining the structure of the problem; consequently, our class of MOT problems cannot directly be solved via the techniques described in \cref{sec:bg_ot}, despite the payoff function being of the appropriate form. Instead we consider the modified multi-marginal OT problem

\begin{subequations} \label{eqn:equiv_mot}
\begin{align}
    \underset{(\Omega, \mathcal{F}, \mathbb{Q}, S)}{\inf}  &\mathbb{E}_{\mathbb{Q}} \bigg[ \sum_{t \in [T] \backslash 0 } \phi_{t}(S_{t-1}, X_{t-1}, S_t, X_t) \bigg] && \label{eqn:equiv_mot_objfcn}\\
    \text{subject to} \quad & \mathbb{E}_{\mathbb{Q}}[S_t | \sigma(S_{t-1}, X_{t-1})] = S_{t-1}, && t \in [T] \backslash 0  \label{eqn:equiv_mot_mtgconstr} \\
    & S_t \sim_{\mathbb{Q}} \mu_t, && t \in \mathcal{T} \label{eqn:equiv_mot_margconstr}
\end{align}
\end{subequations}
where $X$ is given in \cref{eqn:intro_x}. Note that since the objective function (\ref{eqn:equiv_mot_objfcn}) and the constraints (\ref{eqn:equiv_mot_mtgconstr}) and (\ref{eqn:equiv_mot_margconstr}) are linear in $\mathbb{Q}$, the above is a convex problem. The key idea is that for every model $(\Omega, \mathcal{F}, \mathbb{Q}, S)$ that satisfies equations (\ref{eqn:equiv_mot_mtgconstr}) and (\ref{eqn:equiv_mot_margconstr}), where the process $X$ is given by \cref{eqn:intro_x}, there exists a Markov process $(\tilde S , \tilde X)$, supported on some probability space $(\tilde{\Omega}, \tilde{\mathcal F}, \tilde{\mathbb{Q}})$, with similar properties as the joint process $(S,X)$; in particular, this $(\tilde{\Omega}, \tilde{\mathcal{F}}, \tilde{\mathbb{Q}}, \tilde{S})$ is feasible to both problems (\ref{eqn:intro_mot}) and (\ref{eqn:equiv_mot}) and it yields the same value of the objective function as the model  $(\Omega, \mathcal{F}, \mathbb{Q}, S)$.

We now show that any optimal solution to problem (\ref{eqn:equiv_mot}) can be used to find an optimal solution to problem (\ref{eqn:intro_mot}).

\begin{theorem}\label{thm:equiv_mots}
    The martingale optimal transport problem  (\ref{eqn:intro_mot}) and the optimal transport problem (\ref{eqn:equiv_mot}) are equivalent in the sense that any optimal solution of problem (\ref{eqn:intro_mot}) is also an optimal solution of problem (\ref{eqn:equiv_mot}), while any optimal solution of problem (\ref{eqn:equiv_mot}) can be used to construct an optimal solution to problem (\ref{eqn:intro_mot}). Moreover, the optimal objective values coincide.
\end{theorem}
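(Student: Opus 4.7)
The plan is to prove equivalence by establishing both directions of inclusion between feasible sets, coupled with a Markovianisation procedure that preserves the objective value.

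Direction one is immediate: if $(\Omega,\mathcal{F},\mathbb{Q},S)$ is feasible for problem (\ref{eqn:intro_mot}), I observe that $X_{t-1}$ is by construction (\ref{eqn:intro_x}) a deterministic function of $(S_0,\dots,S_{t-1})$, so that $\sigma(S_{t-1},X_{t-1})\subseteq\sigma(S_0,\dots,S_{t-1})$. Applying the tower property and using that $S_{t-1}$ is measurable with respect to the smaller $\sigma$-algebra yields
\begin{equation*}
\mathbb{E}_{\mathbb{Q}}\bigl[S_t \,\big|\, \sigma(S_{t-1},X_{t-1})\bigr] = \mathbb{E}_{\mathbb{Q}}\bigl[\,\mathbb{E}_{\mathbb{Q}}[S_t\mid\sigma(S_0,\dots,S_{t-1})]\,\big|\, \sigma(S_{t-1},X_{t-1})\bigr] = S_{t-1},
\end{equation*}
so the same quadruple is also feasible for problem (\ref{eqn:equiv_mot}), with an identical objective value. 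Hence $\inf$(\ref{eqn:equiv_mot})$\le\inf$(\ref{eqn:intro_mot}).

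Direction two is the content of the theorem and is proved by construction. Given any feasible $(\Omega,\mathcal{F},\mathbb{Q},S)$ for problem (\ref{eqn:equiv_mot}), I disintegrate to obtain, for each $t\in [T]\backslash 0$, a regular conditional distribution $\kappa_t(\,\cdot\mid s,x)$ of $(S_t,X_t)$ given $(S_{t-1},X_{t-1})=(s,x)$ under $\mathbb{Q}$. On an auxiliary probability space $(\widetilde{\Omega},\widetilde{\mathcal{F}},\widetilde{\mathbb{Q}})$, I then define a Markov chain $(\tilde S_t,\tilde X_t)_{t\in[T]}$ by prescribing the law of $(\tilde S_0,\tilde X_0)$ to be that of $(S_0,X_0)$ and using $\kappa_t$ as successive transition kernels. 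By construction, the joint law of $(\tilde S_{t-1},\tilde X_{t-1},\tilde S_t,\tilde X_t)$ under $\widetilde{\mathbb{Q}}$ coincides with the corresponding joint law of $(S_{t-1},X_{t-1},S_t,X_t)$ under $\mathbb{Q}$. Because the original $(S,X)$ satisfies $X_t=h_t(S_t,S_{t-1},X_{t-1})$ almost surely, each $\kappa_t$ is supported on the graph of $h_t$; therefore the new process also obeys the recursion (\ref{eqn:intro_x}), so $\tilde X_{t-1}$ is a deterministic function of $(\tilde S_0,\dots,\tilde S_{t-1})$.

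I now verify the three required properties of the constructed quadruple $(\widetilde{\Omega},\widetilde{\mathcal{F}},\widetilde{\mathbb{Q}},\tilde S)$. The marginal constraints $\tilde S_t\sim\mu_t$ for $t\in\mathcal T$ and the equality of objectives both follow from the preservation of the adjacent joint laws, since the cost decomposes as a sum over adjacent pairs in $(S_{t-1},X_{t-1},S_t,X_t)$ as in (\ref{eqn:intro_phi}). For the full-history martingale property, I use the Markov structure of $(\tilde S,\tilde X)$ together with the fact that $\tilde X_{t-1}$ is $\sigma(\tilde S_0,\dots,\tilde S_{t-1})$-measurable, which gives
\begin{equation*}
\mathbb{E}_{\widetilde{\mathbb{Q}}}\bigl[\tilde S_t\,\big|\,\sigma(\tilde S_0,\dots,\tilde S_{t-1})\bigr]
= \mathbb{E}_{\widetilde{\mathbb{Q}}}\bigl[\tilde S_t\,\big|\,\sigma(\tilde S_0,\tilde X_0,\dots,\tilde S_{t-1},\tilde X_{t-1})\bigr]
= \mathbb{E}_{\widetilde{\mathbb{Q}}}\bigl[\tilde S_t\,\big|\,\sigma(\tilde S_{t-1},\tilde X_{t-1})\bigr]
= \tilde S_{t-1},
\end{equation*}
where the last equality uses the matching adjacent joint law and the hypothesis that the original $\mathbb{Q}$ satisfies (\ref{eqn:equiv_mot_mtgconstr}). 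This yields $\inf$(\ref{eqn:intro_mot})$\le\inf$(\ref{eqn:equiv_mot}), and combined with direction one the optimal values coincide. Any optimiser of (\ref{eqn:intro_mot}) is, by direction one, automatically optimal for (\ref{eqn:equiv_mot}); any optimiser of (\ref{eqn:equiv_mot}) yields, via the Markovianisation above, an optimiser of (\ref{eqn:intro_mot}).

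The main obstacle is the measure-theoretic set-up behind the Markovianisation: one needs regular conditional distributions $\kappa_t$ to exist, which requires a standard (Polish) state space assumption on the range of $(S,X)$. This is mild in our context since $S_t,X_t$ take values in $\mathbb{R}$. The remaining verifications are straightforward bookkeeping exploiting the decoupled form of $\phi$ and the fact that $X_t$ is a deterministic function of the past of $S$.
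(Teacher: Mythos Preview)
Your proof is correct and follows essentially the same strategy as the paper: the tower property gives the easy inclusion, and a Markovianisation of $(S,X)$ via its one-step conditional kernels yields the reverse direction while preserving adjacent two-time laws and hence the objective. The paper separates out the verification that adjacent joint laws match (an induction on $t$) and that the recursion (\ref{eqn:intro_x}) persists as separate lemmas, whereas you absorb these into ``by construction'' and ``supported on the graph of $h_t$''; these are indeed routine, but the matching of one-time marginals does require the short inductive step rather than being literally immediate.
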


\begin{proof}[Proof of \cref{thm:equiv_mots}]
Let a tuple $(\Omega, \mathcal{F}, \mathbb{Q}, S)$ refer to a probability space $(\Omega, \mathcal{F}, \mathbb{Q})$ supporting a stochastic process $S : \Omega \times [T]  \rightarrow \mathbb{R}$. Define two sets of tuples $\mathfrak{F}$ and $\mathfrak{F}'$ as
    \begin{equation*}
        \begin{aligned}
            \mathfrak{F}
            :=
             \{ 
               (\Omega, \mathcal{F}, \mathbb{Q}, S) 
               :   S_t \sim_{\mathbb{Q}} \mu_t \quad \forall t \in \mathcal{T} 
               \quad 
            \text{and} \quad  \mathbb{E}_{\mathbb{Q}}\left[ S_t | \sigma(S_0, \dots, S_{t-1}) \right] = S_{t-1}  \quad \forall t \in [T] \backslash 0  
             \}
        \end{aligned}
    \end{equation*}
    and
    \begin{equation*}
        \begin{aligned}
            \mathfrak{F}'
            :=
             \{ 
               (\Omega, \mathcal{F}, \mathbb{Q}, S) 
               :   S_t \sim_{\mathbb{Q}} \mu_t \quad \forall t \in \mathcal{T} 
               \quad 
            \text{and} \quad  \mathbb{E}_{\mathbb{Q}}\left[ S_t | \sigma(S_{t-1}, X_{t-1}) \right] = S_{t-1}  \quad \forall t \in [T] \backslash 0 
             \},
        \end{aligned}
    \end{equation*}
    respectively, where $X$ is defined out of $S$ as in \cref{eqn:intro_x}. We note that $\mathfrak{F}$ is the feasible set of problem (\ref{eqn:intro_mot}) and that $\mathfrak{F}'$ is the feasible set of problem (\ref{eqn:equiv_mot}). Take any tuple $(\Omega, \mathcal{F}, \mathbb{Q}, S) \in  \mathfrak{F}$. Then, by the Tower property, for $t \in [T] \backslash 0$,
    \begin{equation*}
    \begin{aligned}
        \mathbb{E}_{\mathbb{Q}} \big[ S_t | \sigma( S_{t-1}, X_{t-1} ) \big]
        = S_{t-1}  \quad \text{a.s.}
    \end{aligned}
    \end{equation*}
    It follows that $(\Omega, \mathcal{F}, \mathbb{Q}, S) \in  \mathfrak{F}'$. Therefore $\mathfrak{F}\subseteq \mathfrak{F}'$.

    Take any tuple $(\Omega, \mathcal{F}, \mathbb{Q}, S) \in \mathfrak{F}'$ and define $X$ out of $S$ as given in \cref{eqn:intro_x}. Then note that there exists an $\mathbb{R}^2$-valued Markov process $(\tilde S, \tilde X)$, supported on some probability space $(\tilde{\Omega}, \tilde{\mathcal{F}}, \tilde{\mathbb{Q}})$, that has the same initial distribution and conditional distributions as the joint process $(S,X)$. Indeed, let $\nu : \mathcal{B}(\mathbb{R}^2) \rightarrow [0,1]$ and $\kappa_t : \mathbb{R}^2 \times \mathcal{B}(\mathbb{R}^2) \rightarrow [0,1]$, for $t \in [T] \backslash 0$, be defined by $ \nu(A) := \mathbb{Q} ( (S_0, X_0) \in A )$ and $\kappa_t(y, A) := \mathbb{Q} (  (S_t, X_t) \in A | (S_{t-1}, X_{t-1}) = y )$ for $A \in \mathcal{B}(\mathbb{R}^2)$ and $y \in \mathbb{R}^2$.
    Since $(\mathbb{R}^2, \mathcal{B}(\mathbb{R}^2))$ is a Borel space, the regular conditional distribution given above exists and is almost everywhere equal to a kernel (see, for example, \cite[Theorem 8.5]{Kallenberg2021}). The Kolmogorov extension theorem then guarantees the existence of an $\mathbb{R}^2$-valued Markov process $(\tilde S, \tilde X)$, defined on some probability space $(\tilde{\Omega}, \tilde{\mathcal{F}}, \tilde{\mathbb{Q}})$\footnote{It can be taken as the canonical process on the probability space $(\tilde{\Omega}, \tilde{\mathcal{F}}, \tilde{\mathbb{Q}})$ given by $(\tilde{\Omega}, \tilde{\mathcal{F}}) = (\mathbb{R}^2, \mathcal{B}( \mathbb{R}^2))^{T+1}$ and $\tilde{\mathbb{Q}} = \nu \otimes \kappa_1 \otimes \dots \otimes \kappa_T$.}, with initial distribution $\nu $ and transition kernels $\{\kappa_t\}_{t\in [T] \backslash 0}$ (see, for example, \cite[Theorem 11.4]{Kallenberg2021}). It follows that
    \begin{equation} \label{eqn:mots_equiv_pf1}
        \text{Law}((\tilde{S}_{t-1}, \tilde{X}_{t-1}), (\tilde{S}_{t}, \tilde{X}_{t}))
        =
        \text{Law}(( S_{t-1}, X_{t-1}), (S_{t}, X_{t})), \quad t \in [T] \backslash 0.
    \end{equation}
    Indeed, it is immediate that $\text{Law}((\tilde S_0, \tilde X_0)) = \nu = \text{Law}((S_0, X_0))$. Combining this with the definition of the kernel $\kappa_1$ yields that $\text{Law}( (\tilde{S}_0, \tilde{X}_0), (\tilde{S}_1, \tilde{X}_1)) = \text{Law}( (S_0, X_0), (S_1, X_1) )$, from where it follows that $\text{Law}(  (\tilde{S}_1, \tilde{X}_1)) = \text{Law}((S_1, X_1) )$. \Cref{eqn:mots_equiv_pf1} then follows by induction. Consequently, the law of $\tilde S_t$ under $\tilde{\mathbb{Q}}$ is equal to the law of $S_t$ under $\mathbb{Q}$ for every $t \in [T]$ and therefore $\tilde S_t \sim_{\tilde{\mathbb{Q}}} \mu_t$ for $t \in \mathcal T$. By similar arguments,  
    \begin{equation*}
        \mathbb{E}_{\tilde{\mathbb{Q}}}\big[ \tilde{S}_t | \tilde{S}_{t-1} = s, \tilde{X}_{t-1} = x \big] = \mathbb{E}_{\mathbb{Q}}\big[ S_t | S_{t-1} = s, X_{t-1} = x \big] = s, 
        \quad s,x \in \mathbb{R}, 
        \quad t \in [T] \backslash 0.
    \end{equation*}
     Combining this with the Markov property of $(\tilde S, \tilde X)$ yields that $\tilde S$ is a martingale with respect to its own filtration under $\tilde{\mathbb{Q}}$. Therefore, $(\tilde{\Omega}, \tilde{\mathcal{F}}, \tilde{\mathbb{Q}}, \tilde{S}) \in \mathfrak{F}$. Moreover, it follows from \cref{eqn:mots_equiv_pf1} that 
    \begin{equation*} \label{eqn:mots_equiv_pf2}
        \mathbb{E}_{\tilde{\mathbb{Q}}} \left[ \sum_{t \in [T] \backslash 0} \phi_{t}(\tilde S_{t-1}, \tilde X_{t-1}, \tilde S_t, \tilde X_t) \right]
        =
        \mathbb{E}_{\mathbb{Q}} \left[ \sum_{t \in [T] \backslash 0} \phi_{t}(S_{t-1}, X_{t-1}, S_t, X_t) \right],
    \end{equation*}
    where the left-hand side is the payoff associated with the model $(\tilde{\Omega}, \tilde{\mathcal{F}}, \tilde{\mathbb{Q}}, \tilde{S})$ since the process $(\tilde{S}, \tilde{X})$ satisfies \cref{eqn:intro_x} almost surely. Indeed, recall that the functions $h_t$ for $t \in [T]$ are measurable and hence it follows from \cref{eqn:mots_equiv_pf1} that $\tilde{\mathbb{Q}}(\tilde{X}_t = h_t( \tilde{S}_t, \tilde{S}_{t-1}, \tilde{X}_{t-1} )) = \mathbb{Q}(X_t = h_t(S_t, S_{t-1}, X_{t-1}) )$ for $t \in [T] \backslash 0$, where the right-hand side equals 1, and similarly for the case when $t=0$. We have thus shown that for any $(\Omega, \mathcal{F}, \mathbb{Q}, S) \in  \mathfrak{F}'$ we can find $(\tilde{\Omega}, \tilde{\mathcal{F}}, \tilde{\mathbb{Q}}, \tilde S)  \in  \mathfrak{F}$ such that their corresponding values of the objective function are equal. This implies that the values of the two problems coincide.

    It follows that if $(\Omega^*, \mathcal{F}^*, \mathbb{Q}^*, S^*) \in \mathfrak{F}'$ is an optimal solution of problem (\ref{eqn:equiv_mot}) there exists $(\tilde{\Omega}^*, \tilde{\mathcal{F}}^*, \tilde{\mathbb{Q}}^*, \tilde{S}^*) \in \mathfrak{F}$ that is an optimal solution of problem (\ref{eqn:intro_mot}). Assume on the other hand that $(\Omega^*, \mathcal{F}^*, \mathbb{Q}^*, S^*) \in \mathfrak{F}$ is an optimal solution of problem (\ref{eqn:intro_mot}). Since the two problem values are equal and $\mathfrak{F} \subseteq \mathfrak{F}'$, it immediately follows that then $(\Omega^*, \mathcal{F}^*, \mathbb{Q}^*, S^*)$ is an optimal solution also of problem  (\ref{eqn:equiv_mot}). This completes the proof.
\end{proof}

\begin{remark} \label{remark:equiv_char_FMOT}
    Note that when $(S, X)$ is not a Markov process under $\mathbb{Q}$, it does not necessarily hold that 
    \begin{equation} \label{eqn:equiv_char_FMOT}
        \mathbb{E}_{\mathbb{Q}} [ S_t |\sigma(S_0, \dots, S_{t-1})]
        =
        \mathbb{E}_{\mathbb{Q}} [S_t | \sigma(S_{t-1} , X_{t-1} )] \quad \text{a.s.}, \quad t \in [T] \backslash 0 .
    \end{equation}
    Therefore, in general  $(\Omega, \mathcal{F}, \mathbb{Q}, S) \not \in \mathfrak{F}$ when $(\Omega, \mathcal{F}, \mathbb{Q}, S) \in \mathfrak{F}'$. Consequently $\mathfrak{F} \neq \mathfrak{F}'$. From this it can be noted that $\mathfrak{F}$ can be characterised as the subset of tuples $(\Omega, \mathcal{F}, \mathbb{Q}, S) \in \mathfrak{F}'$ such that \cref{eqn:equiv_char_FMOT} holds. We immediately realise that every tuple $(\Omega, \mathcal{F}, \mathbb{Q}, S) \in \mathfrak{F}'$ that is such that $(S,X)$ is a Markov process under $\mathbb{Q}$ also belongs to $\mathfrak{F}$. On the other hand, note that every tuple in $\mathfrak{F}'$ does not correspond to an $\mathbb{R}^2$-valued Markov process. However, one can show that the multi-marginal MOT problem that is subject to such an additional Markov constraint is equivalent to problem (\ref{eqn:equiv_mot}) in the same way as we have shown that problem (\ref{eqn:intro_mot}) is equivalent to problem (\ref{eqn:equiv_mot}); this means that there are in total three problems that are all equivalent to each other, given that the payoff function is of the form (\ref{eqn:intro_phi}). We choose to solve problem (\ref{eqn:equiv_mot}) which is convex, in contrast to the problem with an explicit Markov constraint. 
\end{remark}

\begin{remark} \label{rem:sol_markov}
    If we know that the optimal solution to an MOT problem is such that the price process is a Markov process, then the process $X$ can be omitted; see this by letting $X_t = S_t$ for $t \in [T]$ in the above results. This reduces the size of the problem. This holds in particular for all bi-marginal problems since a stochastic process defined over two discrete time points trivially is a Markov process. Along the same lines, we emphasise that MOT problems characterised by a payoff of the form $\phi(S_0, \dots , S_T) = \sum_{t \in [T] \backslash 0 } \phi_t(S_{t-1}, S_t)$, $\phi_t : \mathbb{R}^2 \rightarrow \mathbb{R}$, can be shown to be equivalent to an OT problem subject to the constraint $\mathbb{E}_{\mathbb{Q}}[S_t | \sigma(S_{t-1})] = S_{t-1}$ for $t \in [T] \backslash 0$.
\end{remark}

\subsection{Formulation as a linear programming problem} \label{sec:tensors}
In order to develop a computational method, we now consider the case when the support of the given marginals is concentrated on a finite number of points --- this is the situation for information given by a real-world financial market. For practical convenience, we limit ourselves to only consider models where the support of the intermediate marginals is also concentrated on a discrete set. The objective function as well as the constraints of problem (\ref{eqn:equiv_mot}) can then be expressed in terms of tensors; we will now formalise this. With a slight abuse of terminology, we refer to \cref{eqn:equiv_mot_mtgconstr} as `martingale constraints' from now on.

Let a family $\{ \mathcal{S}_t \}_{t \in [T]}$ of discrete subsets of $\mathbb{R}$ be given by $\mathcal{S}_t = \{ s_t^1, \dots, s_t^{n^{\mathcal{S}}_t} \}$ with $n_t^{\mathcal{S}} \in \mathbb{N}$ and $ t \in [T]$,
where $\mathcal{S}_t$ for $t \in \mathcal{T}$ is the support of the given marginal $\mu_t$. We assume that the resulting grid $\mathcal{S}_0 \times \dots \times \mathcal{S}_T$ is such that it is for $t \in [T] \backslash \mathcal{T}$ possible to construct intermediate marginals $\mu_t$, with support $\mathcal{S}_t$, respecting the convex order. Then the set of martingales, supported on $\mathcal{S}_0 \times \dots \times \mathcal{S}_T$, that respects the given marginals is non-empty. Let for $t \in [T]$ the vector $s_t \in \mathbb{R}^{n_t^{\mathcal{S}}}$ be the vector consisting of the elements in $\mathcal S_t$, that is $s_t(i) := s^i_t$ for $i \in  \mathcal{I}^{\mathcal{S}}_t$,
where $\mathcal{I}^{\mathcal{S}}_t := \{ 1, \dots, n^{\mathcal{S}}_t \}$. For each given marginal $\mu_t$ we define a vector $m_t \in \mathbb{R}^{n_t^{\mathcal{S}}}$ where each element is given by 
\begin{equation} \label{eqn:tensors_mu}
    m_t(i) := \mu_t\left( \{ s_t(i) \} \right), \quad i \in  \mathcal{I}_t^{\mathcal{S}}, \quad t \in \mathcal{T}.
\end{equation}

Let for $t \in [T] \backslash 0$ the set $\mathcal{X}_t$ be such that $h_t(s_t, s_{t-1}, x_{t-1}) \in \mathcal{X}_t$ for $(s_t, s_{t-1}, x_{t-1}) \in \mathcal{S}_t \times \mathcal{S}_{t-1} \times \mathcal{X}_{t-1}$ and $\mathcal{X}_0$ be such that $h_0(s_0) \in \mathcal{X}_0$ for $s_0 \in \mathcal{S}_0$. Then each $\mathcal{X}_t$ is of the form $\mathcal{X}_t = \{ x_t^1, \dots, x_t^{n_t^{\mathcal{X}}} \}$ for some $n_t^{\mathcal{X}} \in \mathbb N$. Let for $t \in [T]$ the vector $x_t \in \mathbb{R}^{n_t^{\mathcal{X}}}$ be the vector consisting of the elements of $\mathcal X_t$, that is $ x_t(i) := x_t^i$ for $i \in \mathcal{I}_t^{\mathcal{X}}$,
where $\mathcal{I}_t^{\mathcal{X}} :=  \{ 1, \dots, n_t^{\mathcal{X}} \}$. Note that depending on the choice of $h_t$, $t \in [T]$, we could have that $\mathcal{X}_t \equiv \mathcal{S}_t$ --- this is the case for Examples \ref{ex:examples_max} and \ref{ex:examples_memory}. We emphasise that the joint state space $\mathcal{S}_t \times \mathcal{X}_t$ contains in total $n_t = n_t^{\mathcal{S}} n_t^{\mathcal{X}}$ states and is indexed by joint indices $(i_t^{\mathcal{S}}, i_t^{\mathcal{X}}) \in\mathcal{I}_t^{\mathcal{S}} \times \mathcal{I}_t^{\mathcal{X}}$. By letting $\mathcal{I}_t := \{1, \dots , n_t\}$ and ordering the joint index set  $\mathcal{I}_t^{\mathcal{S}} \times \mathcal{I}_t^{\mathcal{X}}$ according to
\begin{equation} \label{eqn:tensors_idxorder}
    (1, 1), \dots , (n_t^{\mathcal{S}}, 1), (1,2), \dots, (n_t^{\mathcal{S}}, 2), {\dots} \hspace{0.1cm}{\dots}, (1, n_t^{\mathcal{X}}), \dots, (n_t^{\mathcal{S}}, n_t^{\mathcal{X}}),
\end{equation}
we can identify each element $i_t \in \mathcal{I}_t$ with an element of $(i_t^{\mathcal{S}}, i_t^{\mathcal{X}}) \in \mathcal{I}_t^{\mathcal{S}} \times \mathcal{I}_t^{\mathcal{X}}$; we will write $i_t$ for short for $i_t(i_t^{\mathcal{S}}, i_t^{\mathcal{X}})$. Similarly, it is understood that $i_t^{\mathcal{S}} \in \mathcal{I}_t^{\mathcal{S}}$ and $i_t^{\mathcal{X}} \in \mathcal{I}_t^{\mathcal{X}}$ when $i_t(i_t^{\mathcal{S}}, i_t^{\mathcal{X}}) \in \mathcal{I}_t$. For notational convenience we also define $\mathcal{I} := \bigtimes_{t \in [T]} \mathcal{I}_t$, thus  $(i_0, \dots , i_T) \in \mathcal{I}$ indicates that $i_t \in \mathcal{I}_t$ for $t \in [T]$.

We will now see how the payoff can be represented on tensor form within this framework. Let  $\Phi_t \in \mathbb{R}^{n_{t-1} \times n_t}$ for $t \in [T ] \backslash 0$ be given by 
\begin{align*}
    \Phi_t(i_{t-1}, i_t)
    :=
    \phi_t\big(  s_{t-1}(i_{t-1}^{\mathcal{S}}), x_{t-1}(i_{t-1}^{\mathcal{X}}), s_t(i_t^{\mathcal{S}}), x_t(i_t^{\mathcal{X}}) \big)
    , \quad i_{t-1} \in \mathcal{I}_{t-1}, i_t \in \mathcal{I}_t
\end{align*}
and define $\mathbf{\Phi} \in \mathbb{R}^{n_0 \times \dots \times n_T}$ as $\mathbf{\Phi} (i_0, \dots, i_T) := \sum_{t \in [T] \backslash 0} \Phi_t(i_{t-1}, i_t)$ for $(i_0, \dots i_T) \in \mathcal{I}$ analogously to \cref{eqn:intro_phi}. Then $\mathbf{\Phi}(i_0, \dots, i_T) = \phi ( s_0(i_0^{\mathcal{S}}), \dots, s_T(i_T^{\mathcal{S}}) )$ for $(i_0, \dots , i_T) \in \mathcal{I}$ and represents the payoff associated with the price evolution $(s_0(i_0^{\mathcal{S}}), x_0(i_0^{\mathcal{X}})), \dots, (s_T(i_T^{\mathcal{S}}), x_T(i_T^{\mathcal{X}}))$. In order to enforce a zero probability for every index tuple that does not respect \cref{eqn:intro_x}, let $\mathbf{C} \in \mathbb{R}^{n_0 \times \dots \times n_T}$ be a cost given by $\mathbf{C}(i_0, \dots, i_T) := \sum_{t \in [T] \backslash 0} C_t(i_{t-1}, i_t)$ for $(i_0, \dots, i_T) \in \mathcal{I}$, with
\begin{equation*}
    \begin{aligned}
        C_t (i_{t-1}, i_t)
        :=
        \begin{cases}
            \Phi_t (i_{t-1}, i_t)  & \text{if  } x_t(i_t^{\mathcal{X}}) = h_t\big(s_t(i_t^{\mathcal{S}}), s_{t-1}(i_{t-1}^{\mathcal{S}}), x_{t-1}(i_{t-1}^{\mathcal{X}})\big) \\
            \infty & \text{else}
        \end{cases} \\
    \end{aligned}
    , \quad t = 2, \dots, T 
\end{equation*}
and
\begin{equation*}
    \begin{aligned}
    C_1(i_{0}, i_1)
        :=
        \begin{cases}
            \Phi_1 (i_0, i_1) & \text{if  } x_0(i_0^{\mathcal{X}}) = h_0\big(s_0(i_0^{\mathcal{S}})\big) \text{ and } 
                          x_1(i_1^{\mathcal{X}}) = h_1\big(s_1(i_1^{\mathcal{S}}), s_0(i_0^{\mathcal{S}}), x_0(i_0^{\mathcal{X}})\big) \\
            \infty & \text{else}
        \end{cases} 
    \end{aligned}
\end{equation*}
being matrices in $\mathbb{R}^{n_{t-1} \times n_t}$ and $\mathbb{R}^{n_0 \times n_1}$, respectively. By defining the cost $\mathbf{C}$ in this way, we penalise forbidden index tuples; note that $\langle \mathbf{C} , \mathbf{Q} \rangle = \langle \mathbf{\Phi} , \mathbf{Q} \rangle$ whenever $\langle \mathbf{C} , \mathbf{Q} \rangle < \infty$. Also note that $\mathbf{C}$ inherits the structure of $\mathbf{\Phi}$ and therefore it decouples sequentially, as in \cref{eqn:bg_C}.

We now present some notation needed for formulating problem (\ref{eqn:equiv_mot}) as a linear programming problem. Define for $t \in [T] \backslash 0 $ the matrix $\Delta_t \in \mathbb{R}^{n_{t-1} \times n_t}$ as
\begin{equation} \label{eqn:tensors_delta}
    \Delta_t(i_{t-1},i_t) :=  s_t(i_t^{\mathcal{S}}) - s_{t-1}(i_{t-1}^{\mathcal{S}})
    , \quad i_{t-1} \in \mathcal{I}_{t-1}, i_t \in \mathcal{I}_t,
\end{equation}
where we recall the connection between $i_t$ and $i_t^{\mathcal{S}}$ from (\ref{eqn:tensors_idxorder}). Recall the definitions of the projection operators $P_{t}$, for $t \in [T]$, and $P_{t_1, t_2}$, for $t_1, t_2 \in [T]$ and note that they can be identified with maps from $\mathbb{R}_+^{n_0 \times \dots \times n_T}$ to $\mathbb{R}_+^{n_t}$ and $\mathbb{R}_+^{n_{t_1} \times n_{t_2}}$, respectively. Then define for $t \in [T]$ a family of projection operators $P^{\mathcal{S}_t} : \mathbb{R}_+^{n_t} \rightarrow \mathbb{R}_+^{n_t^{\mathcal{S}}}$ via 
\begin{equation*}
    P^{\mathcal{S}_t}(m)(j) := \sum_{k \in \mathcal{I}_t^{\mathcal{X}}}m \big( i_t(j, k )\big), \quad j \in \mathcal{I}_t^{\mathcal{S}}, \quad  m \in \mathbb R_+^{n_t}
\end{equation*}
 --- then $P^{\mathcal{S}_t}$ is the discrete analogue to integrating over the $\mathcal{X}_t$-component of a function (or distribution) on $\mathcal{S}_t \times \mathcal{X}_t$. Finally, let $P^{\mathcal{S}}_t := P^{\mathcal{S}_t} \circ P_t$ for $t \in [T]$.

Given the framework introduced above, consider the linear programming problem 
\begin{subequations} \label{eqn:tensors_mot_unreg}
\begin{align}
    \underset{\mathbf{Q} \in \mathbb{R}_+^{n_0 \times \dots \times n_T}}{\min} \quad 
    & \langle \mathbf{C} , \mathbf{Q} \rangle && \label{eqn:tensors_mot_unreg_objfcn} \\
    \text{subject to} \quad   
            & \left( P_{t-1, t}(\mathbf{Q}) \odot \Delta_t \right) 1_{n_t} =  0_{n_{t-1}}, && t \in [T] \backslash 0 \label{eqn:tensors_mot_unreg_mtgconstr} \\
    & P_t^{\mathcal{S}}(\mathbf{Q}) = m_t, && t \in \mathcal{T} . \label{eqn:tensors_mot_unreg_margconstr}
\end{align}
\end{subequations}
Note that it has an optimal solution since its feasible set is bounded and non-empty, but that it may not be unique. The next result shows that that problems (\ref{eqn:equiv_mot}) and (\ref{eqn:tensors_mot_unreg}) are equivalent --- the proof is deferred to \cref{sec:appendix}.

\begin{proposition} \label{prop:equiv_lp}
     Suppose that we restrict problem (\ref{eqn:equiv_mot}) to models such that the support of the price process at each time point $t \in [T]$ is $\mathcal{S}_t$. Then problems (\ref{eqn:equiv_mot}) and (\ref{eqn:tensors_mot_unreg}) are equivalent.
\end{proposition}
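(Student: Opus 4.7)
The plan is to exhibit a value-preserving correspondence between feasible tuples for the restricted version of problem (\ref{eqn:equiv_mot}) and feasible tensors for (\ref{eqn:tensors_mot_unreg}). Given a feasible model $(\Omega, \mathcal{F}, \mathbb{Q}, S)$ whose price process takes values in $\mathcal{S}$, define $Q \in \mathbb{R}_+^{(n_S n_X)^{T+1}}$ by
\[
Q(\boldsymbol{\omega}) := \mathbb{Q}\!\left( \bigl((S_0,X_0),\ldots,(S_T,X_T)\bigr) = \bigl((s(\omega_0^S),x(\omega_0^X)),\ldots,(s(\omega_T^S),x(\omega_T^X))\bigr) \right),
\]
i.e.\ the tensor representation, via the ordering (\ref{eqn:tensors_idxorder}), of the joint law of $(S,X)$ under $\mathbb{Q}$. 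Because $X$ is constructed from $S$ via (\ref{eqn:intro_x}), $Q$ is supported on trajectories that respect the recursion, so $\langle I,Q\rangle = 0$ and hence $\langle C,Q\rangle = \langle \Phi,Q\rangle = \mathbb{E}_{\mathbb{Q}}[\phi(S_0,\ldots,S_T)]$. The marginal conditions (\ref{eqn:equiv_mot_margconstr}) translate directly into (\ref{eqn:tensors_mot_unreg_margconstr}) since $P_t^S$ extracts the marginal distribution of $S_t$.

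The translation of the martingale constraint is the subtle step. Unwinding (\ref{eqn:tensors_delta}), the entry of $(P_{t-1,t}(Q)\odot\Delta)\mathbf{1}_{n_S n_X}$ indexed by $\omega_{t-1}$ equals
\[
\sum_{\omega_t} P_{t-1,t}(Q)(\omega_{t-1},\omega_t)\bigl(s(\omega_t^S)-s(\omega_{t-1}^S)\bigr) = \mathbb{E}_{\mathbb{Q}}\!\left[(S_t-S_{t-1})\mathbf{1}_{\{(S_{t-1},X_{t-1}) = (s(\omega_{t-1}^S),x(\omega_{t-1}^X))\}}\right].
\]
Requiring all such row sums to vanish is equivalent to (\ref{eqn:equiv_mot_mtgconstr}) holding $\mathbb{Q}$-almost surely, since off the support of $(S_{t-1},X_{t-1})$ the conditional expectation is irrelevant.

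For the reverse direction, take $Q$ feasible for (\ref{eqn:tensors_mot_unreg}) with finite objective value. The $+\infty$ entries of $I$ force $Q$ to vanish on index tuples violating (\ref{eqn:intro_x}). Instantiate the canonical probability space $(\boldsymbol{\Omega},2^{\boldsymbol{\Omega}},Q)$ (any marginal constraint in $\mathcal{T}$ guarantees $\sum Q=1$) with coordinate processes $S_t(\boldsymbol{\omega}):=s(\omega_t^S)$ and $X_t(\boldsymbol{\omega}):=x(\omega_t^X)$. Since $Q$ is supported on recursion-respecting trajectories, $X$ agrees $Q$-almost surely with the process generated from $S$ via (\ref{eqn:intro_x}), so we may without loss redefine $X$ on a null set to enforce the recursion pointwise. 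Running the identities of the previous two paragraphs in reverse shows that this model is feasible for the restricted version of (\ref{eqn:equiv_mot}) and yields the same objective value, so the two infima coincide.

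The main obstacle is the martingale-constraint translation: one has to recognise that the row-sum of $P_{t-1,t}(Q)\odot\Delta$ is precisely the unnormalised conditional expectation of $S_t-S_{t-1}$ given $(S_{t-1},X_{t-1})$, and that the on-support equivalence is exactly what lets the linear identity (\ref{eqn:tensors_mot_unreg_mtgconstr}) stand in for the almost-sure conditional-expectation identity (\ref{eqn:equiv_mot_mtgconstr}).
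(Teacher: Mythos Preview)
Your proof is correct and takes essentially the same approach as the paper: associate the joint law of $(S,X)$ with the tensor $Q$, verify that the marginal and martingale constraints translate via the projection identities (the paper writes the conditional expectation of $S_t$ and then subtracts $S_{t-1}$, you compute the increment form directly---the computations are equivalent), and use the penalty $I$ to restrict attention to recursion-respecting trajectories. The only cosmetic difference is in the reverse direction: the paper first restricts to feasible $Q$ with $\langle I,Q\rangle=0$ and then argues separately that such $Q$ exist via the forward construction, whereas you observe that any feasible $Q$ with finite objective value must already be supported on recursion-respecting trajectories.
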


\section{Solving the structured problem via regularisation}\label{sec:algo}
We have now arrived at a version of the problem that can be approached computationally and we therefore proceed by deriving a generalisation of the algorithm described in \cref{sec:bg_ot}, modified to also take the martingale constraint into account. As in \cref{sec:bg_ot}, this is done by adding a regularising entropy term\footnote{Our definition of the discrete entropy $D(\mathbf{Q})$ corresponds to using the uniform measure as reference measure in the Kullback-Leibler divergence. Other choices of such measures could be made though --- since the reference measure can be interpreted as the distribution of some reference stochastic process, this would be natural in a context where such a process is available. Martingale couplings that minimise the Kullback-Leibler divergence with respect to this distribution would then be favored. However, using another reference measure would affect the objective function of the regularised problem (\ref{eqn:tensors_mot}) only slightly when the regularisation parameter $\varepsilon$ is small. Indeed, let the tensor associated with a general reference measure in the case of a finite state space be denoted by $\mathbf{R}$. Then it is straightforward to show that the new objective function of the regularised problem can be written $\langle \mathbf{C} - \varepsilon \log \mathbf{R}, \mathbf{Q} \rangle + \varepsilon D(\mathbf{Q})$.} $D(\mathbf{Q}) = \langle \log (\mathbf{Q}) - \mathbf{1}_{n_0 \times \dots \times n_T}, \mathbf{Q} \rangle $, scaled by some small number $\varepsilon > 0$, to the objective function of problem (\ref{eqn:tensors_mot_unreg}). The entropy regularised version of problem (\ref{eqn:equiv_mot}) written on tensor form thus becomes
\begin{subequations} \label{eqn:tensors_mot}
\begin{align}
    \underset{\mathbf{Q} \in \mathbb R_+^{n_0 \times \dots \times n_T}}{\min}  \quad
    &\langle \mathbf{C} , \mathbf{Q} \rangle + \varepsilon D(\mathbf{Q}) && \label{eqn:tensors_mot_onjfcn} \\
    \text{subject to} 
    \quad & \left( P_{t-1, t}(\mathbf{Q}) \odot \Delta_t \right) 1_{n_t}
        =  0_{n_{t-1}},
        && t \in [T] \backslash 0 \label{eqn:tensors_mot_mtgconstr} \\
    & P_t^{\mathcal{S}}(\mathbf{Q}) = m_t, && t \in \mathcal{T} , \label{eqn:tensors_mot_margconstr}
\end{align}
\end{subequations}
where we recall that the tensor $\mathbf{C}$ is of the form $\mathbf{C}(i_0, \dots , i_T) = \sum_{[T] \backslash 0} C_t(i_{t-1}, i_t)$, $(i_0, \dots , i_T) \in \mathcal{I}$,  for matrices $C_t \in \mathbb{R}^{n_{t-1} \times n_t}$. Note that a minimum exists for the above problem, since the objective function is continuous in $\mathbf{Q}$ and the feasible set is compact\footnote{The marginal constraints ensures that the feasible set of problem (\ref{eqn:tensors_mot}) is bounded. Compactness then follows from the continuity of the constraints (\ref{eqn:tensors_mot_mtgconstr}) and (\ref{eqn:tensors_mot_margconstr}).}, and that it is unique since the problem is strictly convex. We now show convergence of optimal solutions of the regularised problem as the regularisation parameter vanishes. The proof can be found in \cref{sec:appendix}. It follows that a solution of problem (\ref{eqn:tensors_mot}) serves as an approximate solution of problem (\ref{eqn:tensors_mot_unreg}) when $\varepsilon$ is small.

\begin{proposition} \label{prop:convergence}
    Let $(\varepsilon_k)_k$ be a decreasing sequence of positive regularisation parameters such that $\lim_{k \rightarrow \infty} \varepsilon_k = 0$ and let $\mathbf{Q}_k$ denote the optimal solution of  the regularised problem (\ref{eqn:tensors_mot}) with $\varepsilon = \varepsilon_k$. Then the sequence $(\mathbf{Q}_k)_k$ of minimisers has at least one convergent subsequence and every limit point of $(\mathbf{Q}_k)_k$ is a minimiser of the linear programming problem (\ref{eqn:tensors_mot_unreg}). Moreover, the value of the regularised problem (\ref{eqn:tensors_mot}) converges to the value of problem (\ref{eqn:tensors_mot_unreg}) as the regularisation parameter vanishes.
\end{proposition}

Next, we will see that strong duality holds for problem (\ref{eqn:tensors_mot}), and that under certain additional assumptions on the given marginals a dual maximiser exists and problem (\ref{eqn:tensors_mot}) can then be solved by considering the dual problem. We will also see that the optimality conditions for the dual variables correspond to relatively simple equations, where explicit expressions can be obtained for the dual variables corresponding to marginal constraints, while for the dual variables corresponding to martingale constraints it turns out that $T$ sets of equations must be solved numerically. This is a generalisation of the strategy used by \cite{deMarch2018} for the bi-marginal MOT problem. We then exploit the structure inherent in the problem to simplify the equations obtained; the resulting algorithm is a multi-marginal version of Sinkhorn's algorithm where some of the variables are updated by use of Newton's method.

\subsection{Strong duality for the regularised problem} \label{sec:algo_prel}
In order to provide the dual of problem (\ref{eqn:tensors_mot}), we introduce some new notation that allows for writing the expressions in a more compact format reducing the number of direct applications of the exponential function later on. Therefore, let $\lambda = \{ \lambda_t \}_{t \in \mathcal{T}}$ and $\gamma = \{ \gamma_t \}_{t \in [T-1]}$ be families of vectors with  $\lambda_t \in \mathbb{R}^{n_t^{\mathcal{S}}}$ and $\gamma_t \in \mathbb{R}^{n_t}$, respectively. Given $\lambda, \gamma$ and the cost matrices $C_t$ for $t \in [T] \backslash 0$, define another family $u^{\lambda} = \{ u_t^{\lambda_t} \}_{t \in \mathcal{T}}$ of vectors with $u_t^{\lambda_t} \in \mathbb{R}^{n_t^{\mathcal{S}}}$ given by
\begin{equation}  \label{eqn:algo_u}
    \begin{aligned}
        u_t^{\lambda_t}(i_t^{\mathcal{S}}) := e^{\lambda_t(i_t^{\mathcal{S}})/ \varepsilon}
    , \quad i_t^{\mathcal{S}} \in \mathcal{I}_t^{\mathcal{S}}
    \end{aligned}
\end{equation}
and let $K_t, G^{\gamma_{t-1}}_t \in \mathbb{R}^{n_{t-1} \times n_t}$ for $t \in [T] \backslash 0$ be two families of matrices, defined by
\begin{equation}  \label{eqn:algo_G}
    K_t(i_{t-1}, i_t) := e^{-C_t(i_{t-1}, i_t)/ \varepsilon}, 
    \quad 
    G^{\gamma_{t-1}}_t(i_{t-1}, i_t):= e^{ \gamma_{t-1}(i_{t-1})\Delta_t(i_{t-1}, i_t) / \varepsilon } ,
\end{equation}
for $i_{t-1} \in \mathcal{I}_{t-1}, i_t \in \mathcal{I}_t$. Let $\mathbf{K}, \mathbf{U}^{\lambda}, \mathbf{G}^{\gamma} \in \mathbb{R}^{n_0 \times \dots \times n_T}$ be tensors defined by
\begin{equation} \label{eqn:algo_tensors_KUG}
\begin{aligned}
    & \mathbf{K}(i_0, \dots , i_T) := \prod_{t \in [T] \backslash 0} K_t(i_{t-1}, i_t),
    && \mathbf{U}^{\lambda}(i_0, \dots , i_T) := \prod_{t \in \mathcal T} u_t^{\lambda_t}(i_t^{\mathcal{S}}),
    \\
    & \mathbf{G}^{\gamma}(i_0, \dots , i_T) := \prod_{t \in [T] \backslash 0} G^{\gamma_{t-1}}_t(i_{t-1}, i_t), 
    && (i_0, \dots , i_T) \in \mathcal{I},
\end{aligned}
\end{equation}
and also define, for $j \in \mathcal{T}$ and $k \in  [T] \backslash 0$,
\begin{equation*}
    \mathbf{U}^{\lambda}_{-j}(i_0, \dots , i_T) := \prod_{t \in \mathcal T \backslash j } u_t^{\lambda_t}(i_t^{\mathcal{S}}),  
    \quad 
    \mathbf{G}^{\gamma}_{-k}(i_0, \dots , i_T) := \prod_{t \in [T] \backslash \{ 0, k \}} G^{\gamma_{t-1}}_t(i_{t-1}, i_t), \quad (i_0, \dots , i_T) \in \mathcal{I}.
\end{equation*}
Using this notation we now state our duality result. In order to show existence of dual optimal solutions, we require that every pair $(\mu_{t_1}, \mu_{t_2})$ of two consecutive given marginals with $t_1 < t_2$ is \emph{irreducible} according to \cite[Definition A.3]{Beiglbock-Juillet2016}; in our discrete setting this is equivalent to $|s_{t_2} - z 1_{n^{\mathcal{S}}_{t_2}}|^{\top}m_{t_2} - |s_{t_1} - z 1_{n^{\mathcal{S}}_{t_1}}|^{\top}m_{t_1}  > 0$ for $z \in \mathcal{S}_{t_1} \cup \mathcal{S}_{t_2}^0$, where $\mathcal{S}_{t_2}^0$ denotes $\mathcal{S}_{t_2}$ without its largest and smallest elments.

\begin{theorem}\label{thm:algo}
    A dual of problem (\ref{eqn:tensors_mot}) is given by 
    \begin{align}\label{eqn:dualproblem} 
        \sup_{\lambda, \gamma } \quad \sum_{t \in \mathcal{T}} \lambda_t^{\top} m_t - \varepsilon \langle \mathbf{K}, \mathbf{U}^{\lambda} \odot \mathbf{G}^{\gamma} \rangle,
    \end{align}
    where $\lambda$ and $\gamma$ are families of vectors of the above given form, and the values of the two problems are equal --- that is, strong duality holds.

    Moreover, suppose that the given marginals $\{\mu_t\}_{t \in \mathcal{T}}$ are such that there exist probability measures $\mu_t$ with support $\mathcal{S}_t$ for $t \in [T] \backslash \mathcal{T}$ such that every pair $(\mu_{t-1}, \mu_t)$ for $t \in [T] \backslash 0$ is irreducible. Then the supremum in the dual problem (\ref{eqn:dualproblem}) is attained. Any dual maximiser $(\lambda^*, \gamma^*)$ is given as the solution of the equations 
        \begin{subequations} \label{eqn:algo_update1}
            \begin{align}
                & u_t^{\lambda_t}
                = 
                m_t \oslash P_t^{\mathcal{S}}(\mathbf{K} \odot \mathbf{U}^{\lambda}_{-t} \odot \mathbf{G}^{\gamma}) , && t \in \mathcal{T}, \label{eqn:algo_updateu1} \\
                & \big(
            P_{t,t+1}( \mathbf{K} \odot \mathbf{U}^{\lambda} \odot \mathbf{G}^{\gamma} ) \odot \Delta_{t+1} \big) 1_{n_{t+1}} = 0_{n_t} , && t \in [T-1], \label{eqn:algo_updategamma1}
            \end{align}
        \end{subequations} 
        and the unique martingale transport plan that minimises problem (\ref{eqn:tensors_mot}) is given by
            \begin{equation} \label{eqn:algo_Popt2}
                \mathbf{Q}^{\lambda^*, \gamma^*} = \mathbf{K} \odot \mathbf{U}^{\lambda^*} \odot \mathbf{G}^{\gamma^*} ,
            \end{equation}
        for any optimal dual variables $(\lambda^*, \gamma^*)$. It assigns a non-zero probability mass to every trajectory that respects \cref{eqn:intro_x}, that is, $\mathbf{Q}^{\lambda^*, \gamma^*}(i_0, \dots , i_T) > 0$ for $(i_0, \dots , i_T) \in \mathcal{I}$ such that $\mathbf{K}(i_0, \dots , i_T) > 0$.
\end{theorem}

We refer to \cref{eqn:algo_update1} as \emph{the dual optimality conditions of problem (\ref{eqn:tensors_mot})}. In order to prove the second part of \cref{thm:algo} we need the following lemma. The proofs of \cref{thm:algo} and \cref{lemma:algo_interiorpt} are deferred to \cref{sec:appendix}.

\begin{lemma} \label{lemma:algo_interiorpt}
    Suppose that the grid $\bigtimes_{t \in [T]} \mathcal{S}_t$ and the given marginals $\{ \mu_t \}_{t \in \mathcal{T}}$ satisfy the assumptions of the second part of \cref{thm:algo}. Then the domain of problem (\ref{eqn:tensors_mot}) has a non-empty relative interior, that is, Slater's condition holds. 
\end{lemma}

We remark that the second part of \cref{thm:algo} is formulated under additional assumptions on the grid and on the given marginals compared to what have been imposed earlier in this article. We emphasise that it is always possible to construct intermediate marginals as described in \cref{thm:algo} whenever each consecutive pair of marginals from the given family $\{ \mu_t \}_{t \in \mathcal{T}}$ is irreducible and when for $t \in [T] \backslash 0$ the convex hull of the set $\mathcal{S}_{t-1}$ is a proper subset of the convex hull of $\mathcal{S}_t$, given that the grid is fine enough. Indeed, one can then do so by forming convex combinations of the given marginals and moving some of the probability masses. In fact, we could in theory relax the irreducibility assumption on the given marginals by noting that this would result in trajectories in the path space $\mathcal{S}_0 \times \dots \times \mathcal{S}_T$ that are of zero-support for all martingale transport plans \cite[pp. 3044--3045]{BeiglbockNutzTouzi2017} and thus simply omit the representation of such trajectories from the problem. This would however be unpractical to implement programmatically, for what reason we have chosen the approach given here.

At this point, we can formulate a high-level method for solving problem (\ref{eqn:tensors_mot}). It is summarised in \cref{algo:highlevel}.
\begin{algorithm}
\begin{algorithmic}[h!] 
\State Initialise:  $u_t^{\lambda_t} \gets 1_{n_t^{\mathcal{S}}}$  for $t \in  [T ]$ 
\State \phantom{Initialise:} $\gamma_t \gets 1_{n_t}$ for $t \in  [ T-1 ]$
\While{not converged} 
    \For{ $t \in \mathcal{T} $ }
            \State $u_t^{\lambda_t} \gets m_t 
                                \oslash 
                                P_t^{\mathcal{S}}
                                \left(
                                     \mathbf{K} \odot \mathbf{U}^{\lambda}_{-t} \odot \mathbf{G}^{\gamma}
                                \right) $
    \EndFor 
    \For{ $t = T-1, \dots, 0$ }
        \State find $\hat{\gamma}_t$ such that $( P_{t, t+1}(\mathbf{K} \odot \mathbf{U}^{\lambda} \odot \mathbf{G}^{\gamma}_{-(t+1)} ) \odot G_{t+1}^{\hat{\gamma}_t} \odot \Delta_{t+1} )1_{n_{t+1}} = 0_{n_t}$
        \State $\gamma_t \gets \hat{\gamma}_t$ 
    \EndFor 
\EndWhile 
\State $\mathbf{Q}^{\lambda, \gamma} \gets \mathbf{K} \odot \mathbf{U}^{\lambda} \odot \mathbf{G}^{\gamma}$ \\
\Return{$\mathbf{Q}^{\lambda, \gamma}$}
\end{algorithmic}
\caption{High-level method for solving problem (\ref{eqn:tensors_mot}).}
\label{algo:highlevel}
\end{algorithm}

\begin{remark}
    We conclude this section with a note on the connection to robust hedging duality. The linear programming problem (\ref{eqn:tensors_mot_unreg}) has a dual
    \begin{subequations} \label{eqn:tensors_mot_unreg_dual}
    \begin{align}
        \underset{\lambda, \gamma }{\max}  \quad
        & \sum_{t \in \mathcal{T}} \lambda_t^{\top} m_t && \label{eqn:tensors_mot_unreg_dual_objfcn} \\
        \text{subject to} 
        \quad & \sum_{t \in [T] \backslash 0} \gamma_{t-1}(i_{t-1}) \Delta_t(i_{t-1}, i_t)
        + \sum_{t \in \mathcal{T}}\lambda_t(i_t^{\mathcal{S}}) 
        \le \mathbf{C}(i_0, \dots, i_{T})  , && (i_0, \dots , i_T) \in \mathcal{I}, \label{eqn:tensors_mot_unreg_dual_subhedge} 
    \end{align}
    \end{subequations}
    that can be viewed as the robust subhedging problem of a derivative whose payoff is given by the tensor $\mathbf{C}$. Note that an optimal solution to problem (\ref{eqn:tensors_mot_unreg_dual}) exists, since primal boundedness and feasibility implies dual feasibility and boundedness via strong linear programming duality. The dual variables $(\lambda, \gamma)$ then has an interpretation as a hedging strategy, where the variables $\lambda$ corresponds to static trading in European options and the variables $\gamma$ corresponds to dynamic trading in the underlying: the element $\lambda_t(j)$, $j \in \mathcal{I}_t^{\mathcal{S}}$, provides the payoff of European options expiring at time $t \in \mathcal{T}$ when the price of the underlying at time $t$ is $s_t(j)$ and the element $\gamma_{t}(k)$ for $k \in \mathcal{I}_t$ provides the position in the underlying held between time $t \in [T-1]$ and time $t+1$, respectively. For the case of continuous marginal constraints, the corresponding duality was established by \cite{DolinskySoner2014} for problems with a trivial initial marginal and a general terminal marginal in a continuous-time setting and by \cite{CheriditoKupperTangpi2017} for problems written on multiple assets with several given marginals in a discrete-time setting. See also \cite{BeiglbockNutzTouzi2017, ChereditoKiiskiPromelSoner2021, GuoTanTouzi2017, HouObloj2018} for further work along these lines; see in particular \cite{Beiglbock-HL-Penkner2013, NutzStebeggTan2020} for problems with multiple given marginals.

    The dual (\ref{eqn:dualproblem}) of the entropy-regularised problem (\ref{eqn:tensors_mot}) has no constraints --- the subhedging constraint (\ref{eqn:tensors_mot_unreg_dual_subhedge}) of problem (\ref{eqn:tensors_mot_unreg_dual}) is incorporated in the objective function of problem (\ref{eqn:dualproblem}). It follows from \cref{prop:convergence} and strong duality that the value of problem (\ref{eqn:dualproblem}) converges to the value of problem (\ref{eqn:tensors_mot_unreg_dual}) as the regularisation parameter $\varepsilon$ vanishes, however, showing convergence of optimal dual variables is less straightforward. It is our belief that an optimal solution of problem (\ref{eqn:dualproblem}) serves as an approximation of an optimal subheding strategy of problem (\ref{eqn:tensors_mot_unreg_dual}) for $\varepsilon > 0$ small, but the matter should be investigated further.     
\end{remark}

\subsection{Exploiting the structure for computing the projections}
The computational bottleneck of \cref{algo:highlevel} is the evaluation of the projections; this is because the number of elements in the tensors grows exponentially in the number of marginals. A key ingredient for efficiently computing the optimal dual variables is therefore to reduce the work required for this part. We will now show how this can be done by exploiting the structure inherent in the problem, hence simplifying equations (\ref{eqn:algo_updateu1}) and (\ref{eqn:algo_updategamma1}).

Our key result is summarised in the following theorem.

\begin{theorem} \label{thm:algo_formulas}
     Let $\lambda=\{\lambda_t\}_{t\in \mathcal T}$ and  $\gamma=\{\gamma_t\}_{t \in [T-1]}$ be given families of vectors with $\lambda_t \in \mathbb{R}^{n_t^{\mathcal{S}}}$ and $\gamma_t \in \mathbb{R}^{n_t}$, respectively. Let $u_t^{\lambda_t}=\exp(\lambda_t/\varepsilon)$ for $t\in {\mathcal T}$ and let $K_t=\exp(-C_t/\varepsilon)$ and $G^{\gamma_{t-1}}_t=\exp(\diag(\gamma_{t-1})\Delta_{t}/\varepsilon)$, where the matrices $C_t$ are as in problem (\ref{eqn:tensors_mot}) and the matrices $\Delta_{t}$ are as given in \cref{eqn:tensors_delta}, for $t \in [T] \backslash 0$. Define two families of vectors, $\hat{\psi} = \{\hat{\psi}_t\}_{t\in [T]}$ and $\psi = \{\psi_t\}_{t\in [T]}$,  via the recursions
    \begin{equation} \label{eqn:algo_formulas_f}
        \begin{aligned}
            \hat{\psi}_t = 
            \begin{cases}
                1_{n_0} , & t = 0\\
                ( K_t \odot G^{\gamma_{t-1}}_t )^{\top} ( \hat{\psi}_{t-1} \odot (1_{n_{t-1}^{\mathcal{X}}} \otimes u_{t-1}^{\lambda_{t-1}})  ), &  (t-1)  \in \mathcal T, \quad t \in [T] \backslash 0  \\
                ( K_t \odot G^{\gamma_{t-1}}_t )^{\top} \hat{\psi}_{t-1}, & (t-1)  \not \in \mathcal T, \quad   t \in [T] \backslash 0
            \end{cases}
        \end{aligned}
    \end{equation}
    and
    \begin{equation} \label{eqn:algo_formulas_b}
            \begin{aligned}
                \psi_t = 
                \begin{cases}
                    1_{n_T} , & t = T\\
                    ( K_{t+1} \odot G^{\gamma_t}_{t+1} ) ( \psi_{t+1} \odot (1_{n_{t+1}^{\mathcal{X}}} \otimes u_{t+1}^{\lambda_{t+1}})  ), &  (t+1) \in \mathcal T, \quad t \in [T-1] \\
                    ( K_{t+1} \odot G^{\gamma_t}_{t+1} ) \psi_{t+1}, & (t+1)  \not \in \mathcal T, \quad t \in [T-1].
                \end{cases}
            \end{aligned}
        \end{equation} 
     Suppose that the grid $\bigtimes_{t \in [T]} \mathcal{S}_t$ and the given marginals $\{ \mu_t \}_{t \in \mathcal{T}}$ satisfy the assumptions of the second part of \cref{thm:algo}. Then $(\lambda, \gamma)$ are optimal variables for the dual problem (\ref{eqn:dualproblem}) if and only if the following equations hold 
    \begin{equation*}
        u_t^{\lambda_t} =  m_t \oslash P^{\mathcal{S}} ( \hat{\psi}_t \odot \psi_t ) , \quad t \in \mathcal{T},
    \end{equation*}
    and  
    \begin{align*}
              &\hat{\psi}_t \odot \big(1_{n_t^{\mathcal{X}}} \otimes u_{t}^{\lambda_{t}} \big) \odot  \big( K_{t+1} \odot G^{\gamma_t}_{t+1} \odot \Delta_{t+1} \big) \big( \psi_{t+1} \odot (1_{n_{t+1}^{\mathcal{X}}} \otimes u_{t+1}^{\lambda_{t+1}}) \big) = 0_{n_t}, 
              & t\in \mathcal{T}, \;
              & t+1 \in  \mathcal T \\
              &\hat{\psi}_t \odot  \big( K_{t+1} \odot G^{\gamma_t}_{t+1} \odot \Delta_{t+1} \big) \big( \psi_{t+1} \odot (1_{n_{t+1}^{\mathcal{X}}} \otimes u_{t+1}^{\lambda_{t+1}}) \big) = 0_{n_t}, 
              & t\notin \mathcal{T}, \; 
              & t+1 \in  \mathcal T \\
             & \hat{\psi}_t \odot (1_{n_t^{\mathcal{X}}} \otimes u_{t}^{\lambda_{t}}) \odot \left( K_{t+1} \odot G^{\gamma_t}_{t+1} \odot \Delta_{t+1} \right) \psi_{t+1}  = 0_{n_t}, 
              & t\in \mathcal{T}, \;
              & t+1 \notin \mathcal T \\
               & \hat{\psi}_t  \odot \left( K_{t+1} \odot G^{\gamma_t}_{t+1} \odot \Delta_{t+1} \right) \psi_{t+1}  = 0_{n_t}, 
              & t\notin \mathcal{T}, \;
              & t+1 \notin \mathcal T.
    \end{align*}
\end{theorem}

\begin{remark}
    According to the above theorem, we can compute dual variables satisfying the optimality conditions from equations (\ref{eqn:algo_updateu1}) and (\ref{eqn:algo_updategamma1}), by simply performing a number of matrix-vector products. This greatly reduces the computational work required, especially since the help vectors $\hat{\psi}$ (resp. $\psi$) can be computed inductively (resp. recursively); therefore,  we do not have to recalculate the full chains for each individual variable update. We also emphasise that according to the above theorem, there is no need to explicitly form the tensors $\mathbf{K}$, $\mathbf{U}^{\lambda}$ and $\mathbf{G}^{\gamma}$.
\end{remark}

In order to prove \cref{thm:algo_formulas}, we will need the following lemma. Its proof can be found in \cref{sec:appendix}.

\begin{lemma} \label{lemma:algo_filipisabell_applied}
    Let $\lambda=\{\lambda_t\}_{t\in \mathcal T}$ and  $\gamma=\{\gamma_t\}_{t\in [T-1]}$ be given families of vectors with $\lambda_t \in \mathbb{R}^{n_t^{\mathcal{S}}}$ and $\gamma_t \in \mathbb{R}^{n_t}$, respectively. Define $u_t^{\lambda_t}$ for $t \in \mathcal{T}$ and $K_t$ and $G^{\gamma_{t-1}}_t$ for $t \in [T] \backslash 0$ as in \cref{thm:algo_formulas} and let $\mathbf{K} (i_0, \dots, i_T) = \sum_{t \in [T] \backslash 0} K_t(i_{t-1}, i_t) $, $\mathbf{U}^{\lambda}(i_0, \dots, i_T ) = \sum_{t \in \mathcal{T}} u_t^{\lambda_t}(i_t^{\mathcal{S}})$ and $\mathbf{G}^{\gamma} (i_0, \dots, i_T) = \sum_{t \in [T] \backslash 0} G^{\gamma_{t-1}}_t(i_{t-1}, i_t) $ for $(i_0, \dots, i_T) \in \mathcal{I}$. Then 
    \begin{equation*}
        \begin{aligned}
            P_t &  \left( \mathbf{K} \odot \mathbf{U}^{\lambda} \odot \mathbf{G}^{\gamma}  \right) 
            =  
             \hat{\psi}_t \odot \bar{u}_t  \odot \psi_t , & t \in [T],
        \end{aligned}
    \end{equation*}
    where $\hat{\psi}$ and $\psi$ are functions of $\lambda$ and $\gamma$ as given in equations (\ref{eqn:algo_formulas_f}) and (\ref{eqn:algo_formulas_b}), respectively, and $\bar{u}_t \in \mathbb{R}^{n_t}$ is given by 
    \begin{align} \label{eqn:algo_formulas_lemma_pf-1}
        \bar{u}_t 
        =
        \begin{cases}
            1_{n_t^{\mathcal{X}}} \otimes u_t^{\lambda_t} , & t \in \mathcal T \\
            1_{n_t}, & t \in [T] \backslash \mathcal T .
        \end{cases}
    \end{align}
    Moreover, for $t_1, t_2 \in [T]$ such that $t_1 < t_2$, 
     \begin{equation*} \label{eqn:algo_filipisabell_applied_1}
     \begin{aligned}
        P_{t_1, t_2} &( \mathbf{K} \odot  \mathbf{U}^{\lambda} \odot  \mathbf{G}^{\gamma} ) \\
        &=  
         \diag( \hat{\psi}_{t_1} \odot \bar{u}_{t_1} )  \big(
         (K_{t_1 +1} \odot G^{\gamma_{t_1}}_{t_1 + 1}) 
         \diag(\bar u_{t_1+1}) 
         \dots 
         (K_{t_2} \odot G^{\gamma_{t_2-1}}_{t_2}) \diag( \bar u_{t_2} ) \big) \diag (\psi_{t_2}) .
    \end{aligned}
    \end{equation*}
\end{lemma}

We now prove \cref{thm:algo_formulas}.

\begin{proof}[Proof of \cref{thm:algo_formulas}]
    \Cref{thm:algo} states that the (representation of) the optimal dual variables $u^{\lambda}$ and $\gamma$ should satisfy the optimality conditions (\ref{eqn:algo_updateu1}) and (\ref{eqn:algo_updategamma1}). Start by considering the former, whose right-hand side is equal to
    \begin{equation} \label{eqn:algo_formulas_pf1}
        ( m_t \odot u_t^{\lambda_t} ) \oslash P_t^{\mathcal{S}} ( \mathbf{K} \odot  \mathbf{U}^{\lambda} \odot  \mathbf{G}^{\gamma}), \quad t \in \mathcal T .
    \end{equation}
   We must find an expression for the projection of the tensor $ \mathbf{K} \odot  \mathbf{U}^{\lambda} \odot  \mathbf{G}^{\gamma}$. Fix some $t \in \mathcal T$ and start by recalling that $P_t^{\mathcal{S}} (  \mathbf{K} \odot  \mathbf{U}^{\lambda} \odot  \mathbf{G}^{\gamma} ) =  ( P^{\mathcal{S}_t} \circ P_t ) (  \mathbf{K} \odot  \mathbf{U}^{\lambda} \odot  \mathbf{G}^{\gamma} )$, where \cref{lemma:algo_filipisabell_applied} then yields an expression for $P_t( \mathbf{K} \odot  \mathbf{U}^{\lambda} \odot  \mathbf{G}^{\gamma})$. Therefore
    \begin{equation} \label{eqn:algo_formulas_pf2}
        \begin{aligned}
            P_t^{\mathcal{S}} (  \mathbf{K} \odot  \mathbf{U}^{\lambda} \odot  \mathbf{G}^{\gamma} ) 
            =
            P^{\mathcal{S}_t} \big( \hat{\psi}_t \odot ( 1_{n_t^{\mathcal{X}}} \otimes u_t^{\lambda_t} ) \odot \psi_t \big)
            = P^{\mathcal{S}_t} ( \hat{\psi}_t \odot \psi_t ) \odot  u_t^{\lambda_t} 
        \end{aligned}
    \end{equation}
    where the second equality follows from the fact that $( 1_{n_t^{\mathcal{X}}} \otimes u_t^{\lambda_t} )(i_t) = u_t^{\lambda_t}(i_t^{\mathcal{S}})$ for every $i_t \in \mathcal{I}_t$ given by the order (\ref{eqn:tensors_idxorder}). Inserting \cref{eqn:algo_formulas_pf2} into \cref{eqn:algo_formulas_pf1} yields the expression stated in the theorem.

    Moving on to \cref{eqn:algo_updategamma1}, we apply \cref{lemma:algo_filipisabell_applied} to obtain an expression for the bi-marginal projections, $ P_{t, t+1} (  \mathbf{K} \odot  \mathbf{U}^{\lambda} \odot  \mathbf{G}^{\gamma} ) =  \diag( \hat{\psi}_{t} \odot \bar{u}_t )  (K_{t+1} \odot G^{\gamma_t}_{t+1}) \diag(\psi_{t+1} \odot \bar u_{t+1})$ for $t \in [T-1]$.
     The optimality condition from \cref{eqn:algo_updategamma1} thus becomes
    \begin{equation*}
        \begin{aligned}
            \hat{\psi}_{t} 
        \odot \bar u_{t}  
        \odot ( K_{t+1} \odot G^{\gamma_t}_{t+1} \odot \Delta_{t+1} ) (\psi_{t+1} \odot \bar u_{t+1})
        &=
        0_{n_t} , \quad t \in [T-1] .
        \end{aligned}
    \end{equation*}
    Inserting the definition of $\bar{u}_t$ and $\bar{u}_{t+1}$ from \cref{eqn:algo_filipisabell_applied_1} into the above expression proves the second part of the assertion and thus completes the proof.  
\end{proof}

We end this section by noting that \cref{lemma:algo_filipisabell_applied} provides an expression that allows for fast computation of the sub-transport between any two marginals, without explicitly having to form the full transport $ \mathbf{Q}$. One type of sub-transport that is of particular interest is transportation between adjacent marginals; later we will see that such sub-transports allow for recovering the price without explicitly forming the full transport $ \mathbf{Q}$.
\begin{corollary} \label{cor:algo_price}
     Let $\lambda=\{\lambda_t\}_{t\in \mathcal T}$ and  $\gamma=\{\gamma_t\}_{t\in [T-1]}$ be given families of vectors with $\lambda_t \in \mathbb{R}^{n_t^{\mathcal{S}}}$ and $\gamma_t \in \mathbb{R}^{n_t}$, respectively. Define $u_t^{\lambda_t}$ for $t \in \mathcal{T}$ and $K_t$ and $G^{\gamma_{t-1}}_t$ for $t \in [T] \backslash 0$ as in \cref{thm:algo_formulas} and let a corresponding transport plan $\mathbf{Q}^{\lambda, \gamma}$ be given by \cref{eqn:algo_Popt2}. Then the bi-marginal sub-transport $P_{t-1, t}( \mathbf{Q}^{\lambda, \gamma})$ for $t \in [T] \backslash 0$ is given by 
    \begin{align*}
         \begin{cases}
         \begin{aligned}
             &\diag \big( \hat{\psi}_{t-1} \odot (1_{n_{t-1}^{\mathcal{X}}} \otimes u_{t-1}^{\lambda_{t-1}}) \big)
             ( K_t \odot G^{\gamma_{t-1}}_t )
             \diag\big( \psi_{t} \odot (1_{n_t^{\mathcal{X}}} \otimes  u_{t}^{\lambda_{t}} ) \big), 
             & t-1  \in \mathcal{T} , \quad & t \in \mathcal{T} \\
             &\diag \big( \hat{\psi}_{t-1} \odot (1_{n_{t-1}^{\mathcal{X}}} \otimes u_{t-1}^{\lambda_{t-1}}) \big)
             ( K_t \odot G^{\gamma_{t-1}}_t )
             \diag ( \psi_{t}),
             & t-1 \in \mathcal T, \quad & t \notin  \mathcal T \\
             &\diag ( \hat{\psi}_{t-1}  )
             ( K_t \odot G^{\gamma_{t-1}}_t )
             \diag\big( \psi_{t} \odot ( 1_{n_t^{\mathcal{X}}} \otimes  u_{t}^{\lambda_{t}} )\big),
             & t-1 \notin \mathcal T, \quad & t \in \mathcal T\\
             &\diag ( \hat{\psi}_{t-1} )
             ( K_t \odot G^{\gamma_{t-1}}_t )
             \diag ( \psi_{t} ), 
             & t-1 \notin \mathcal T, \quad &t \notin \mathcal T ,
            \end{aligned}
         \end{cases}
    \end{align*}
    where $\hat{\psi}$ and $\psi$ are given as functions of $\lambda$ and $\gamma$ by equations (\ref{eqn:algo_formulas_f}) and (\ref{eqn:algo_formulas_b}).

\end{corollary}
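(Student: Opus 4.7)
The statement is an almost immediate corollary of \cref{lemma:algo_filipisabell_applied}, so the proof plan is short. First I would note that by \cref{eqn:algo_Popt2} the optimal primal tensor satisfies $Q_{\lambda,\gamma} = K \odot U_\lambda \odot G_\gamma$, so that computing $P_{t-1,t}(Q_{\lambda,\gamma})$ is the same as computing the bi-marginal projection of $K \odot U_\lambda \odot G_\gamma$ onto coordinates $t-1$ and $t$. Specialising the general bi-marginal formula stated in \cref{lemma:algo_filipisabell_applied} to the choice $t_1 = t-1$, $t_2 = t$, the intermediate product collapses to a single factor and one obtains
\begin{equation*}
    P_{t-1,t}(Q_{\lambda,\gamma})
    = \diag\bigl(\hat{\psi}_{t-1} \odot \bar{u}_{t-1}\bigr) \bigl( K_t \odot G_t \bigr) \diag\bigl( \bar{u}_t \odot \psi_t \bigr).
\end{equation*}

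Second, I would read off the four cases by substituting the definition of $\bar{u}_r$ from \cref{eqn:algo_formulas_lemma_pf-1}, namely $\bar{u}_r = \mathbf{1}_{n_X} \otimes u_r$ whenever $r \in \mathcal{T}$ and $\bar{u}_r = \mathbf{1}_{n_S n_X}$ otherwise. The four combinations of membership of $t-1$ and $t$ in $\mathcal{T}$ correspond exactly to the four cases displayed in the statement; plugging in and using that $\diag(\mathbf{1}_{n_S n_X}) = I$ yields the claimed expressions.

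There is really no obstacle here: all the heavy lifting (the inductive and recursive definitions of $\hat{\psi}$ and $\psi$, and the identification of the structured projection) has already been carried out in \cref{lemma:algo_filipisabell_applied}. The only slightly delicate point is bookkeeping: one must remember that in the ordering \eqref{eqn:tensors_idxorder} used throughout \cref{sec:equiv}, the vector $\mathbf{1}_{n_X} \otimes u_r$ correctly lifts $u_r(\omega_r^S)$ to the product index space $\{1,\dots,n_S\}\times\{1,\dots,n_X\}$, which is precisely what makes $\bar{u}_r$ the right object in both $\mathcal{T}$-cases. Once this is noted, the proof is simply a case distinction and no further computation is required.
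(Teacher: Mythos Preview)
Your proposal is correct and follows exactly the paper's own argument: invoke \cref{eqn:algo_Popt2} to write $P_{t-1,t}(Q_{\lambda,\gamma})=P_{t-1,t}(K\odot U_\lambda\odot G_\gamma)$, apply \cref{lemma:algo_filipisabell_applied} with $t_1=t-1$, $t_2=t$ to obtain $\diag(\hat\psi_{t-1}\odot\bar u_{t-1})(K_t\odot G_t)\diag(\psi_t\odot\bar u_t)$, and then split into the four cases via \cref{eqn:algo_formulas_lemma_pf-1}. There is nothing to add.
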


\begin{proof}
    Fix some $t \in [T] \backslash 0$ and note that by \cref{eqn:algo_Popt2}, $P_{t-1, t}( \mathbf{Q}^{\lambda, \gamma}) = P_{t-1, t}( \mathbf{K} \odot  \mathbf{U}^{\lambda} \odot  \mathbf{G}^{\gamma})$, where $ P_{t-1,t}( \mathbf{K} \odot  \mathbf{U}^{\lambda} \odot  \mathbf{G}^{\gamma}) = \diag ( \hat{\psi}_{t-1} \odot \bar u_{t-1} )( K_t \odot G^{\gamma_{t-1}}_t )\diag ( \psi_{t} \odot \bar u_{t}^{\lambda_{t}} )$
    follows from application of \cref{lemma:algo_filipisabell_applied} with $t_1 = t-1$ and $t_2 = t$. Here $\bar u_{t-1}$ and $\bar u_t$ are defined as in \cref{eqn:algo_formulas_lemma_pf-1}. This proves the assertion.  
\end{proof}

\subsection{Summary of the full algorithm}
We have now arrived at a form of the optimality conditions that allows for efficient computation of the optimal dual variables and we can therefore construct a coordinate dual ascent method by cyclically fixing all but one variable and optimise over the remaining variable by selecting it so that it satisfies the corresponding optimality condition. We then move on to the next variable until all variables have been optimised over; one such cycle defines one iteration in the method.

Each iteration starts with the update of the variables $u^\lambda$ corresponding to the marginal constraints (\ref{eqn:tensors_mot_margconstr}). The closed-form formula for doing so is given by \cref{thm:algo_formulas} as
\begin{equation*}
    u_t^{\lambda_t} \leftarrow m_t \oslash P^{\mathcal{S}} ( \hat{\psi}_t \odot \psi_t ) , \quad t \in \mathcal{T}.
\end{equation*}
From the above we note that is suffices to manipulate vectors in order to update $u^\lambda$, something that enables fast computation.

We then proceed to updating the dual variables $\gamma$ representing the martingale constraints (\ref{eqn:tensors_mot_mtgconstr}). This time, no closed-form formula exists for $\gamma$ --- the optimality condition thus has to be solved numerically. \Cref{thm:algo_formulas} provides a form of the equations that allows for efficiently doing so by application of Newton's method. In order to do so, define a family of functions $\alpha_t : \mathbb{R}^{n_{t-1}} \rightarrow \mathbb{R}^{n_{t-1}}$ for $t \in [T] \backslash 0$ via
\begin{equation*}
    \alpha_t(z) 
    :=
    \begin{cases}
        \big( K_t \odot \exp( \diag(z) \Delta_t / \varepsilon)\odot \Delta_t \big) \big(\psi_t \odot (1_{n_t^{\mathcal{X}}} \otimes u_t^{\lambda_t})\big) , & t \in  \mathcal T\\
        \big( K_t \odot \exp( \diag(z) \Delta_t / \varepsilon) \odot \Delta_t \big) \psi_t , & t \notin   \mathcal T.
    \end{cases} 
\end{equation*}
Then \cref{eqn:algo_updategamma1} can be written 
\begin{align*}
    \begin{cases}
     \hat{\psi}_{t} \odot (1_{n_t^{\mathcal{X}}} \otimes u_{t}^{\lambda_{t}}) \odot \alpha_{t+1}(\gamma_{t}) = 0_{n_t}, & t \in [T-1] \cap \mathcal{T} \\
      \hat{\psi}_{t} \odot  \alpha_{t+1}(\gamma_{t}) = 0_{n_t}, & t \in [T-1] \backslash \mathcal{T}.
      \end{cases}
\end{align*}
Application of Newton's method to the above equation requires iterating according to 
\begin{equation} \label{eqn:algo_newton1}
    \gamma_{t}^{(k+1)} 
    \leftarrow 
    \gamma_{t}^{(k)} - \theta^{(k)} \odot J^{\alpha}_{t+1}(\gamma_{t}^{(k)})^{-1}\alpha_{t+1}(\gamma_{t}^{(k)} ), \quad k = 1, 2, \dots ,
\end{equation}
where $\theta^{(k)} \in \mathbb{R}^{n_t}$ is a vector containing the respective step lengths to be used for each variable, such that $\theta^{(k)}(j) \in [0,1]$ for $j \in \mathcal{I}_t$. It is determined by vectorised line search. In the above, $J^{\alpha}_{t+1}$ denotes the Jacobian matrix associated with $\alpha_{t+1}$; note that $\alpha_{t+1}$ is a vector-valued function whose $j^{\text{th}}$ component is independent of any other components of $\gamma_{t}$ than $\gamma_{t}(j)$, $j \in \mathcal{I}_t$. There are no cross dependencies. Therefore, the Jacobian is a diagonal matrix and given by $J^{\alpha}_{t}(\gamma_{t-1}) = \diag(\beta_{t}(\gamma_{t-1}))$, where $\beta_t : \mathbb{R}^{n_{t-1}} \rightarrow \mathbb{R}^{n_{t-1}}$ for $t \in [T] \backslash 0$ is defined as 
\begin{equation*}
     \beta_t(z) 
    :=
    \begin{cases}
        \varepsilon^{-1} \big( K_t \odot \exp( \diag(z) \Delta_t / \varepsilon) \odot \Delta_t \odot \Delta_t \big) \big(\psi_t \odot (1_{n_t^{\mathcal{X}}} \otimes u_t^{\lambda_t})\big) , & t \in  \mathcal T \\
        \varepsilon^{-1} \big( K_t \odot \exp( \diag(z) \Delta_t / \varepsilon) \odot \Delta_t \odot \Delta_t \big) \psi_t , &  t \notin  \mathcal T .
    \end{cases} 
\end{equation*}
The iterative scheme (\ref{eqn:algo_newton1}) can therefore be simplified to 
\begin{equation*} \label{eqn:algo_newton2}
    \gamma_{t}^{(k+1)} 
    \leftarrow 
    \gamma_{t}^{(k)} 
    - 
     \theta^{(k)} \odot
     \alpha_{t+1}(\gamma_{t}^{(k)})
    \oslash \beta_{t+1}(\gamma_{t}^{(k)})
    , \quad k = 1, 2, \dots 
\end{equation*}
The inversion of the Jacobian and the matrix-vector multiplication used in (\ref{eqn:algo_newton1}) are thus avoided and replaced by the elementwise division of one vector with another, allowing for efficient evaluation of the dual optimality condition (\ref{eqn:algo_updategamma1}). Newton's method is therefore particularly well suited for this step in the method.

Putting things together, we have completed the derivation of our algorithm --- see \cref{algo:main} for the full method. Convergence of the algorithm can be shown given that there, by \cref{lemma:algo_interiorpt}, exists a feasible solution that is positive for all variables 
(cf. \cite[Theorem 4.1 and Proposition 1]{HaaslerRinghChenKarlsson2023}).  
Convergence of the coordinate dual ascent then follows from  \cite[Theorem 2 and Section 5]{LuoTsengQ1992}.

\begin{algorithm}
\begin{algorithmic}[] 
\State Given: matrices $\{K_t \}_{t\in [T] \backslash 0}$ and $\{\Delta_t\}_{t \in [T] \backslash 0}$, index set $\mathcal{T}$, vectors $\{ m_t \}_{t \in \mathcal{T}}$, scalar $\varepsilon > 0$
\State Initialise: $u_t^{\lambda_t} \gets 1_{n_t^{\mathcal{S}}}$  for $t \in [T]$ 
\State \phantom{Initialise:} $\gamma_t \gets 1_{n_t}$ for $t \in [T-1]$ 
\State \phantom{Initialise:} $G^{\gamma_{t-1}}_t \gets \exp( \diag(\gamma_{t-1}) \Delta_t / \varepsilon )$ for $t \in [T] \backslash 0$
\State \phantom{Initialise:} $\hat{\psi}_0  \gets 1_{n_0}$ 
\State \phantom{Initialise:} $\psi_T \gets 1_{n_T}$ 
\State  \phantom{Initialise:} $\psi_t \gets (K_{t+1} \odot G_{t+1})( \psi_{t+1} \odot ( 1_{n_{t+1}^{\mathcal{X}}} \otimes u_{t+1}^{\lambda_{t+1}})  )$ for $t \in [T-1]$
\While{Sinkhorn not converged} \\
    \emph{   \#Update marginal constraints:}
    \If{$0 \in \mathcal{T}$}
        \State $u^{\lambda_0}_0 \gets m_0 \oslash P^{\mathcal{S}_0}(\psi_{0})$ 
    \EndIf
    \For{ $t = 1, \dots , T-1 $ }
        \State $\hat{\psi}_t \gets (K_{t} \odot G^{\gamma_{t-1}}_{t})^{\top}( \hat{\psi}_{t-1} \odot ( 1_{n_{t-1}^{\mathcal{X}}} \otimes u_{t-1}^{\lambda_{t-1}}))$
        \If{$t \in \mathcal{T}$}
            \State $u_t^{\lambda_t} \gets m_t 
                                \oslash 
                                P^{\mathcal{S}_t}
                                (
                                     \hat{\psi}_t \odot \psi_t 
                                ) $
        \EndIf
    \EndFor 
    \State $\hat{\psi}_T \gets (K_{T} \odot G^{\gamma_{T-1}}_{T})^{\top}( \hat{\psi}_{T-1} \odot (1_{n_{T-1}^{\mathcal{X}}} \otimes u^{\lambda_{T-1}}_{T-1}))$
    \If{$T \in \mathcal{T}$}
        \State $u^{\lambda_T}_T \gets m_T \oslash P^{\mathcal{S}_T}( \hat{\psi}_T ) $ 
    \EndIf \\
    \emph{   \#Update martingale constraints:}
    \For{ $t = T, \dots, 1$ }
        \If{$t \neq T$}
            \State $\psi_t \gets (K_{t+1} \odot G^{\gamma_t}_{t+1})( \psi_{t+1} \odot (1_{n_{t+1}^{\mathcal{X}}} \otimes u_{t+1}^{\lambda_{t+1}}))$
        \EndIf
        \State $k \gets  0$
        \While{Newton not converged}
            \State $k \gets  k + 1$
            \State $\theta^{(k)} \gets 1$
            \State $\gamma_{t-1}^{(k+1)} \gets \gamma_{t-1}^{(k)} -   \theta^{(k)} \odot \alpha_t(\gamma_{t-1}^{(k)}) \oslash  \beta_t(\gamma_{t-1}^{(k)}) $
            \While{$\gamma_{t-1}^{(k+1)} > \gamma_{t-1}^{(k)}$}
                \State $\theta^{(k)} \gets \theta^{(k)} / 2$
                \State $\gamma_{t-1}^{(k+1)} \gets \gamma_{t-1}^{(k)} -   \theta^{(k)} \odot \alpha_t(\gamma_{t-1}^{(k)}) \oslash  \beta_t(\gamma_{t-1}^{(k)}) $
            \EndWhile
            \State $G^{\gamma_{t-1}}_{t} \gets \exp( \diag (\gamma_{t-1}^{(k+1)})\Delta_t / \varepsilon )$
        \EndWhile
    \EndFor 
    \State $\psi_{0} \gets (K_1 \odot G^{\gamma_0}_1)( \psi_1 \odot (1_{n_1^{\mathcal{X}}} \otimes u^{\lambda_1}_1)  )$
\EndWhile \\
\Return{$u$, $\hat{\psi}$, $\psi$}, $\{G^{\gamma_{t-1}}_t\}_{t=1}^T$
\end{algorithmic}
\caption{Method for solving problem (\ref{eqn:tensors_mot}). }
\label{algo:main}
\end{algorithm}

\begin{remark}
Storing the tensor $\mathbf{Q}$ quickly becomes impossible as the number of marginals grows and consequently we cannot access the full transport $\mathbf{Q}$ in practice. Fortunately, it is often enough to obtain a set of sub-transports of the form  $P_{t_1, t_2}(\mathbf{Q})$, where $t_1, t_2 \in [T]$ are such that $t_1 < t_2$. For example, the corresponding problem value --- the lower bound on the price --- can be obtained without explicitly forming and storing the full tensor. See this by recalling that $\langle \mathbf{\Phi}, \mathbf{Q} \rangle = \sum_{t \in [T] \backslash 0} \langle \Phi_t , P_{t-1,t}(\mathbf{Q}) \rangle$,
where the sub-transports $P_{t-1,t}(\mathbf{Q})$ for $t \in [T] \backslash 0$ are given by \cref{cor:algo_price}.
\end{remark}

\section{Computational results} \label{sec:sol}
In this section, we apply our computational framework to solve a number of MOT problems: First, we consider problems for which the optimal solution is known analytically; comparison of the computed solution with the known solution allows for verifying that the method works as intended. We then conclude by solving an MOT problem where the optimal solution is not known. In all examples the regularisation parameter $\varepsilon$ was chosen by trial and error to be as small as possible.

\subsection{Monotone transport}
A well-known example from classical OT theory is the monotone transport plan, which is concentrated on the graph of an increasing function (see, for example, \cite[p. 75]{Villani2003}). It is also known as the Hoeffding-Fréchet coupling and solves the upper bound bi-marginal OT problem when the cost function $\phi : \mathbb{R} \times \mathbb{R} \rightarrow \mathbb{R}$ satisfies certain conditions\footnote{The cost function $\phi$ should be upper semi-continuous and of linear growth.} --- in particular, the cross derivative $\partial_{s_0 s_1}\phi( s_0 , s_1)$ should exist and be strictly positive  \cite[Theorem 2.2]{HL-T2016}. Similarly, the anti-monotone transport plan is concentrated on the graph of a decreasing function; it solves the corresponding lower bound OT problem. There exists a martingale analogue to the monotone coupling; the \emph{left-monotone transport plan} was introduced in \cite[Definition 1.4]{Beiglbock-Juillet2016} and shown to solve the bi-marginal MOT problem for two types of cost functions \cite[Theorems 6.1 and 6.3]{Beiglbock-Juillet2016}. It is concentrated on the graphs of two measurable functions and exhibits a V-shape when visualised in the plane. It was shown in \cite[Theorem 5.1]{HL-T2016} that the left-monotone coupling solves the upper bound bi-marginal MOT problem, given that the derivative $\partial_{s_0 s_1 s_1}\phi(s_0, s_1)$ exists and is strictly positive. Since the monotone couplings are visually easily recognised, they serve as a clear first demonstrating example.

\begin{example} [Robust pricing of a variance swap] \label{ex:sol_monotone}
    Let $\phi(s_0, s_1) = ( \log(s_1/s_0))^2$ and let the initial marginal $\mu_0$ be uniform with support on $n_0^{\mathcal{S}} = 600$ atoms, evenly distributed over the interval $[1.25,1.75]$. Similarly, let the terminal marginal $\mu_1$ be uniform with support on $n_1^{\mathcal{S}} = 1200$ atoms, evenly distributed over the interval $[1,2]$. It is easily verified that $\partial_{s_0 s_1 s_1}\phi > 0$ and $\partial_{s_0 s_1}\phi < 0$ on the supports of $\mu_0$ and $\mu_1$ and that the marginals are in convex order; the optimal solutions for the corresponding upper-bound MOT and OT problems are thus the left-monotone and the anti-monotone couplings, respectively. 
    \begin{figure}
    \centering
    \begin{subfigure}[t]{0.45\textwidth}
        \centering
        \includegraphics[width=6cm]{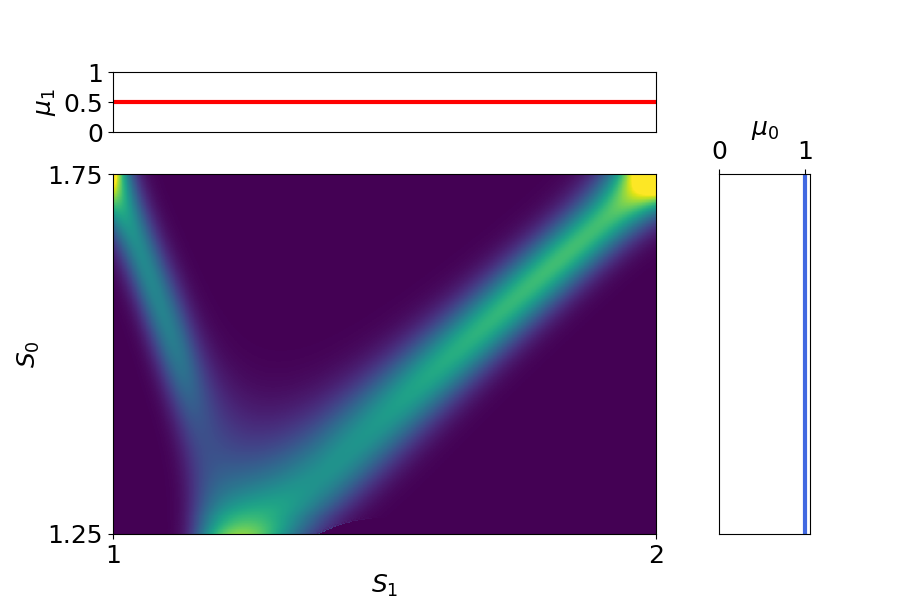}
        \caption{Optimal martingale coupling.}
        \label{fig:sol_monotone_mtg}
    \end{subfigure}\hspace{1cm}
    \begin{subfigure}[t]{0.45\textwidth}
        \centering
        \includegraphics[width=6cm]{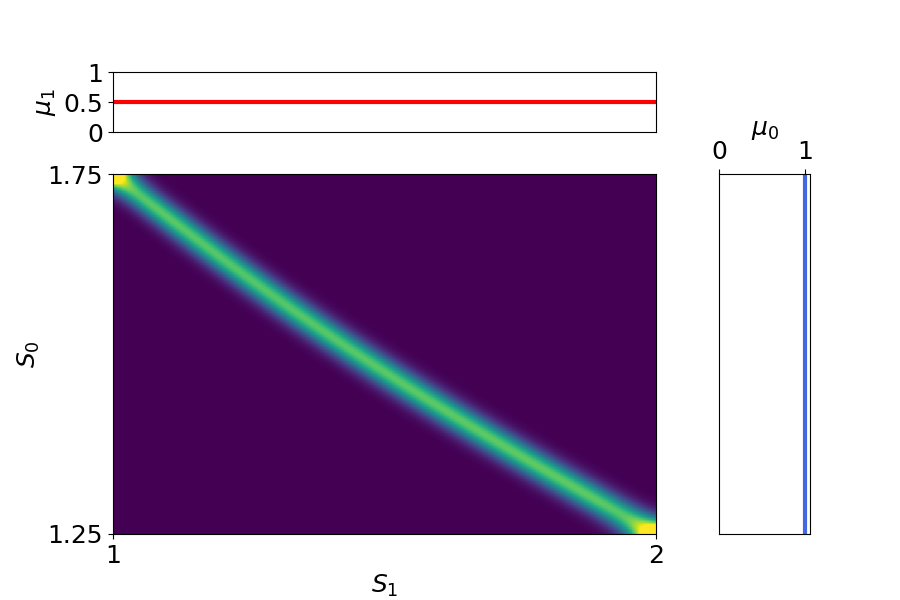}
        \caption{Optimal non-martingale coupling.}
        \label{fig:sol_monotone_nonmtg}
    \end{subfigure}
    \caption{Optimal solutions to the upper bound bi-marginal MOT and OT problems corresponding to the payoff of a variance swap when the given marginals are uniform, as in \cref{ex:sol_monotone}. The computed solution of the MOT problem is displayed in \cref{fig:sol_monotone_mtg}; it exhibits the shape of the left-monotone transport. For comparison, the computed solution of the corresponding OT problem is displayed in \cref{fig:sol_monotone_nonmtg}; it displays the shape of the anti-monotone coupling.}
    \label{fig:sol_monotone}
\end{figure}
    \Cref{fig:sol_monotone} shows the computed optimal solutions obtained for $\varepsilon = 4.5 \cdot 10^{-4}$. A dark blue colour indicates a close-to-zero probability for that specific transport to occur, whilst a yellow or green colour indicates a non-zero probability that this transport occurs. Note the V-shape of the optimal martingale coupling. 
\end{example}

\subsection{Late and early transports} \label{sec:sol_lateearly}
Next we consider a multi-marginal ($T> 1$) example where the payoff function is of the form
    \begin{equation} \label{eqn:sol_lateearly_fcn}
        \phi(s_0, \dots, s_T) = \frac{1}{T+1} \sum_{t \in [T]} f(s_t), \quad (s_0, \dots s_T) \in \mathcal{S}_0 \times \dots \times \mathcal{S}_T,
    \end{equation}
for some convex function $f:  \mathbb{R} \rightarrow \mathbb{R}$, and where the initial and terminal marginals $\mu_0$ and $\mu_T$ are the only given marginals. As always, it is assumed that $\mu_0$ and $\mu_T$ are in convex order. Two particular martingale couplings are of interest here, the coupling that corresponds to not moving at all before $t = T-1$, then performing the full transport between $t=T-1$ and $t=T$, and the coupling that corresponds to performing the full transport between $t=0$ and $t=1$, then keeping the process constant until $t=T$. We refer to these couplings as late and early transport, respectively. The below result follows immediately from linearity of the expected value, since for any model $(\Omega, \mathcal{F}, \mathbb{Q}, S)$ we have that $ \mathbb{E}_{\mathbb{Q}} [ f(S_T)]
        \ge
        \mathbb{E}_{\mathbb{Q}} [ f(S_{T-1})]
        \ge 
        \dots
        \ge 
        \mathbb{E}_{\mathbb{Q}} [ f(S_1) ]
        \ge
        \mathbb{E}_{\mathbb{Q}} [ f(S_0)]$,
by convexity of $f$. See \cite{Stebegg2014} for a similar result.

\begin{proposition} \label{prop:sol_lateearly}
    Let $T > 1$ and consider a payoff function $\phi : \mathbb{R}^{T+1} \rightarrow \mathbb{R}$ of the form (\ref{eqn:sol_lateearly_fcn}). Let $\mu_0$ and $\mu_T$ be two given probability measures on $(\mathbb{R}, \mathcal B(\mathbb R))$ and assume that they are in convex order.  Then the optimal solution of the associated lower bound MOT problem corresponds to the late transport defined above, whilst the early transport solves the upper bound MOT problem.
\end{proposition}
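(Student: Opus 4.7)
The plan is to reduce the multi-marginal problem to a chain of one-marginal expectations via linearity and then to bound each term using the martingale property combined with convexity of $f$. By linearity of the expectation, for any feasible model $(\Omega,\mathcal{F},\mathbb{Q},S)$ we have
\begin{equation*}
    \mathbb{E}_{\mathbb{Q}}[\phi(S_0,\dots,S_T)]
    =
    \frac{1}{T+1}\sum_{t\in[T]}\mathbb{E}_{\mathbb{Q}}[f(S_t)].
\end{equation*}
Because $\mu_0$ and $\mu_T$ are fixed, the terms at $t=0$ and $t=T$ are the same constants $\mathbb{E}_{\mu_0}[f]$ and $\mathbb{E}_{\mu_T}[f]$ for every feasible model, and the optimisation therefore only concerns the intermediate terms $\mathbb{E}_{\mathbb{Q}}[f(S_t)]$ for $t=1,\dots,T-1$.

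The key step is to invoke the conditional Jensen inequality: since $f$ is convex and $S$ is a $\mathbb{Q}$-martingale, $\mathbb{E}_{\mathbb{Q}}[f(S_t)\mid\mathcal{F}_{t-1}]\geq f(\mathbb{E}_{\mathbb{Q}}[S_t\mid\mathcal{F}_{t-1}])=f(S_{t-1})$, and taking expectations yields the chain $\mathbb{E}_{\mu_0}[f]\le\mathbb{E}_{\mathbb{Q}}[f(S_1)]\le\dots\le\mathbb{E}_{\mathbb{Q}}[f(S_{T-1})]\le\mathbb{E}_{\mu_T}[f]$ recalled in the paragraph preceding the proposition. This immediately gives the universal bounds
\begin{equation*}
    \frac{T\mathbb{E}_{\mu_0}[f]+\mathbb{E}_{\mu_T}[f]}{T+1}
    \;\le\;
    \mathbb{E}_{\mathbb{Q}}[\phi(S_0,\dots,S_T)]
    \;\le\;
    \frac{\mathbb{E}_{\mu_0}[f]+T\mathbb{E}_{\mu_T}[f]}{T+1},
\end{equation*}
valid for every feasible model.

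It remains to exhibit feasible models attaining each of these bounds. By Strassen's theorem there exists a martingale coupling $\pi$ of $(\mu_0,\mu_T)$. For the lower bound I would construct the \emph{late transport}: on a suitable probability space, draw $(S_0,S_T)\sim\pi$ and set $S_1=\dots=S_{T-1}:=S_0$; this process is trivially a martingale up to time $T-1$, and the last step is a martingale by construction of $\pi$. For it, each intermediate term equals $\mathbb{E}_{\mu_0}[f]$, so the lower bound is attained. Symmetrically, the \emph{early transport} is obtained by drawing $(S_0,S_T)\sim\pi$ and setting $S_1=S_2=\dots=S_T$, which is a martingale and makes every intermediate term equal $\mathbb{E}_{\mu_T}[f]$, thereby attaining the upper bound.

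I do not expect any serious obstacle: the argument hinges only on linearity and Jensen's inequality, and feasibility of the optimisers follows directly from Strassen. The only minor care needed is to verify that the constructed late/early processes are indeed martingales in their own filtrations (not just with respect to the augmented filtration containing $S_T$), which is immediate since each increment is either zero or is the single increment of the martingale coupling $\pi$.
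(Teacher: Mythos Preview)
Your proof is correct and follows essentially the same approach as the paper: linearity of expectation reduces the problem to the individual terms $\mathbb{E}_{\mathbb{Q}}[f(S_t)]$, and conditional Jensen combined with the martingale property gives the monotone chain of expectations that bounds every feasible model. The paper in fact gives only this chain of inequalities as justification; your write-up is more complete in that you explicitly derive the sharp numerical bounds, invoke Strassen to produce a martingale coupling $\pi$ of $(\mu_0,\mu_T)$, construct the late and early transports from it, and verify the martingale property in the natural filtration.
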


\begin{example} \label{ex:sol_lateearly}
    Let $f(x) = x^2$, $T = 50$ and let the two given marginal distributions $\mu_0$ and $\mu_{50}$ be as in \cref{fig:sol_lateearly}; they are in convex order. Since the payoff function is of the form $\phi(s_0, \dots , s_T) = \phi_1(s_0, s_1) + \sum_{t \in [T] \backslash \{0, 1\} } \phi_t(s_t)$  for 
    \begin{equation*}
        \begin{aligned}
            \phi_1(s_0, s_1) &= \frac{1}{T+1} \left( f(s_0) + f(s_1) \right) , && (s_0 , s_1) \in \mathcal S_0 \times \mathcal{S}_1, \\
            \phi_t(s_t) & = \frac{1}{T+1} f(s_t), && s_t \in \mathcal S_t, \quad t = 2, \dots, T ,
        \end{aligned}
    \end{equation*}
    it is by \cref{rem:sol_markov} possible to omit the memory process $X$ from the representation of the problem.
    
    \begin{figure}
        \centering
        \begin{subfigure}[t]{0.45\textwidth}
            \centering
            \includegraphics[width=6cm]{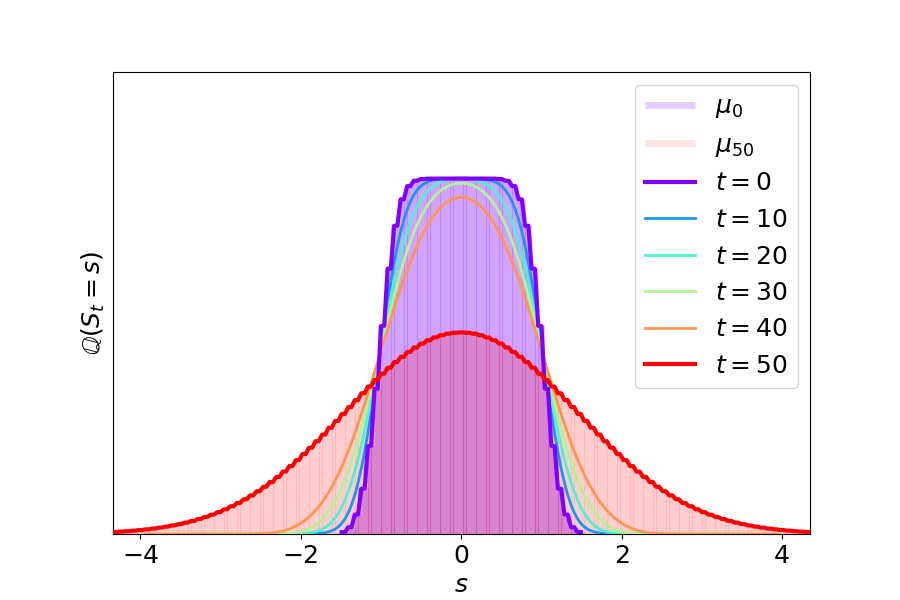}
            \caption{Marginals of the lower bound coupling.}
            \label{fig:sol_late}
        \end{subfigure}\hspace{1cm}
        \begin{subfigure}[t]{0.45\textwidth}
            \centering
            \includegraphics[width=6cm]{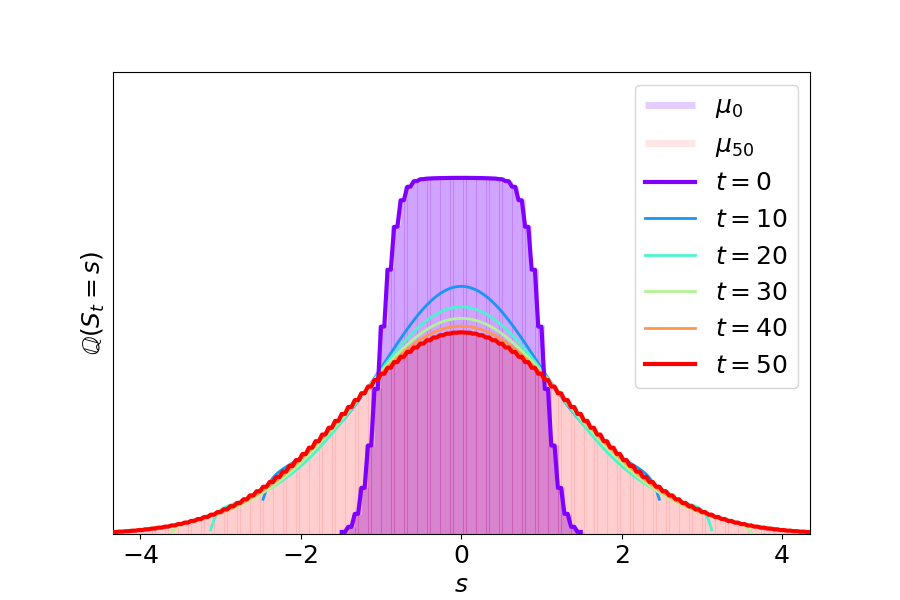}
            \caption{Marginals of the upper bound coupling.}
            \label{fig:sol_early}
        \end{subfigure}
        \caption{Marginal distributions corresponding to the optimal solutions of the upper- and lower bound MOTs from \cref{ex:sol_lateearly}. The given initial marginal $\mu_0$ is displayed in light blue bars, while the given terminal marginal $\mu_T$ is shown in light red bars. The marginals corresponding to the computed transport plan are shown for $t \in \{0, 10, 20, 30, 40, 50\}$ with solid lines. \Cref{fig:sol_late} corresponds to the lower bound martingale transport plan while \cref{fig:sol_early} corresponds to the upper bound martingale transport plan.}
        \label{fig:sol_lateearly}
    \end{figure}

    A subset of the marginal distributions of the computed optimal martingale transport plans are displayed in \cref{fig:sol_lateearly}, along with the given marginals $\mu_0$ and $\mu_{50}$. The number of gridpoints used ranges from $n_0^{\mathcal{S}} = 74$ to $n_{50}^{\mathcal{S}} = 214$. The lower bound solution (computed using $\varepsilon = 6.5 \cdot 10^{-3}$) is shown in \cref{fig:sol_late}; since the computed intermediate marginals are close to the initial marginal this solution corresponds to a late transport, as predicted by \cref{prop:sol_lateearly}. The upper bound solution (computed using $\varepsilon = 0.02$) is shown in \cref{fig:sol_early}; for this solution, we note that the intermediate marginals are close to the terminal marginal. It is therefore an example of early transport, as given by \cref{prop:sol_lateearly}. For both solutions, the small deviations of the intermediate marginals from the initial or terminal marginal are due to the regularisation. 

\end{example}

\subsection{The maximum of the maximum} \label{sec:sol_maxofmax}
Suppose that $\mu_T$ is a given marginal distribution, centered in $s_0 \in \mathbb R$, and assume that $\mu_0 = \delta_{s_0}$. Let the process $X$ be the rolling maximum of the price process $S$ --- that is, let $X$ be as in \cref{ex:examples_max}. We here consider the problem of finding the martingale model that maximises the maximum while respecting the terminal marginal $\mu_T$, that is, that solves the upper bound MOT problem with $\phi(S_0, \dots, S_T) = X_T$ and $\mathcal{T} = \{0, T\}$. Note that this payoff is of the form (\ref{eqn:intro_phi}); the problem can thus be addressed within our framework.

Before we compute its optimal solution for a specific choice of $\mu_T$, we will have a look at the corresponding continuous-time solution. Fix a probability space $(\Omega, \mathcal{F}, \mathbb{Q})$ and let $S^{\text{AY}}$ denote the martingale on $[0,T]$ whose maximum has a first-order stochastic dominance over the maximum of any other continuous martingale with terminal marginal $\mu_T$. It exists, and is related to the martingale constructed by Azéma and Yor \cite{AzemaYor1979} as a solution of the Skorokhod embedding problem (SEP) with terminal marginal $\mu_T$ \cite[Lemma 2.2]{BrownHobsonRogers2001_maxofmax}; we will refer to this martingale as the Azéma--Yor martingale. It follows that for the continuous-time case, an optimal solution of the above MOT problem is given by the Azéma--Yor martingale $S^{\text{AY}}$ \cite{Hobson1998}. A solution of the problem in continuous-time is thus known. It follows from \cite[Lemma 2.4]{BrownHobsonRogers2001} that said distribution can be obtained as 
\begin{equation} \label{eqn:sol_maxofmax_distrmax}
    \mathbb{Q}(X_{T}^{\text{AY}} \ge B) = \underset{0 \le y \le B}{\min \text{  }} f_B(y), \quad B > s_0 ,
\end{equation}
where $f_B : [0,B] \rightarrow \mathbb{R}$ is given by $f_B(y) := (B - y)^{-1} \int (s - y)^+ \mathrm{d}\mu_T(s)$ for $B > s_0$.
\Cref{eqn:sol_maxofmax_distrmax} thus directly provides the law of the maximum $X_T$ that corresponds to an optimal solution of the problem, when an `infinite number of time steps' is used. We will now compute this law for a specific choice of terminal marginal $\mu_{T}$. We then solve the problem computationally for $T$ increasing and compute the distribution of $X_T$ corresponding to an optimal solution. We will see that the resulting distribution approaches the distribution obtained from \cref{eqn:sol_maxofmax_distrmax} as $T$ increases.

\begin{example}[Comparison with the Azéma--Yor martingale] \label{ex:sol_maxofmax}    
     Let the terminal marginal $\mu_{T}$ be as in \cref{fig:sol_AY_marginal} with $s_0 = 0.5$ and $n_T^{\mathcal{S}} = 67$. The probability $\mathbb{Q}(X_{T}^{\text{AY}} \ge B)$ is obtained by minimising $f_B$ over $[0,B]$ --- note that its minimum exists for each $B$. By doing this for all $B \in (0.5, 1]$ we are able to recover the distribution of $X_{T}^{\text{AY}}$. It is displayed using a bold red line in \cref{fig:sol_maxofmax_cdf}. The corresponding problem value, the robust price, is $ \mathbb{E}_{\mathbb{Q}} [ X_{T}^{\text{AY}}] = 0.6133$.

    We now solve the corresponding upper bound MOT problem for different values of $T$, ranging from $2$ to 29, by application of our framework with $\varepsilon = 0.01$. The corresponding law of the maximum of the price process is displayed in \cref{fig:sol_maxofmax_cdf}.  The corresponding robust prices, $\mathbb{E}_{\mathbb{Q}}[X_T]$, are shown in \cref{fig:sol_maxofmax_price} as a function of the number of marginals used. We note that the computed distribution approaches the known continuous-time solution and that the cumulative distribution function decreases as the number of time steps used increases. Similarly the computed problem values, the robust prices, approaches the problem value of the corresponding continuous-time problem.

    For the problem with $T = 29$ we computed the residuals of all constraints, prior to updating the corresponding dual variable. The maximum element of the residuals is shown in \cref{fig:sol_maxofmax_residuals} as a function of the Sinkhorn iteration count, for both marginal constraints and for some of the martingale constraints, along with the threshold level used in Newton's method. Note that a high level of accuracy was used in this example --- the tolerated marginal and martingale residuals were set to $10^{-6}$ and $10^{-8}$, respectively. This is why the number of iterations in \cref{fig:sol_maxofmax_residuals} is quite large.

    \begin{figure}
        \centering\begin{subfigure}[t]{0.45\textwidth}
            \centering
            \includegraphics[width=6cm]{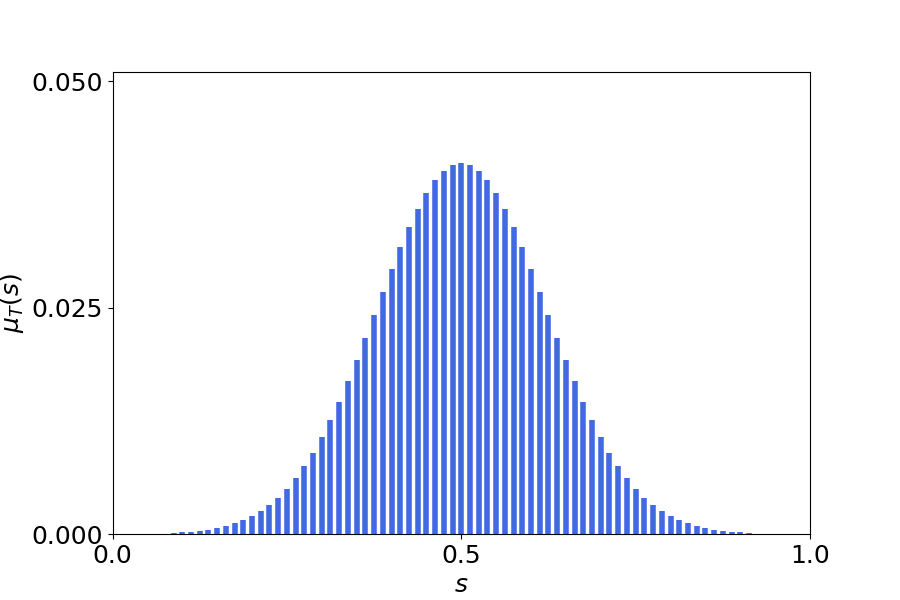}
            \caption{Given terminal marginal $\mu_{T}$.}
            \label{fig:sol_AY_marginal}
        \end{subfigure}\hspace{1cm}
        \begin{subfigure}[t]{0.45\textwidth}
            \centering
            \includegraphics[width=6cm]{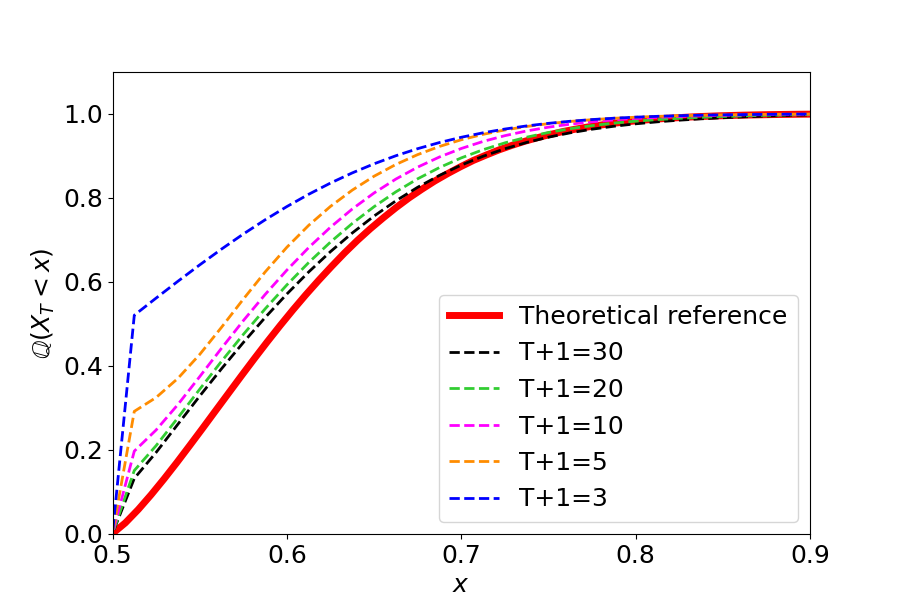}
            \caption{Computed cumulative distribution functions of the maximum of the maximum for an increasing number of marginals.}
            \label{fig:sol_maxofmax_cdf}
        \end{subfigure}\\
        \begin{subfigure}[t]{0.45\textwidth}
            \centering
            \includegraphics[width=6cm]{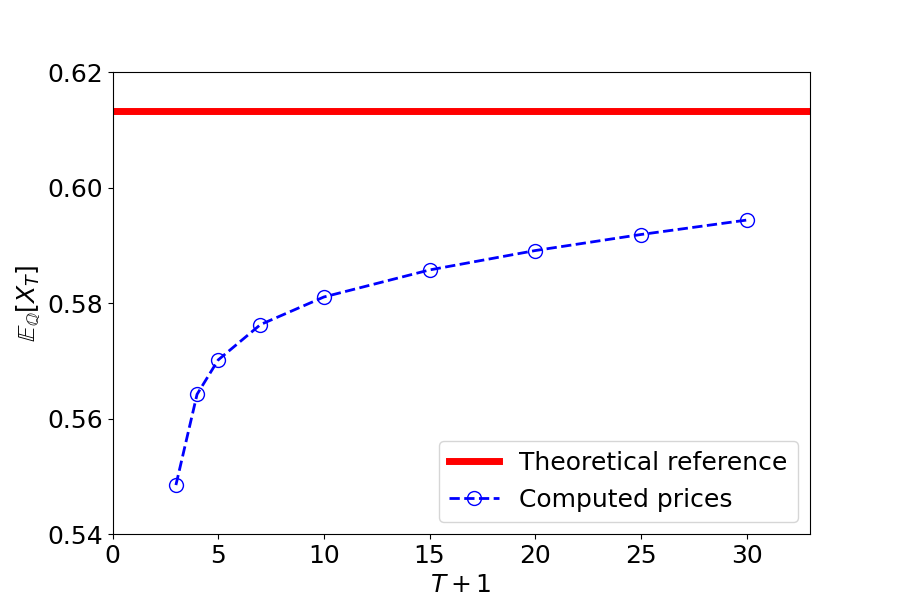}
            \caption{The computed robust price as a function of the number of marginals used.}
            \label{fig:sol_maxofmax_price}
        \end{subfigure}\hspace{1cm}
        \begin{subfigure}[t]{0.45\textwidth}
            \centering
            \includegraphics[width=6cm]{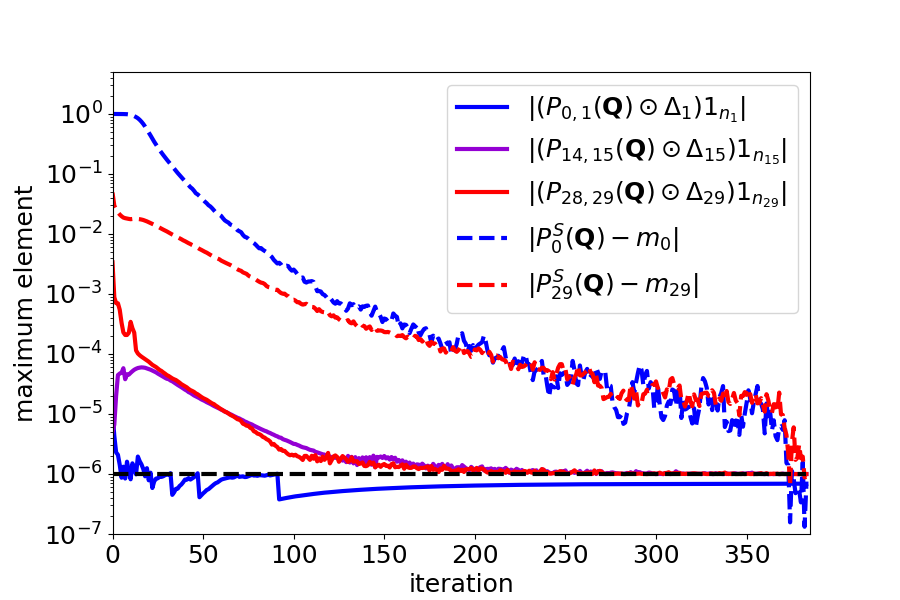}
            \caption{Computed residuals as a function of the Sinkhorn iteration count when $T = 29$.}
            \label{fig:sol_maxofmax_residuals}
        \end{subfigure}
        \caption{The given terminal marginal used in \cref{ex:sol_maxofmax} is displayed in \cref{fig:sol_AY_marginal}. It is centered in $0.5$ and its support is contained within $[0,1]$. The corresponding distribution of the maximum of the maximum, solved approximately for an increasing number of marginals $T+1$, is shown in \cref{fig:sol_maxofmax_cdf}. The corresponding distribution for the continuous-time case is included for reference (bold red). We note that the discrete-time computed solutions approach the continuous-time solution as the number of time steps used increases, and that the cumulative distribution function decreases. The corresponding problem values are shown as a function of the number of marginals $T+1$ in \cref{fig:sol_maxofmax_price}; the problem value for the continuous-time case is included for comparison (bold red). The maximum element of the computed residuals, obtained prior to updating the corresponding dual variable, for the marginal constraints (dashed line) and for some of the martingale constraints (bold line) are displayed in a linlog scale in \cref{fig:sol_maxofmax_residuals}, along with the threshold level used in Newton's method (black dashed line). Note the high level of accuracy used in the computations.}
        \label{fig:sol_maxofmax}
    \end{figure}

\end{example}

\begin{remark}[Other applications to computation of optimal Skorokhod embeddings] \label{rem:sol_SEP}
    In the above we saw that the Azéma-Yor martingale maximises the law of the maximum among all continuous martingales that respects the terminal law $\mu_T$, and that our framework can be used to approximately obtain this law by computing an upper bound MOT in discrete-time for $T$ large. There is another martingale, constructed from the Perkins solution of the same SEP, that minimises the law of the maximum among all continuous martingales that has terminal law $\mu_T$. We could analogously approximately compute this minimal law by solving the corresponding lower bound MOT problem. 
    
    If we instead let the auxiliary process $X$ be the realised variance of the price process $S$ (cf. \cref{ex:examples_var}) we can in a similar manner for $T$ large approximate the martingale that is induced by yet another solution of the SEP with terminal marginal $\mu_T$, namely, the Root solution. Indeed, it follows from results by Dupire and by Carr and Lee that it solves the corresponding continuous-time lower bound MOT problem with $\phi(S_0, \dots, S_T) = f(X_T)$ for $f : \mathbb{R} \rightarrow \mathbb{R}$ convex. See \cite{Hobson2011} for further details. 
\end{remark}

\subsection{The robust price of a digital option}
An example related to the maximum of the maximum is the robust upper bound on the fair price of a \emph{digital option} --- a claim that pays one unit of money if the underlying price process $S$ has exceeded the barrier $B$ before maturity, else it expires worthless. Yet again, let $\mu_T$ be a given terminal distribution, centered in $s_0 \in \mathbb{R}$, and assume that the initial distribution is given by $\mu_0 = \delta_{s_0}$. Fix some $B > s_0$, let $T > 1$ and let $X$ be the indicator process keeping track of whether the price process $S$ has exceeded the barrier $B$ so far, that is, let $ X_t := \chi_{[B, \infty )}( \max_{r \in  [t]}  S_r )$ for $t \in [T]$. Then the payoff of a digital option with barrier $B$ is given by $\phi(S_0, \dots, S_T) = X_T$; we are interested in solving the corresponding upper bound MOT problem.

We have seen that in continuous-time there exists a martingale $S^{\text{AY}}$ that optimises all digital options simultaneously, that is, the Azéma--Yor martingale $S^{\text{AY}}$ solves the problem for any barrier $B > s_0$. Indeed, since $\chi_{[B, \infty)}(\cdot)$ is an increasing function and the maximum of the martingale $S^{\text{AY}}$ has a first-order stochastic dominance over the maximum of any other continuous martingale $S$ with marginals $\mu_0$ and $\mu_T$, 
    \begin{equation*}
        \mathbb{E}_{\mathbb{Q}}\big[  \chi_{[B, \infty)}\big ( \max_{t \in [0,T]} S_t^{\text{AY}} \big) \big] 
        \ge 
        \mathbb{E}_{\mathbb{Q}} \big[ \chi_{[B, \infty)}\big( \max_{t \in [0,T]} S_t \big) \big] 
        , \quad B > s_0.
    \end{equation*}
 For a specific digital option though --- that is, for $B$ fixed --- one can find an (optimal) coupling with the same objective value by using just one intermediate time step \cite[pp. 416--419]{FollmerSchied2016}. In the below examples, we solve problems with $T=2$ computationally; we will recover the coupling constructed in \cite{FollmerSchied2016} as well as the distribution of the maximum of the Azéma--Yor martingale.

\begin{example}[Recovering the optimal coupling from \cite{FollmerSchied2016} when $T=2$] \label{ex:sol_mot_dig1}
    Let $s_0 = 0.5$ and $\mu_T = 0.5(\delta_{0} + \delta_{1})$, then it follows from 
    \cref{eqn:sol_maxofmax_distrmax} that the the robust price of a digital option with barrier $B \in (0.5, 1]$ for the continuous-time problem is $(2B)^{-1}$. On the other hand, \cite[Theorem 7.27]{FollmerSchied2016} yields the same value when taking only discrete time models with $T \ge 2$ into account. Repeating the construction from the proof of the theorem allows for constructing a martingale coupling that attains this upper bound. For this example with $T=2$, it corresponds to take at $t=1$ any of the two values 0 or $B$ with probability $1- (2B)^{-1}$ and  $(2B)^{-1}$, respectively, before moving to the terminal marginal. We note that the three marginals fully specifies a martingale coupling.

     We approximately solve the corresponding tri-marginal MOT problem by application of our framework for $B = 0.75$, $\varepsilon = 0.02$ and $n_1^{\mathcal{S}} = n_2^{\mathcal{S}} = 100$. The corresponding intermediate marginal distribution ($t = 1$) of the price process is $0.27 \delta_{0}  + 0.05 \delta_{0.01} + 0.01 \delta_{0.02} + 0.67 \delta_{0.75}$; this can be compared with $0.33 \delta_0 + 0.67 \delta_{0.75}$, which is given by the theoretical construction. The computed distribution is subject to some smoothing around zero, which is due to regularisation.

     We then repeat the computations for 2, 3, 5, 10 and 15 marginals and investigate what happens with the problem value, the robust price, as the number of marginals increases.
     For $T=1$, the computed price is 0.5, while for larger $T$ it is $0.66$ --- this confirms that it is not possible to increase the problem value by adding further time steps when $T \ge 2$.

\end{example}

\begin{example}[The law of the maximum of the Azéma--Yor martingale via digital options] \label{ex:sol_mot_dig_ay}
    Let $T = 2$ and let the initial and terminal marginals be as in \cref{ex:sol_maxofmax}. We solve the corresponding MOT problem computationally for a set of barriers ranging from 0.5 to 0.92; a regularisation parameter of $\varepsilon = 0.01$ was used and a number of $n_1^{\mathcal{S}} = n_2^{\mathcal{S}} = 67$ points of support. By doing so we obtain the maximum of the value $\mathbb{E}_{\mathbb{Q}} [ \chi_{[B, \infty)} ( \max_{ t \in [T]} S_t ) ] = \mathbb{Q}( \max_{ t \in [T]} S_t \ge B ) $ for each such barrier $B$; the result is shown as a function of the barrier in \cref{fig:dig_maxofmax}, along with the cumulative distribution function of the maximum of $S^{\text{AY}}$ (cf. \cref{ex:sol_maxofmax}). We see that by successively solving for the upper bounds on the probability that digital options with barriers $B \in (0.5, 1]$ are not activated, we can approximately reproduce the cumulative distribution function of the maximum of the Azéma--Yor martingale.

    \begin{figure}
    \centering
    \includegraphics[width=6cm]{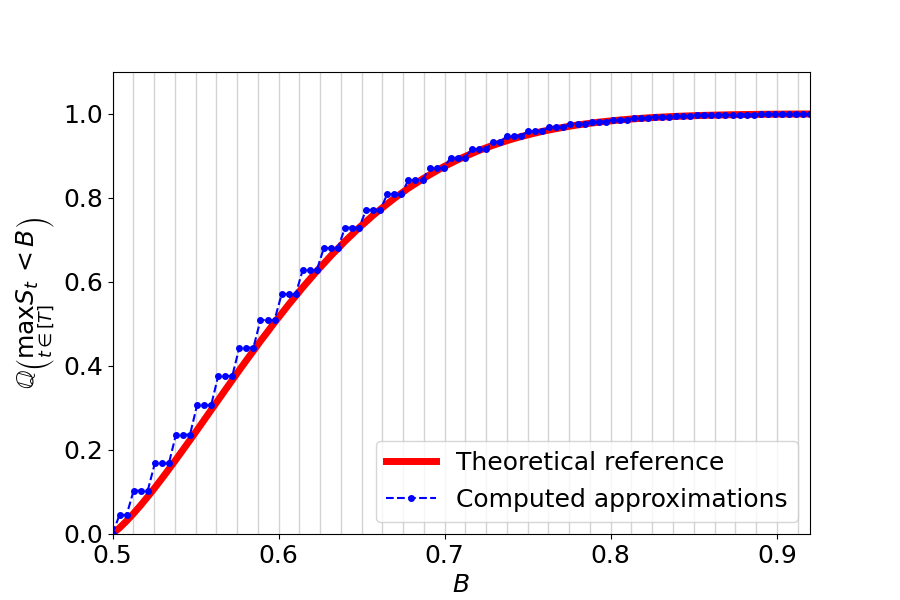}
    \caption{The computed upper bound on the probability that the digital option with barrier $B$ is not activated as a function of the barrier level $B$, as given in \cref{ex:sol_mot_dig_ay}. The cumulative distribution function of the maximum of the corresponding Azéma--Yor martingale is displayed in bold red for reference. The grey vertical lines illustrate the support of the price process contained within $[0.5, 0.92]$. Note that the resolution of the computed approximations follows this grid.}
    \label{fig:dig_maxofmax}
\end{figure}

\end{example}

\subsection{Robust pricing of an Asian option}
We conclude by considering an example where the optimal solution is not known, namely, the robust pricing of an Asian option subject to several marginal constraints. Asian options are a class of financial derivatives that are of great practical financial interest; they are characterised by their payoff being of the form $\phi(S_0, \dots , S_T) = f((T+1)^{-1} \sum_{t \in [T]} S_t )$ for some function $f : \mathbb{R} \rightarrow \mathbb{R}$. The dependence on the arithmetic mean makes the corresponding MOT problem a difficult one to solve analytically --- this was discussed in \cite{Stebegg2014}, who provided the optimal solutions for a problem with two or three given marginals and $f$ convex. As far as we know, it is still unclear what the optimal solution would look like for an MOT problem subject to more than four marginal constraints. By letting the stochastic process $X$ be as given in \cref{ex:examples_mean}, we can however solve these problems computationally. We will now do so for an example where we successively increase the number of marginal constraints. In order to provide visually clear results, all given marginals will be rather simple.

\begin{example}[Pricing a straddle] \label{ex:sol_asian_straddle}
Let the payoff function $\phi$ be as given above with $f(x) = |x - 30|$. Then the payoff corresponds to the payoff of an Asian straddle with strike 30. Let $T = 11$. Consider a trivial initial marginal $\mu_0 = \delta_{30}$ and a uniformly distributed terminal marginal $\mu_{11}$ whose support is contained within $n_{11}^{\mathcal{S}} = 41$ atoms, evenly distributed over the interval $[25, 35]$. In order not to run into problems with rounding errors, we will here use an increasing grid for representing the support of the process $X$; we use a number of points of support ranging from $n_0^{\mathcal{X}} = 1$ to $n_{11}^{\mathcal{X}} = 481$ while $n_t^{\mathcal{S}} = 41$ for $t \ge 1$.

The optimal solution of the corresponding lower bound MOT problem, subject to marginal constraints on the initial and the terminal marginals, was computed using $\varepsilon = 6 \cdot 10^{-3}$. The marginals of the computed optimal coupling are displayed in \cref{fig:sol_asian_couplings1}, where the marginals that are subject to a constraint are marked in green. We note that it corresponds to a late transport, as given by \cite{Stebegg2014}. The smoothing is due to regularisation. A third marginal constraint was then added on the fourth marginal, by requiring that  $\mu_4 = \frac{1}{2}(\delta_{29} + \delta_{31})$. The problem was then solved again; the marginals of the computed optimal coupling are shown in \cref{fig:sol_asian_couplings2}. We note that the transport from the initial marginal $\mu_0$ to the given intermediate marginal $\mu_4$ corresponds to a late transport, as conjectured by \cite{Stebegg2014}. The problem was then solved a third time with a fourth marginal constraint imposed on the eighth marginal, with $\mu_8 = \frac{1}{4}(\delta_{28} + 2 \delta_{30} + \delta_{32})$. The marginals of the optimal coupling are shown in \cref{fig:sol_asian_couplings3}. 

\begin{figure}
    \centering
    \subfloat[$\mathcal{T} = \{0, 11 \}$]{\label{fig:sol_asian_couplings1}\includegraphics[width=4.5cm]{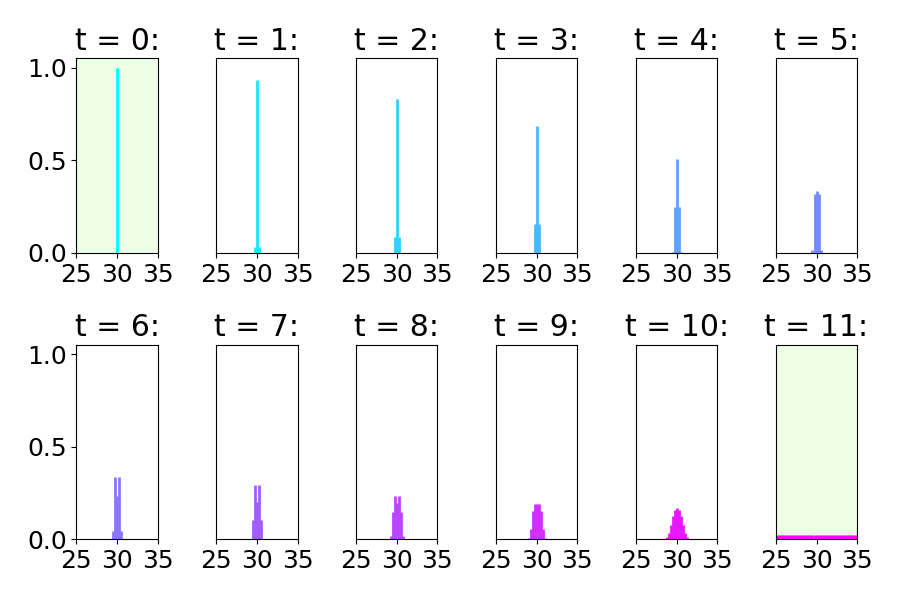}}%
    \subfloat[$\mathcal{T} = \{0, 4, 11 \}$]{\label{fig:sol_asian_couplings2}\includegraphics[width=4.5cm]{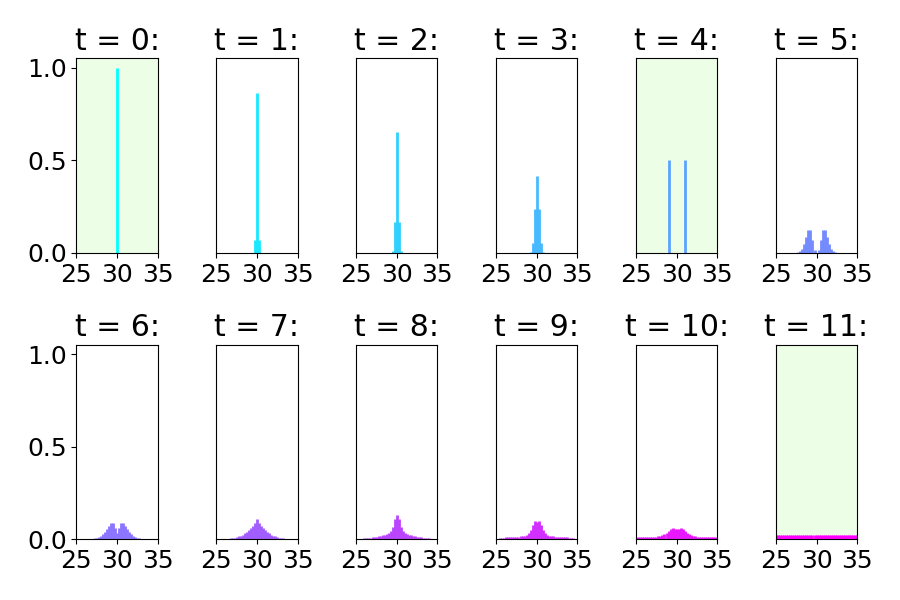}}%
    \subfloat[$\mathcal{T} = \{0, 4, 8, 11 \}$]{\label{fig:sol_asian_couplings3}\includegraphics[width=4.5cm]{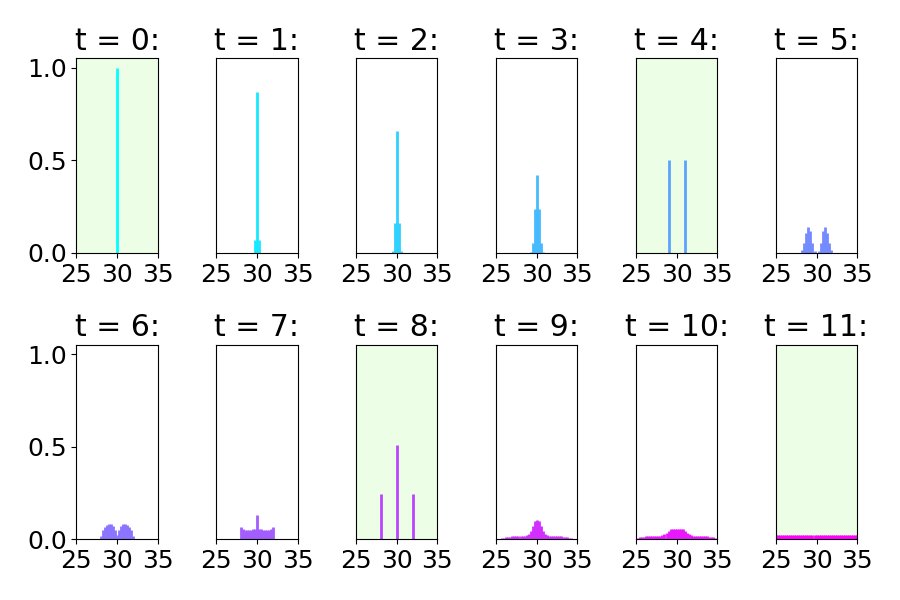}}
    \caption{The marginals of the optimal couplings corresponding to the lower bound MOT problem of \cref{ex:sol_asian_straddle}, subject to marginal constraints on the marginals belonging to $\mathcal T$. Marginals subject to constraints are marked with a green background.}
    \label{fig:sol_asian_couplings}
\end{figure}

\end{example}

\section{Summary and future work}
We have proposed a method for approximately solving a class of multi-marginal MOT problems computationally and demonstrated its utility on a number of different examples. Our examples show that we are able to solve problems over a \emph{large} number of marginals and that the computed solutions are aligned with theoretical results. We have not seen other examples of methods that are able to handle this many marginals in a martingale transport context --- our method is thus, to the best of our knowledge, unique in this sense. One bottleneck is to numerically solve the equations characterising the optimal dual variables corresponding to martingale constraints in the primal problem; if this sub-routine is computationally very heavy, the full algorithm becomes intractably slow. We have here chosen an approach based on Newton's method, since it is in general faster than e.g. gradient methods for relatively simple problems. Our implementation avoids the inversion of the Jacobian matrix, which is typically the most time consuming part of the vectorised version of Newton's method, and it is therefore our belief that our approach is competitive in terms of speed, even though there is a possibility that it could be improved further by exploiting properties of specific problems. Another, related, concern is the propagation of the non-zero errors in the constraints. Since the number of martingale constraints are typically much larger than the number of marginal constraints, an important question is how the residuals in Newton's method accumulates as the number of time steps increases. Our numerical experiments suggests that 30 marginals can be used without this being an issue, but the matter should be investigated further.

\medskip

\bibliographystyle{siam}
\bibliography{References}

\appendix

\section*{Appendix: Additional proofs}  \label{sec:appendix}

\begin{proof}[Proof of \cref{prop:equiv_lp}]
    Fix $\Omega = \bigtimes_{t \in [T]} (\mathcal{S}_t \times \mathcal{X}_t)$ and let the joint process $(S,X)$ be the canonical process on $\Omega$ --- the problem of finding a tuple that optimises problem (\ref{eqn:equiv_mot}) is then equivalent to finding a probability measure $\mathbb{Q}$ on $(\Omega, \bigtimes_{t \in [T]}( 2^{\mathcal{S}_t} \times 2^{\mathcal{X}_t}))$ such that the corresponding tuple is optimal among all tuples of this form. We can then identify an element $\omega$ in $\Omega$ with an index tuple in $\mathcal{I}$ via $(S_t, X_t)(\omega) = (s_t(i_t^{\mathcal{S}}), x_t(i_t^{\mathcal{X}}) )$ for $i_t^{\mathcal{S}} \in \mathcal{I}_t^{\mathcal{S}}, i_t^{\mathcal{X}} \in \mathcal{I}_t^{\mathcal{X}}$ and $t \in [T]$
    and use the probability measure $\mathbb{Q}$ to define a tensor $\mathbf{Q} \in \mathbb{R}_+^{n_0 \times \dots \times n_T}$ as
    \begin{equation} \label{eqn:tensors_Q}
        \mathbf{Q}(i_0, \dots , i_T)
        :=
        \mathbb{Q} \left( \bigcap_{t=0}^T \bigg \{ S_t = s_t(i_t^{\mathcal{S}}), X_t = x_t(i_t^{\mathcal{X}}) \bigg \} \right), 
        \quad (i_0, \dots , i_T) \in \mathcal{I}.
    \end{equation}
    Starting with the objective function,  $\mathbb{E}_{\mathbb{Q}} [ \sum_{t \in [T] \backslash 0} \phi(S_{t-1}, X_{t-1}, S_t, X_t ) ] = \langle \mathbf{\Phi}, \mathbf{Q} \rangle  = \langle \mathbf{C}, \mathbf{Q} \rangle $,
    where the first equality follows from the definition of the tensors $\mathbf{Q}$ and $\mathbf{\Phi}$ and the last equality from the fact that $\langle \mathbf{C}, \mathbf{Q} \rangle < \infty$ for any tensor $\mathbf{Q}$ defined as in \cref{eqn:tensors_Q}. As for the marginal constraints (\ref{eqn:equiv_mot_margconstr}), it follows immediately from the definition of $\mathbf{Q}$ that
    $\mathbb{Q} (S_t = s_t(j) ) = P_t^{\mathcal{S}}(\mathbf{Q})(j)$ for $j \in \mathcal{I}_t^{\mathcal{S}}$ and $t \in [T]$.
    Combining this elementwise with equation (\ref{eqn:tensors_mu}) and the constraint (\ref{eqn:equiv_mot_margconstr}) yields the constraint (\ref{eqn:tensors_mot_unreg_margconstr}). Similarly, for $t \in [T] \backslash 0$, it holds that 
    \begin{equation*} \label{eqn:tensors_mtg}
        \begin{aligned}
            \mathbb{E}_{\mathbb{Q}} \left [  S_t | S_{t-1} = s_{t-1}(i_{t-1}^{\mathcal{S}}), X_{t-1} = x_{t-1}(i_{t-1}^{\mathcal{X}}) \right]
            &=
            \frac{\big( P_{t-1, t}(\mathbf{Q}) 
            ( 1_{n_t^{\mathcal{X}}} \otimes s_t ) \big)
            \left( i_{t-1} \right)}{P_{t-1} \left( \mathbf{Q} \right)\left( i_{t-1} \right)},
            \quad i_{t-1} \in \mathcal{I}_{t-1},
        \end{aligned}
    \end{equation*}
    since $s_t(i_t^{\mathcal{S}}) = (1_{n_t^{\mathcal{X}}} \otimes s_t)(i_t)$. By combining the martingale constraint  (\ref{eqn:equiv_mot_mtgconstr}) elementwise with this we obtain $ P_{t-1, t}(\mathbf{Q}) ( 1_{n_t^{\mathcal{X}}} \otimes s_t ) = ( 1_{n_{t-1}^{\mathcal{X}}} \otimes s_{t-1} ) \odot P_{t-1} ( \mathbf{Q} )$ for $t \in [T] \backslash 0$.
    This is the martingale constraint (\ref{eqn:equiv_mot_mtgconstr}) formulated on vector form. By subtracting the right-hand side from the left-hand side and identifying the matrix $\Delta_t \in \mathbb{R}^{n_{t-1} \times n_t}$ the constraint (\ref{eqn:tensors_mot_unreg_mtgconstr}) follows. 

    Conversely, let $\mathbf{Q} \in \mathbb{R}_+^{n_0 \times \dots \times n_T}$ be feasible to problem (\ref{eqn:tensors_mot_unreg}). Then $(\mathcal{I}, \bigtimes_{t\in [T]} ( 2^{\mathcal{I}_t^{\mathcal{S}}} \times  2^{\mathcal{I}_t^{\mathcal{X}}} ), \mathbf{Q})$ is a probability space, on which we can identify $\mathbf{\Phi}$ as a random variable. Assume that $\mathbf{Q}$ is such that $\langle \mathbf{C}, \mathbf{Q} \rangle < \infty$.  Then $\langle\mathbf{C}, \mathbf{Q} \rangle = \langle \mathbf{\Phi}, \mathbf{Q} \rangle = \mathbb{E}_\mathbf{Q} [ \mathbf{\Phi}]$,
    which means that the values of the objective functions (\ref{eqn:tensors_mot_unreg_objfcn}) and (\ref{eqn:equiv_mot_objfcn}) coincide. Also note that the model, induced by the tensor $\mathbf{Q}$ satisfying the constraints (\ref{eqn:tensors_mot_unreg_mtgconstr}) and (\ref{eqn:tensors_mot_unreg_margconstr}) and by the mapping $(i,t) \mapsto s_t(i_t^{\mathcal{S}})$ for $t \in [T]$ and $i \in \mathcal{I}$, satisfies the constraints (\ref{eqn:equiv_mot_mtgconstr}) and (\ref{eqn:equiv_mot_margconstr}). It remains to argue that there always exists a $\mathbf{Q}$ such that $\langle \mathbf{C}, \mathbf{Q} \rangle < \infty$ whenever the feasible set of problem (\ref{eqn:tensors_mot_unreg}) is nonempty. Do so by noting that the feasible set of problem (\ref{eqn:equiv_mot}) is nonempty and that every feasible model $(\Omega, \mathcal F, \mathbb{Q}, S)$ induces $\mathbf{Q}$ such that $\langle \mathbf{C}, \mathbf{Q} \rangle < \infty$. This completes the proof.
\end{proof}

\begin{proof}[Proof of \cref{prop:convergence}]
     Let the feasible set of problems (\ref{eqn:tensors_mot_unreg}) and (\ref{eqn:tensors_mot}) be denoted $\mathfrak{F}$, that is, let 
    \begin{align*}
       \mathfrak{F}
       := 
       \big \{ \mathbf{Q} \in \mathbb{R}_+^{n_0 \times \dots \times n_T} : \left( P_{t-1, t}(\mathbf{Q}) \odot \Delta_t \right) 1_{n_t} =  0_{n_{t-1}}, \; t \in [T], \quad  P_t^{\mathcal{S}}(\mathbf{Q}) = m_t, \; t \in \mathcal{T} \big \}.
    \end{align*}
    Then note that for each $k$ fixed the regularised problem (\ref{eqn:tensors_mot}) with $\varepsilon = \varepsilon_k$ is strictly convex. Therefore, the optimal solution $\mathbf{Q}_k$ is unique and hence the sequence $(\mathbf{Q}_k)_k$ is well-defined. Compactness of $\mathfrak{F}$ yields that it has a convergent subsequence whose limit belongs to $\mathfrak{F}$ --- let the limit be denoted $\mathbf{Q}_{\infty}$. 
    
    We now show that $\mathbf{Q}_{\infty}$ minimises the unregularised problem (\ref{eqn:tensors_mot_unreg}). In order to do so, let $\mathbf{Q}^*$ be a minimiser of problem (\ref{eqn:tensors_mot_unreg}),  fix $k$ and note that by definition of $\mathbf{Q}_k$
    \begin{align*}
        \langle \mathbf{C}, \mathbf{Q}_k \rangle  + \varepsilon_k D(\mathbf{Q}_k) 
        \le 
        \langle \mathbf{C}, \mathbf{Q} \rangle + \varepsilon_k D(\mathbf{Q}) 
        \le \langle \mathbf{C}, \mathbf{Q} \rangle, \quad \mathbf{Q} \in \mathfrak{F},
    \end{align*}
    where we have used that $-\prod_{t \in [T]} n_t \le D(\mathbf{Q}) \le 0$ for $\mathbf{Q} \in \mathfrak{F}$. Choosing $\mathbf{Q}^*$ as $\mathbf{Q}$ in the above gives that
    \begin{align*}
        \underset{k \rightarrow \infty}{\limsup \;}  \langle \mathbf{C}, \mathbf{Q}_k \rangle 
        = 
        \underset{k \rightarrow \infty}{\limsup \;}  \langle \mathbf{C}, \mathbf{Q}_k \rangle  
        + \varepsilon_k D(\mathbf{Q}_k) 
        \le 
        \langle \mathbf{C}, \mathbf{Q}^* \rangle.
    \end{align*}
    On the other hand, trivially, $\liminf_{k \rightarrow \infty} \langle \mathbf{C}, \mathbf{Q}_k \rangle \ge \langle \mathbf{C},\mathbf{Q}^* \rangle$ and hence we get $\langle \mathbf{C}, \mathbf{Q}_\infty \rangle = \langle \mathbf{C},\mathbf{Q}^* \rangle$ by continuity of the cost function $\langle \mathbf{C}, \cdot \rangle $ --- this shows that the limit $\mathbf{Q}_\infty$ is indeed a minimiser of the unregularised problem (\ref{eqn:tensors_mot_unreg}). Convergence of the value of the regularised problem (\ref{eqn:tensors_mot}) to the value of problem (\ref{eqn:tensors_mot_unreg}) as the regularisation parameter vanishes then follows immedately from continuity of the cost function and from continuity and boundedness on $\mathfrak{F}$ of the entropy term $D$.
\end{proof}

\begin{proof}[Proof of \cref{lemma:algo_interiorpt}]
    The domain of problem (\ref{eqn:tensors_mot}) is 
    \begin{align*}
        \mathfrak{D} := \big \{ \mathbf{Q} \in \mathbb{R}_+^{n_0 \times \dots \times n_T} : \langle \mathbf{C}, \mathbf{Q} \rangle < \infty \text{ and the constraints (\ref{eqn:tensors_mot_mtgconstr}) and (\ref{eqn:tensors_mot_margconstr}) hold} \big \}.
    \end{align*}
    To show that Slater's condition holds, we want to show that $\mathfrak{D}$ has a non-empty relative interior, that is, that there exists a tensor $\mathbf{Q^{\text{ri}}}$ feasible to problem (\ref{eqn:tensors_mot}) such that $\mathbf{Q^{\text{ri}}}(i_0, \dots , i_T) > 0$ whenever $\mathbf{C}(i_0, \dots, i_T) < \infty$ for $(i_0, \dots , i_T) \in \mathcal{I}$.

    Let $\{ \mu_t \}_{t \in [T] \backslash \mathcal{T}}$ be intermediate marginals satisfying the assumptions of the second part of \cref{thm:algo} and let for $t \in [T] \backslash \mathcal{T}$ the vector representing $\mu_t$ via \cref{eqn:tensors_mu} be denoted by $m_t$. Then define for any $t \in [T]\backslash 0$ 
    \begin{align*}
        \mathfrak{M}_{t} := \big \{ Q \in \mathbb{R}^{n_{t-1}^{\mathcal{S}} \times n_t^{\mathcal{S}}}_+ : Q s_t = s_{t-1} \odot (Q 1_{n_{t}^{\mathcal{S}}}), \; Q 1_{n_t^{\mathcal{S}}} = m_{t-1}, \; Q^{\top} 1_{n_{t-1}^{\mathcal{S}}} = m_t \big \}
    \end{align*}
    --- note that it is a non-empty convex set and that an element $Q \in \mathfrak{M}_t$ is the matrix representation of a bi-marginal martingale subtransport between the marginals $\mu_{t-1}$ and $\mu_t$. Since $(\mu_{t-1}, \mu_t)$ is irreducible, it follows from  \cite[Theorem 3.2]{BeiglbockNutzTouzi2017} that there exists no index tuple $(i,j) \in \mathcal{I}_{t-1}^{\mathcal{S}} \times \mathcal{I}_t^{\mathcal{S}}$ such that $Q(i,j) = 0$ for every $Q \in \mathfrak{M}_t$. Therefore there exists $Q^{\mathcal{S}}_t \in \mathfrak{M}_t$ strictly positive; see this by forming a convex combination of a set of matrices in $\mathfrak{M}_t$ whose zero elements do not coincide. 

    Now extend the above to bi-marginal martingale transports across the two-dimensional state space by defining, for $t \in [T] \backslash 0$, a matrix $Q_t \in \mathbb{R}^{n_{t-1} \times n_t}$ by
    \begin{align*}
        Q_t(i_{t-1}, i_t)
        :=
        \begin{cases}
             Q_1^{\mathcal{S}} (i_0^{\mathcal{S}}, i_1^{\mathcal{S}}), & x_0 (i_0^{\mathcal{X}}) = h_0( i_0^{\mathcal{S}} ) \text{ and } x_1( i_1^{\mathcal{X}} ) = h_1( s_1(i_1^{\mathcal{S}}), s_{0}(i_{0}^{\mathcal{S}}), x_{0}(i_{0}^{\mathcal{X}})), \; t = 1 \\
             Q_t^{\mathcal{S}} (i_{t-1}^{\mathcal{S}}, i_t^{\mathcal{S}}), \hspace{-.2cm}&  x_t( i_t^{\mathcal{X}} ) = h_t( s_t(i_t^{\mathcal{S}}), s_{t-1}(i_{t-1}^{\mathcal{S}}), x_{t-1}(i_{t-1}^{\mathcal{X}})), \; t = 2, \dots , T \\
             0, & \text{else}. 
        \end{cases}
    \end{align*}
    Note that the above definition implies that there is for every index pair $(i_{t-1}^{\mathcal{S}}, i_t^{\mathcal{S}}) \in \mathcal{I}_{t-1}^{\mathcal{S}} \times \mathcal{I}_t^{\mathcal{S}}$ one unique pair $(i_{t-1}^{\mathcal{X}}, i_t^{\mathcal{X}}) \in \mathcal{I}_{t-1}^{\mathcal{X}} \times \mathcal{I}_t^{\mathcal{X}}$ such that $Q_t(i_{t-1}, i_t)  > 0$. It follows from this, from  $Q_t^{\mathcal{S}} \in \mathfrak{M}_t$ and from the definition of the matrix $\Delta_t$ that the matrix $Q_t$ given above satisfies 
    \begin{equation}
        \begin{aligned} \label{eqn:algo_lemma_interiorpt_pf1}
            \big( Q_t \odot \Delta_t \big) 1_{n_t} = 0_{n_{t-1}}, \quad
            P^{\mathcal{S}_{t}} \big(Q_t^{\top} 1_{n_{t-1}} \big) = m_{t}, \quad
            P^{\mathcal{S}_{t-1}} \big(Q_t 1_{n_t} \big) = m_{t-1}
        \end{aligned}
    \end{equation}
    and hence the extension works as intended. Then define a tensor $\mathbf{Q^{\text{ri}}}$ by 
    \begin{align*}
        \mathbf{Q^{\text{ri}}}(i_0, \dots, i_T)
        :=  
        m_0(i_0^{\mathcal{S}})\prod_{t \in [T] \backslash 0} Q_t(i_{t-1}, i_t) / m_{t-1}(i_{t-1}^{\mathcal{S}}), \quad  (i_0, \dots, i_T) \in \mathcal{I},
    \end{align*}
    where we remark that it follows from the assumptions that the vectors $m_t$ for $t \in [T]$ have no zero elements. It is immediate from the construction that $\mathbf{Q^{\text{ri}}} \in \mathbb{R}_+^{n_0 \times \dots \times n_T}$ and that the bi-marginal projections evaluates to $ P_{t-1, t}(\mathbf{Q^{\text{ri}}})(i_{t-1}, i_t) = Q_t(i_{t-1}, i_t)$ for $i_{t-1} \in \mathcal{I}_{t-1}, i_t \in \mathcal{I}_t$ and $t \in [T] \backslash 0$.
    Combining this with \cref{eqn:algo_lemma_interiorpt_pf1} yields that the tensor $\mathbf{Q}^{\text{ri}}$ satisfies the constraints (\ref{eqn:tensors_mot_mtgconstr}) and (\ref{eqn:tensors_mot_margconstr}). It remains to show that the tensor $\mathbf{Q^{\text{ri}}}$ belongs to the relative interior of the domain $\mathfrak{D}$. Do so by noting that $\mathbf{Q^{\text{ri}}}(i_0, \dots , i_T) > 0$ if and only if $(i_0, \dots, i_T) \in \mathcal{I}$ is such that $x_0 (i_0^{\mathcal{X}}) = h_0( i_0^{\mathcal{S}} )$, $x_1( i_1^{\mathcal{X}} ) = h_1( s_1(i_1^{\mathcal{S}}), s_{0}(i_{0}^{\mathcal{S}}), x_{0}(i_{0}^{\mathcal{X}}))$ and $x_t( i_t^{\mathcal{X}} ) = h_t( s_t(i_t^{\mathcal{S}}), s_{t-1}(i_{t-1}^{\mathcal{S}}), x_{t-1}(i_{t-1}^{\mathcal{X}}))$ for $t \in [T] \backslash \{0,1\}$ --- that is, if and only if the index tuple $(i_0, \dots, i_T) \in \mathcal{I}$ is such that the corresponding realisations of the joint process respect \cref{eqn:intro_x}. Consequently, if $(i_0, \dots , i_T) \in \mathcal{I}$ is such that $\mathbf{Q^{\text{ri}}}(i_0, \dots , i_T) = 0$, then $\mathbf{C}(i_0, \dots , i_T) = \infty$. Therefore, $\mathbf{Q^{\text{ri}}}$ does indeed belong to the relative interior of $\mathfrak{D}$ and the result follows.
\end{proof}

\begin{proof}[Proof of \cref{thm:algo}]
We start by proving the second part of the assertion, that is, the part that is provided under additional assumptions on the given marginals. Since the entropy term $D(\textbf{Q})$ is strictly convex in $\mathbf{Q}$, problem (\ref{eqn:tensors_mot}) is a strictly convex optimisation problem whose domain, by \cref{lemma:algo_interiorpt}, has a non-empty relative interior. That is, Slater's condition holds and thus strong Lagrangian duality holds \cite[pp. 226--227]{BoydVandenberghe2004}; that is, the duality gap is zero, the dual supremum is attained and the primal minimum can be computed by first solving for a dual maximiser.

In order to derive the Lagrangian dual of problem (\ref{eqn:tensors_mot}), we start by forming the corresponding Lagrangian function. It reads
\begin{equation*}
\begin{aligned}
    \mathcal{L}(\mathbf{Q}, \lambda, \gamma ) 
    = 
    \langle \mathbf{C} , \mathbf{Q} \rangle + \varepsilon D(\mathbf{Q}) 
    + \sum_{t \in \mathcal T} \big( \lambda_t^{\top} ( m_t - P^{\mathcal{S}}_t(\mathbf{Q}) )  \big)
    - \sum_{t \in [T] \backslash 0} \big( \gamma_{t-1}^{\top}  
     ( P_{t-1, t}(\mathbf{Q}) \odot \Delta_t ) 1_{n_t}
      \big),
\end{aligned}
\end{equation*}
where $ \lambda = \{\lambda_t \}_{t \in \mathcal T}$ are Lagrangian multipliers (dual variables) corresponding to the marginal constraints (\ref{eqn:tensors_mot_margconstr}) and $\gamma = \{ \gamma_t \}_{t\in [T-1]}$ are Lagrangian multipliers corresponding to the martingale constraints (\ref{eqn:tensors_mot_mtgconstr}). Note that each $\lambda_t$ is a vector of length $n_t^{\mathcal{S}}$, while each $\gamma_t$ is a vector of length $n_t$. The dual of problem (\ref{eqn:tensors_mot}) is then
\begin{equation*} 
    \underset{\lambda, \gamma }{\max} \quad \varphi(\lambda, \gamma), 
\end{equation*}
 since dual variables corresponding to primal equality constraints are unconstrained (see, for example, \cite[p. 470]{Nash-Sofer}), where the dual functional is defined as $\varphi(\lambda, \gamma) := \inf \{\mathcal{L}(\mathbf{Q}, \lambda, \gamma) : \mathbf{Q} \in \mathbb{R}_+^{n_0 \times \dots \times n_T} \}$ --- note that for each fixed pair of dual variables $(\lambda, \gamma)$ the Lagrangian is continuous and coercive in $\mathbf{Q} \in \mathbb{R}_+^{n_0 \times \dots \times n_T}$, hence the infimum is attained.

We proceed by finding the minimising $\mathbf{Q}^{\lambda, \gamma}$ for given dual variables $(\lambda, \gamma)$. Since the Lagrangian function $\mathcal{L}$ is strictly convex in $\mathbf{Q}$ and the entropy term $D(\mathbf{Q})$ prevents optimal solutions at the boundary, the minimising transport plan satisfies
$\partial_{\mathbf{Q}(i_0, \dots , i_T)}\mathcal{L}(\mathbf{Q}, \lambda, \gamma) = 0$ for $(i_0, \dots , i_T) \in \mathcal{I}$. By differentiating the Lagrangian function we obtain that for any $(i_0, \dots , i_T) \in \mathcal{I}$,
\begin{equation*}
    \begin{aligned}
        \partial_{\mathbf{Q}(i_0, \dots , i_T)}& \mathcal{L}(\mathbf{Q}, \lambda, \gamma) \\
        =&
        \varepsilon \log \mathbf{Q}(i_0, \dots , i_T) 
        - \sum_{t \in \mathcal T} \lambda_t(i_t^{\mathcal{S}})
        +
        \sum_{t \in [T] \backslash 0} 
        \big( C_t(i_{t-1}, i_t)
        - \gamma_{t-1}(i_{t-1}) \Delta_t(i_{t-1}, i_t) \big)
    \end{aligned}
\end{equation*}
which yields a minimising transport plan 
\begin{equation*} \label{eqn:algo_Popt}
    \mathbf{Q}^{\lambda, \gamma }(i_0, \dots , i_T)
    =
     \Big( \prod_{t \in \mathcal{T}}  e^{\lambda_t(i_t^{\mathcal{S}})/ \varepsilon} \Big)
    \Big( \prod_{t \in [T] \backslash 0}  e^{-C_t(i_{t-1}, i_t)/ \varepsilon}
    e^{\gamma_{t-1}(i_{t-1})\Delta_t(i_{t-1}, i_t)/ \varepsilon} \Big)
    , \; (i_0, \dots , i_T) \in \mathcal{I},
\end{equation*}
which can be written as in \cref{eqn:algo_Popt2}. Note that the claim that the martingale transport plan $\mathbf{Q}^{\lambda^*, \gamma^*}$ assigns $\mathbf{Q}^{\lambda^*, \gamma^*}(i_0, \dots , i_T) > 0$ for $(i_0, \dots , i_T) \in \mathcal{I}$ such that $\mathbf{K}(i_0, \dots , i_T) > 0$ now automatically follows from the finiteness of the optimal dual variables $(\lambda^*, \gamma^*)$ in combination with the definition of the tensors $\mathbf{K}$, $\mathbf{U}^{\lambda^*}$ and $\mathbf{G}^{\gamma^*}$.

For this choice of $\mathbf{Q}$, the first term of the scaled entropy $\varepsilon D(\mathbf{Q}^{\lambda, \gamma})$ equals
\begin{equation*}
    \begin{aligned}
         \varepsilon 
         &\sum_{(i_0, \dots , i_T) \in \mathcal{I}} 
          \mathbf{Q}^{\lambda, \gamma}(i_0, \dots , i_T) \log  \mathbf{Q}^{\lambda, \gamma}(i_0, \dots , i_T) \\
        &=
        \sum_{(i_0, \dots , i_T) \in \mathcal{I}}
        \mathbf{Q}^{\lambda, \gamma}(i_0, \dots , i_T) 
            \Big(  
                \sum_{t \in \mathcal T }  \lambda_t(i_t^{\mathcal{S}}) 
                - \sum_{t \in [T] \backslash 0}
                    \left(
                      C_t(i_{t-1}, i_t)
                     -
                    \gamma_{t-1}(i_{t-1})
                    \Delta_t(i_{t-1}, i_t)
                    \right)
            \Big) \\
            &=
            \sum_{t \in \mathcal T} \lambda_t^{\top} P_t^{\mathcal{S}}(\mathbf{Q}^{\lambda, \gamma})
            - \langle \mathbf{C}, \mathbf{Q}^{\lambda, \gamma} \rangle
            + \sum_{t \in [T] \backslash 0} 
            \gamma_{t-1}^{\top}
              \big( P_{t-1, t}(\mathbf{Q}^{\lambda, \gamma}) \odot \Delta_t \big) 1_{n_t}   
            .
    \end{aligned}
\end{equation*}
Inserting $\mathbf{Q}^{\lambda, \gamma}$ into $\mathcal{L}( \cdot, \lambda, \gamma)$ thus yields that the objective of the dual of problem (\ref{eqn:tensors_mot}) becomes 
\begin{align*}
    \varphi(\lambda, \gamma) &= 
    \sum_{t \in \mathcal{T}}  \lambda_t^{\top} m_t 
    - \varepsilon 
     \sum_{(i_0, \dots , i_T) \in \mathcal{I}}  
        \Big(
        \prod_{t \in \mathcal{T}} e^{\lambda_t(i_t^{\mathcal{S}})/ \varepsilon} 
        \Big)
        \Big( 
        \prod_{t \in [T] \backslash 0}  e^{-C_t(i_{t-1}, i_t)/ \varepsilon}
    e^{\gamma_{t-1}(i_{t-1}) \Delta_t (i_{t-1}, i_t)/ \varepsilon}  \Big) \\
    &= \sum_{t \in \mathcal{T}} \lambda_t^{\top} m_t - \varepsilon \langle \mathbf{K}, \mathbf{U}^{\lambda} \odot \mathbf{G}^{\gamma} \rangle,
\end{align*}
and problem (\ref{eqn:dualproblem}) then follows.

We now move on to finding the strongest relaxation, or equivalently, to finding the dual variables $(\lambda^*, \gamma^*)$ such that the smooth and concave functional $\varphi$ is maximised. Such $(\lambda^*, \gamma^*)$ are stationary points of $\varphi$ and hence they satisfy the following set of equations
\begin{subequations} \label{eqn:algo_stationaryphi}
\begin{align}
    &\partial_{\lambda_t(j)} \varphi (\lambda, \gamma) = 0 , \quad t \in \mathcal{T}, \quad j \in \mathcal{I}_t^{\mathcal{S}}  \label{eqn:algo_stationaryphi_lambda}\\
    &\partial_{\gamma_t(j)} \varphi (\lambda, \gamma) = 0 , \quad t \in [T-1], \quad j \in \mathcal{I}_t. \label{eqn:algo_stationaryphi_gamma}
\end{align}
\end{subequations}
Starting with the dual variables corresponding to marginal constraints in the primal problem, we differentiate and obtain
\begin{equation*}
    \begin{aligned}
        \partial_{\lambda_t(j)} \varphi(\lambda, \gamma)
        =&
        m_t(j)
        -
        \sum_{\substack{(i_0, \dots , i_T) \in \mathcal{I} : \\ i_t^{\mathcal{S}} = j}}
        \mathbf{K}(i_0, \dots , i_T)
        \mathbf{G}^{\gamma}(i_0, \dots , i_T)
            \prod_{k \in \mathcal T}
              e^{\lambda_k(i_k^{\mathcal{S}})/ \varepsilon }
            ,  \quad t \in \mathcal{T}, \; j \in \mathcal{I}_t^{\mathcal{S}}.
            \quad 
    \end{aligned}
\end{equation*}
Solving for $u_t^{\lambda_t} = \exp(\lambda_t/ \varepsilon)$  thus yields \cref{eqn:algo_updateu1}. Moving on to the dual variables corresponding to martingale constraints in the primal problem, we differentiate and get
\begin{equation*}
    \begin{aligned}
        \partial_{\gamma_{t}(j)} \varphi(\lambda, \gamma)
        =
        - \sum_{\substack{(i_0, \dots , i_T) \in \mathcal{I} : \\ i_t = j}}
        \Delta_{t+1}(j, i_{t+1})
        \mathbf{K}(i_0, \dots , i_T)
        \mathbf{U}^{\lambda}(i_0, \dots , i_T)
        \prod_{k \in [T] \backslash 0}
            e^{\gamma_{k-1}(i_{k-1}) \Delta_k (i_{k-1}, i_k)/ \varepsilon },
    \end{aligned}
\end{equation*}
for $t \in  [T-1]$ and $j \in \mathcal{I}_t$. When solving for $\gamma$ such that $\partial_{\gamma_{t}(j)} \varphi(\lambda, \gamma) = 0$ for all $t$ and $j$ as given above, we do not get any explicit formulas. Instead we obtain for every $t \in [T-1]$,  $n_t$ equations in $\gamma_{t}(j)$, $j \in \mathcal{I}_t$. They can be written on vector form as in \cref{eqn:algo_updategamma1}.

We now turn to showing the first part of the assertion; as always, we assume that the given marginals $\{ \mu_t \}_{t \in \mathcal{T}}$ are in convex order. We proceed in a similar manner as in the proof of \cite[Theorem 1.1]{Beiglbock-HL-Penkner2013}. Let $\mathbf{Q} \in \mathbb{R}_+^{n_0 \times \dots \times n_T}$ be a tensor satisfying the marginal constraint (\ref{eqn:tensors_mot_margconstr}) and note that
    \begin{align*}
        \gamma_{t-1}^{\top} \big( P_{t-1,t}(\mathbf{Q}) \odot \Delta_t \big)1_{n_t} = 0, \quad \gamma_{t-1} \in \mathbb{R}^{n_{t-1}}, \quad t \in [T] \backslash 0
    \end{align*}
    holds if and only if $\mathbf{Q}$ also satisfies the martingale constraint (\ref{eqn:tensors_mot_mtgconstr}). This implies that if $\mathbf{Q}$ is not feasible to problem (\ref{eqn:tensors_mot}), there exists $t \in [T] \backslash 0$ and $j \in \mathcal{I}_{t-1}$ such that $\sum_{k \in \mathcal{I}_t}(P_{t-1,t}(\mathbf{Q}) \odot \Delta_t )(j,k) \neq 0$ and hence $-\gamma_{t-1}^{\top} ( P_{t-1,t}(\mathbf{Q}) \odot \Delta_t )1_{n_t}$ can be made arbitrarily large by scaling the vector $\gamma_{t-1}$. Therefore, the value of the problem 
    \begin{align} \label{eqn:tensors_mot_dualitypf1}
        \underset{\substack{ \mathbf{Q} \in \mathbb{R}_+^{n_0 \times \dots \times n_T} : \\ P_t^{\mathcal{S}}(\mathbf{Q}) = m_t, \; t \in \mathcal{T}}}{\inf \;} \underset{\gamma }{\sup \;} \langle \mathbf{C}, \mathbf{Q} \rangle + \varepsilon D(\mathbf{Q}) - \sum_{t \in [T] \backslash 0} \gamma_{t-1}^{\top} \big( P_{t-1, t}(\mathbf{Q}) \odot \Delta_t \big) 1_{n_t}
    \end{align}
     equals the value of problem (\ref{eqn:tensors_mot}). Moreover, it satisfies the assumptions of the minimax theorem \cite[Theorem 45.8]{strasser1985}; indeed, the subset of non-negative tensors in $\mathbb{R}^{n_0 \times \dots \times n_T}$ satisfying the marginal constraint (\ref{eqn:tensors_mot_margconstr}) is compact and convex and the objective is continuous and convex in $\mathbf{Q}$ and linear in $\gamma$. Hence the infimum and supremum can be interchanged and problem (\ref{eqn:tensors_mot_dualitypf1}) thus equals 
    \begin{align} \label{eqn:tensors_mot_dualitypf2}
         \underset{\gamma }{\sup \;} 
         \underset{\substack{ \mathbf{Q} \in \mathbb{R}_+^{n_0 \times \dots \times n_T} : \\ P_t^{\mathcal{S}}(\mathbf{Q}) = m_t, \; t \in \mathcal{T}}}{\inf \;}
         \langle \mathbf{C}, \mathbf{Q} \rangle + \varepsilon D(\mathbf{Q}) - \sum_{t \in [T] \backslash 0} \gamma_{t-1}^{\top} \big( P_{t-1, t}(\mathbf{Q}) \odot \Delta_t \big) 1_{n_t}.
    \end{align}
    Now consider the inner minimisation problem in (\ref{eqn:tensors_mot_dualitypf1}) for a fixed family $\gamma$ of real-valued vectors. It can be shown that Slater's condition holds for this problem --- the proof is analogous to the proof of \cref{lemma:algo_interiorpt} but invokes \cite[Proposition 3.1]{BeiglbockNutzTouzi2017} to show that there exists strictly positive matrices representing the bi-marginal (non-martingale) sub-transports between two adjacent marginals --- and hence strong Lagrangian duality holds \cite[pp. 226--227]{BoydVandenberghe2004}. The Lagrangian dual of the problem is
    $\max_{\lambda}\; \sum_{t\in \mathcal{T}} \lambda_t^{\top} m_t- \varepsilon \langle \mathbf{K}, \mathbf{U}^{\lambda} \odot \mathbf{G}^{\gamma} \rangle$
    --- the derivation is similar to the derivation of the Lagrangian dual of problem (\ref{eqn:tensors_mot}). It follows that problem (\ref{eqn:tensors_mot_dualitypf2}) equals problem (\ref{eqn:dualproblem}). Putting things together, we have thus shown that the value of problem (\ref{eqn:tensors_mot}) equals the value of problem (\ref{eqn:dualproblem}). Hence problem (\ref{eqn:dualproblem}) is a dual of problem (\ref{eqn:tensors_mot}) and strong duality between the two holds. 
\end{proof}

\begin{proof}[Proof of \cref{lemma:algo_filipisabell_applied}]
     Start by noting that $ \mathbf{U}^{\lambda}$ is of the form $\mathbf{U}^{\lambda}(i_0, \dots, i_T)
         = \prod_{t \in [T]} \bar u_t(i_t)$ for $(i_0, \dots, i_T) \in \mathcal{I}$
     since, when $i_t \in \mathcal{I}_t$ is given by the order (\ref{eqn:tensors_idxorder}), $\bar u_t(i_t) = u_t^{\lambda_t}(i_t^{\mathcal{S}})$ for $t \in \mathcal T$. Also note that $ \mathbf{K} \odot  \mathbf{G}^{\gamma}$ inherits the structure of $ \mathbf{K}$ and $ \mathbf{G}^{\gamma}$; indeed,
     $(  \mathbf{K} \odot  \mathbf{G}^{\gamma})(i_0, \dots, i_T)
        = \prod_{t \in [T] \backslash 0} (  K_t \odot G^{\gamma_{t-1}}_t ) (i_{t-1}, i_t)$ for $ (i_0, \dots, i_T) \in \mathcal{I}$.
    The requirements of \cref{prop:algo_filipisabell} are thus satisfied and it yields expressions for the projections. Specifically, taking  
    \begin{equation*}
        \begin{aligned}
            \hat{\psi}_t :=
            \begin{cases}
                1_{n_0}, & t = 0 \\
                ( K_t \odot G^{\gamma_{t-1}}_t )^{\top}
                \diag( \bar u_{t-1})
                \dots
                (  K_{1} \odot G^{\gamma_0}_{1} )^{\top}
                \bar u_{0}, & t \in [T] \backslash 0
            \end{cases}
        \end{aligned}
    \end{equation*}
    and 
    \begin{equation*}
        \begin{aligned}
            \psi_t :=
            \begin{cases}
                1_{n_T}, & t = T \\
                ( K_{t+1} \odot G^{\gamma_t}_{t+1} )
                \diag( \bar u_{t+1})
                \dots
                (  K_{T} \odot G^{\gamma_{T-1}}_{T} )
                \bar u_{T}, & t \in [T-1]
            \end{cases}
        \end{aligned}
    \end{equation*}
    we get  $P_t  (  \mathbf{K} \odot  \mathbf{U}^{\lambda} \odot  \mathbf{G}^{\gamma} ) = \hat{\psi}_t \odot \bar u_t \odot \psi_t$ for $t \in [T]$. It is easily verified that this choice of $\hat{\psi}$ and $\psi$ are given in equations (\ref{eqn:algo_formulas_f}) and (\ref{eqn:algo_formulas_b}), respectively, given the definition from  \cref{eqn:algo_formulas_lemma_pf-1}.

    Let $t_1, t_2 \in [T]$ such that $t_1 < t_2$. Applying the results from \cite[Lemma 2 and proof of Proposition 2]{ElvanderHaaslerJakobssonKarlsson2020} and making the same arguments as for the one-marginal projection gives that $ P_{t_1, t_2}( \mathbf{K} \odot  \mathbf{U}^{\lambda} \odot  \mathbf{G}^{\gamma}) = \diag( \hat{\psi}_{t_1} \odot  \bar{u}_{t_1} )  M_{t_1, t_2} \diag( \bar u_{t_2}   \odot \psi_{t_2} )$,
    where 
    \begin{equation*}
         M_{t_1, t_2}
         =
        (K_{t_1 +1} \odot G^{\gamma_{t_1}}_{t_1 + 1}) \diag(\bar u_{t_1+1}) 
         \dots 
         (K_{t_2-1} \odot G^{\gamma_{t_2 -2}}_{t_2-1}) \diag(\bar u_{t_2-1})
         (K_{t_2} \odot G^{\gamma_{t_2 -1}}_{t_2}),
    \end{equation*}
     from where the result follows.
\end{proof}

\end{document}